\pdfoutput=1
\documentclass[
  12pt,
  secnumarabic,
  notitlepage,
  tightenlines,
  nofootinbib,
  superscriptaddress,
  longbibliography,
  aps,
  pra
]{revtex4-2}

\usepackage{newpxtext,newpxmath}

\usepackage[utf8]{inputenc}
\usepackage{newunicodechar}
\newunicodechar{ρ}{\rho}
\newunicodechar{β}{\beta}
\newunicodechar{σ}{\sigma}
\newunicodechar{λ}{\lambda}
\newunicodechar{Λ}{\Lambda}
\newunicodechar{μ}{\mu}
\newunicodechar{ν}{\nu}
\newunicodechar{ψ}{\psi}
\newunicodechar{ϕ}{\varphi}
\newunicodechar{Φ}{\Phi}
\newunicodechar{φ}{\phi}
\newunicodechar{π}{\pi}
\newunicodechar{α}{\alpha}
\newunicodechar{ϵ}{\epsilon}
\newunicodechar{ε}{\varepsilon}
\newunicodechar{δ}{\delta}
\newunicodechar{ω}{\omega}
\newunicodechar{Δ}{\Delta}
\newunicodechar{Σ}{\Sigma}

\usepackage[thmtools-compat]{keytheorems}
\usepackage{amsthm}

\usepackage{amssymb}
\usepackage{amsmath}
\usepackage{bm,bbm}
\usepackage{braket}
\usepackage{dsfont}
\usepackage{mathdots}
\usepackage{mathtools}
\usepackage{enumerate}
\usepackage[shortlabels]{enumitem}
\usepackage{csquotes}
\usepackage{stmaryrd}
\usepackage{graphicx}
\usepackage{stackengine}
\usepackage{scalerel}
\usepackage{tensor}       
\usepackage[dvipsnames,svgnames]{xcolor}
\usepackage{xfrac}
\usepackage{centernot}
\usepackage{comment}
\usepackage{chngcntr}
\usepackage{booktabs}
\usepackage{tabularray}
\usepackage[normalem]{ulem}
\UseTblrLibrary{booktabs}

\usepackage[pdftex]{hyperref}
\hypersetup{
    bookmarksnumbered=true, 
    breaklinks=true,
    unicode=false, 
    pdfstartview={FitH}, 
    pdfnewwindow=true, 
    colorlinks=true, 
    allcolors=MidnightBlue!70!black!70!TealBlue
}
\usepackage{zref-clever}
\zcsetup{cap}

\newcommand{\widetitle}[1]{%
  \texorpdfstring{%
    \makebox[\linewidth][c]{\makebox[0pt][c]{#1}}%
  }{#1}%
}

\makeatletter
\def\thmhead@plain#1#2#3{%
  \thmname{#1}\thmnumber{\@ifnotempty{#1}{ }\@upn{#2}}%
  \thmnote{ {\the\thm@notefont#3}}}
\let\thmhead\thmhead@plain
\makeatother

\theoremstyle{plain}
\newtheorem{theorem}{Theorem}

\newtheorem*{theorem*}{Theorem}
\newtheorem*{thm*}{Theorem}

\newtheorem{lemma}{Lemma}
\newtheorem*{lemma*}{Lemma}

\newtheorem{proposition}{Proposition}

\newtheorem*{proposition*}{Proposition}
\newtheorem*{prop*}{Proposition}

\newtheorem{corollary}{Corollary}

\newtheorem*{corollary*}{Corollary}
\newtheorem*{cor*}{Corollary}

\newtheorem*{conjecture*}{Conjecture}
\newtheorem*{cj*}{Conjecture}

\theoremstyle{definition}
\newtheorem{definition}{Definition}

\newtheorem*{definition*}{Definition}
\newtheorem*{Def*}{Definition}

\newtheorem*{example*}{Example}
\newtheorem*{ex*}{Example}

\newtheorem*{question*}{Question}

\newtheorem*{problem*}{Problem}

\newtheorem*{axiom*}{Axiom}

\newtheorem*{fact*}{Fact}

\newtheorem*{claim*}{Claim}

\newtheorem*{assumption*}{Assumption}

\newtheorem*{observation*}{Observation}

\theoremstyle{remark}
\newtheorem{remark}{Remark}

\newtheorem*{remark*}{Remark}
\newtheorem*{rem*}{Remark}

\newtheorem*{note*}{Note}

\renewcommand{\thesection}{\arabic{section}}
\renewcommand{\thesubsection}{\thesection.\arabic{subsection}}

\makeatletter

\renewcommand{\section}{%
  \@startsection
    {section}{1}{\z@}%
    {2.5ex \@plus 1ex \@minus .2ex}%
    {1.2ex \@plus .2ex}%
    {\normalfont\large\bfseries\raggedright}%
}

\renewcommand{\subsection}{%
  \@startsection
    {subsection}{2}{\z@}%
    {2ex \@plus 1ex \@minus .2ex}%
    {.8ex \@plus .2ex}%
    {\normalfont\normalsize\bfseries\raggedright}%
}

\renewcommand{\subsubsection}{%
  \@startsection
    {subsubsection}{3}{\z@}%
    {1.5ex \@plus 1ex \@minus .2ex}%
    {.6ex \@plus .2ex}%
    {\normalfont\normalsize\itshape\raggedright}%
}

\def\@hangfrom@section#1#2#3{\@hangfrom{#1#2}#3}
\def\@hangfroms@section#1#2{#1#2}
\def\@hangfrom@subsection#1#2#3{\@hangfrom{#1#2}#3}
\def\@hangfroms@subsection#1#2{#1#2}
\def\@hangfrom@subsubsection#1#2#3{\@hangfrom{#1#2}#3}
\def\@hangfroms@subsubsection#1#2{#1#2}

\makeatother

\theoremstyle{plain}

\newcommand{\bb}{\begin{equation}\begin{aligned}\hspace{0pt}}
\newcommand{\bbb}{\begin{equation*}\begin{aligned}}
\newcommand{\ee}{\end{aligned}\end{equation}}
\newcommand{\eee}{\end{aligned}\end{equation*}}

\renewcommand{\epsilon}{\varepsilon}

\newcommand{\ve}{\varepsilon}

\DeclareMathOperator{\Tr}{Tr}

\DeclareMathAlphabet{\pazocal}{OMS}{zplm}{m}{n}

\DeclareMathOperator{\supp}{supp}

\newcommand{\lsmatrix}{\left(\begin{smallmatrix}}
\newcommand{\rsmatrix}{\end{smallmatrix}\right)}

\stackMath

\stackMath

\makeatletter
\newcommand*\rel@kern[1]{\kern#1\dimexpr\macc@kerna}
\newcommand*\widebar[1]{%
  \begingroup
  \def\mathaccent##1##2{%
    \rel@kern{0.8}%
    \overline{\rel@kern{-0.8}\macc@nucleus\rel@kern{0.2}}%
    \rel@kern{-0.2}%
  }%
  \macc@depth\@ne
  \let\math@bgroup\@empty \let\math@egroup\macc@set@skewchar
  \mathsurround\z@ \frozen@everymath{\mathgroup\macc@group\relax}%
  \macc@set@skewchar\relax
  \let\mathaccentV\macc@nested@a
  \macc@nested@a\relax111{#1}%
  \endgroup
}

\counterwithin*{equation}{part}
\counterwithin*{thm}{part}
\counterwithin*{figure}{part}

\definecolor{Blues5seq1}{RGB}{239,243,255}
\definecolor{Blues5seq2}{RGB}{189,215,231}
\definecolor{Blues5seq3}{RGB}{107,174,214}
\definecolor{Blues5seq4}{RGB}{49,130,189}
\definecolor{Blues5seq5}{RGB}{8,81,156}

\definecolor{Greens5seq1}{RGB}{237,248,233}
\definecolor{Greens5seq2}{RGB}{186,228,179}
\definecolor{Greens5seq3}{RGB}{116,196,118}
\definecolor{Greens5seq4}{RGB}{49,163,84}
\definecolor{Greens5seq5}{RGB}{0,109,44}

\definecolor{Reds5seq1}{RGB}{254,229,217}
\definecolor{Reds5seq2}{RGB}{252,174,145}
\definecolor{Reds5seq3}{RGB}{251,106,74}
\definecolor{Reds5seq4}{RGB}{222,45,38}
\definecolor{Reds5seq5}{RGB}{165,15,21}

\allowdisplaybreaks

\let\nc\newcommand
\let\oldbar\bar
\renewcommand{\bar}{\;\rule{0pt}{9.5pt}\right|\;}
\nc{\lset}{\left\{\left.}
\nc{\rset}{\right\}}
\nc{\lsetr}{\left\{\,}
\nc{\rsetr}{\right.\right\}}
\nc{\barr}{\;\rule{0pt}{9.5pt}\left|\;}

\usepackage[most,breakable]{tcolorbox}
  {\expandafter\ifstrequal\expandafter{#1}{filled}{\begin{tcolorbox}[colback=MidnightBlue!70!black!70!TealBlue!2!white,colframe=MidnightBlue!70!black!70!TealBlue!30!white,breakable,enhanced,left=5.75pt,right=5.75pt,grow sidewards by=10pt]}{\begin{tcolorbox}[colback=white,colframe=gray!15,breakable,enhanced,left=5.75pt,right=5.75pt,grow sidewards by=10pt]}}%
  {\end{tcolorbox}}

\nc{\R}{\mathcal{R}}
\nc{\W}{\mathcal{W}}
\nc{\T}{\mathcal{T}}
\nc{\B}{\mathcal{B}}
\nc{\C}{\mathcal{C}}
\nc{\U}{\mathcal{U}}
\nc{\E}{\mathcal{E}}
\nc{\EE}{\mathscr{E}}
\nc{\K}{\mathcal{K}}
\nc{\V}{\mathcal{V}}
\nc{\X}{\mathcal{X}}
\nc{\F}{\mathcal{F}}
\nc{\G}{\mathcal{G}}
\nc{\D}{\mathcal{D}}
\nc{\Y}{\mathcal{Y}}

\nc{\M}{\mathcal{M}}
\nc{\N}{\mathcal{N}}
\nc{\I}{\mathcal{I}}
\nc{\Q}{\mathbb{Q}}
\nc{\RR}{\mathbb{R}}
\nc{\CC}{\mathbb{C}}

\nc{\HH}{\mathbb{H}}
\nc{\MM}{\mathbb{M}}
\nc{\NN}{\mathbb{N}}
\nc{\DD}{\mathbb{D}}

\nc{\id}{\mathbbm{1}}
\nc{\idc}{\mathrm{id}}

\nc{\norm}[2]{\left\lVert#1\right\rVert_{\,#2}}
\nc{\proj}[1]{\ket{#1}\!\bra{#1}}
\nc{\lnorm}[2]{\left\lVert#1\right\rVert_{\ell_{#2}}}

\let\oldproofname\proofname
\renewcommand{\proofname}{\rm\bf{\oldproofname}}

\makeatletter
\renewenvironment{proof}[1][\proofname]{\par
\pushQED{\qed}%
\normalfont \topsep6\p@\@plus6\p@\relax
\trivlist
\item\relax
{\bfseries  
#1\@addpunct{.}}\hspace\labelsep\ignorespaces 
}{%
\popQED\endtrivlist\@endpefalse
}
\makeatother

\nc{\rhos}{\rho'}

\nc{\gm}{\mathbin\#}

\newcommand{\sPnorm}{P,=}
\newcommand{\sPsub}{P,\leq}
\newcommand{\sTnorm}{T,=}
\newcommand{\sTsub}{T,\leq}
\newcommand{\smoothing}[2]{#2,\,#1}
\NewDocumentCommand{\s}{m m O{\ve}}{%
 \lowercase{\def\sdist{#1}}
 \lowercase{\def\snorm{#2}}
 \IfEqCase{\sdist}{%
  {t}{%
   \IfEqCase{\snorm}{%
    {norm}{ \smoothing{\sTnorm}{#3} }%
    {n}{ \smoothing{\sTnorm}{#3} }%
    {sub}{ \smoothing{\sTsub}{#3}}%
    {s}{ \smoothing{\sTsub}{#3}}%
   }%
  }%
  {p}{%
   \IfEqCase{\snorm}{%
    {norm}{ \smoothing{\sPnorm}{#3} }%
    {n}{ \smoothing{\sPnorm}{#3} }%
    {sub}{ \smoothing{\sPsub}{#3}}%
    {s}{ \smoothing{\sPsub}{#3}}%
   }%
  }%
 }%
}%
\NewDocumentCommand{\Dmax}{m m O{\ve}}{%
 D_{\max}^{\s{#1}{#2}[#3]}
}%

\makeatletter

\def\@sect@ltx#1#2#3#4#5#6[#7]#8{%
    \@ifnum{#2>\c@secnumdepth}{%
        \def\H@svsec{\phantomsection}%
        \let\@svsec\@empty
    }{%
        \H@refstepcounter{#1}%
        \def\H@svsec{%
            \phantomsection
        }%
        \protected@edef\@svsec{{#1}}%
        \@ifundefined{@#1cntformat}{%
            \prepdef\@svsec\@seccntformat
        }{%
            \expandafter\prepdef
            \expandafter\@svsec
            \csname @#1cntformat\endcsname
        }%
    }%
    \@tempskipa #5\relax
    \@ifdim{\@tempskipa>\z@}{%
        \begingroup
        \interlinepenalty \@M
        #6{%
            \@ifundefined{@hangfrom@#1}{\@hang@from}{\csname @hangfrom@#1\endcsname}%
            {\hskip#3\relax\H@svsec}{\@svsec}{#8}%
        }%
        \@@par
        \endgroup
        \@ifundefined{#1mark}{\@gobble}{\csname #1mark\endcsname}{#7}%
        \addcontentsline{toc}{#1}{%
            \@ifnum{#2>\c@secnumdepth}{%
                \protect\numberline{}%
            }{%
                \protect\numberline{\csname the#1\endcsname}%
            }%
            #7}
    }{%
        \def\@svsechd{%
            #6{%
                \@ifundefined{@runin@to@#1}{\@runin@to}{\csname @runin@to@#1\endcsname}%
                {\hskip#3\relax\H@svsec}{\@svsec}{#8}%
            }%
            \@ifundefined{#1mark}{\@gobble}{\csname #1mark\endcsname}{#7}%
            \addcontentsline{toc}{#1}{%
                \@ifnum{#2>\c@secnumdepth}{%
                    \protect\numberline{}%
                }{%
                    \protect\numberline{\csname the#1\endcsname}%
                }%
                #8}%
        }%
    }%
    \@xsect{#5}}%

\makeatother

\newcommand{\suppresscontentsline}[3]{}

\providecommand{\Tr}{\Tr}
\providecommand{\supp}{\operatorname{supp}}

\providecommand{\id}{\mathrm{id}}

\newcommand{\vertiii}[1]{{\left\vert\kern-0.25ex\left\vert\kern-0.25ex\left\vert #1 
    \right\vert\kern-0.25ex\right\vert\kern-0.25ex\right\vert}}

\makeatletter
\newcommand{\hideappendixsubsections}{%
    \let\l@subsection\@gobbletwo
}
\makeatother

\begin{document}
\count\footins=1000

\title{\widetitle{Quantum de Finetti theorems for states and channels in any distance measure}}
\author{Liuhang Ye}
\email{ly400@cam.ac.uk}
\affiliation{Department of Applied Mathematics and Theoretical Physics, Centre for Mathematical Sciences, University of Cambridge, Cambridge CB3 0WA, United Kingdom}

\author{Bjarne Bergh}
\email{bb536@cam.ac.uk}
\affiliation{Department of Applied Mathematics and Theoretical Physics, Centre for Mathematical Sciences, University of Cambridge, Cambridge CB3 0WA, United Kingdom}

\author{Nilanjana Datta}
\email{n.datta@damtp.cam.ac.uk}
\affiliation{Department of Applied Mathematics and Theoretical Physics, Centre for Mathematical Sciences, University of Cambridge, Cambridge CB3 0WA, United Kingdom}
\begin{abstract}
\vspace{0.8em}
{\centering\bfseries Abstract\par}
\vspace{0.4em}
\noindent

Standard finite quantum de Finetti theorems approximate the $k$-system marginals of permutation-invariant states of $n$-systems by mixtures of independent and identically distributed (iid) states, usually in trace distance. We prove both standard and
Renner's exponential de Finetti theorems in the stronger form of operator inequalities, implying bounds in every Schatten norm and for every quantum R\'enyi divergence satisfying data processing. In the standard case, for fixed local dimension, our $k/n$ error bound in max-relative entropy improves on the previously best known $k^2/n$ scaling, even in the classical setting. The operator-inequality approach is particularly suited to study channel de Finetti representations because operator order between Choi states is equivalent to completely positive
(CP) order between the underlying channels. For permutation-covariant channels $\mathcal N^{(n)}:A^{\otimes n}\to B^{\otimes n}$, where
$d_A=\dim A$, we prove that the $k$-system reduced channel is CP-dominated by a mixture of tensor-power channels with error $O(k/\sqrt{n})$ and polynomial dependence on the local dimensions, addressing a question raised by Berta et al.~\cite{berta2022semidefinite}. Under the no-signalling condition, we also prove an exponential channel de Finetti theorem where the approximating mixture consists of Choi-almost-iid channels, whose normalized Choi states are almost-iid in the sense of Mazzola--Sutter--Renner. In the case of $r$ defects, the representation error is at most $\mathrm{poly}(n)\bigl(2d_A^4k^3/(nr^2)\bigr)^{(r+1)/2}$ and decays exponentially in $n$ for a suitable choice of parameters $r$ and $k$.

\end{abstract}
\makeatletter
\@booleanfalse\titlepage@sw
\makeatother
\maketitle

\makeatletter
\begingroup
    \setcounter{tocdepth}{2}
    \let\l@subsubsection\@gobbletwo
    \tableofcontents
\endgroup
\makeatother
\section{Introduction}
\label{sec:introduction}

De Finetti theorems are fundamental in probability~\cite{definetti1931funzione,definetti1937prevision, hewitt1955symmetric, diaconis1977finite} and quantum information theory~\cite{stormer1969symmetric,hudson1976locally,caves2002unknown,konig2005finetti,christandl2007one}: they connect permutation symmetry to mixtures of independent and identically distributed (iid) systems. A standard quantum de Finetti theorem\footnote{We use \emph{standard} to distinguish it from the \emph{exponential} de Finetti theorem discussed below.} relates the marginal of a permutation-invariant state $\rho^{(n)} \in \mathcal{D}(\mathcal{H}^{\otimes n})$ to a mixture of iid states. The most commonly used quantum de Finetti theorem is from \cite{christandl2007one} and states that for all $0\leq k \leq n$ and $\dim \mathcal{H}=d < \infty$, there exists a probability measure $\mu$ on density matrices $\mathcal{D}(\mathcal{H})$ such that:
\begin{equation}
    \norm{\Tr_{n - k}\rho^{(n)} - \int \sigma^{\otimes k} \,\mathrm d \mu(\sigma)}{1} \leq \frac{2 d^2 k}{n}\,.
\end{equation}

The representation error can be exponentially improved by relaxing the mixture of iid states to a mixture of almost-iid states, allowing a controlled number of defective positions that may occur in coherent superposition (this is Renner's exponential de Finetti theorem \cite{renner2005security, renner2007symmetry}): for all $0\leq r\leq k< n$, there exists a measure $\mu$ on \(\mathcal D(\mathcal H)\), and states \(\omega_{\sigma}^{(k,r)}\in\mathcal D(\mathcal H^{\otimes k})\) that are \(\binom{k}{r}\)-almost-iid along \(\sigma\in\mathcal D(\mathcal H)\) (see \zcref{def:almost-iid-state}), such that:
\begin{equation}
\label{eq:Renner-exp}
    \norm{\Tr_{n - k}\rho^{(n)} - \int \omega_{\sigma}^{(k,r)}\,\mathrm d \mu(\sigma)}{1} \leq 3 (n-k)^{d^2} \exp\left(-{(n-k) (r + 1) \over n }\right).
\end{equation}

With a suitable trade-off between the number \(r\) of defects and the number \(n-k\) of discarded systems, one can keep nearly all systems \(k\sim n-o(n)\), while achieving an exponentially decaying error. Such a representation of reduced states on nearly the whole system has proved particularly useful in finite-key quantum cryptography~\cite{renner2005security}.

{Classical finite de Finetti theorems have been studied in distance measures beyond total variation distance~\cite{johnson2025relative,gavalakis2024finite}. Hence, it is natural to ask whether quantum de Finetti representations can also be formulated in distance measures other than the trace distance, including the max-relative entropy.} A more comprehensive overview of previous results on classical and quantum de Finetti theorems is provided in \zcref{sec:background}.

Another natural question is whether analogous de Finetti theorems hold for quantum channels. For channels, the relevant symmetry is permutation covariance, namely invariance under simultaneously permuting the corresponding input and output systems. Unlike for states, however, defining a reduced channel requires some care: tracing out a subset of output systems does not in general define a channel acting only on the retained inputs, since the remaining outputs may still depend on inputs at the discarded positions. One must therefore either specify those inputs or impose an appropriate no-signalling condition ensuring that the reduced output is independent of them (see \zcref{sec:channel-df-setup}). Moreover, a channel de Finetti representation does not follow simply from its state counterpart. Indeed, although one may apply a state de Finetti theorem to the Choi state of a permutation-covariant channel, the resulting states in the convex mixture need not individually satisfy the Choi marginal constraint and therefore need not correspond to quantum channels being both completely positive (CP) and trace preserving (TP).

The only known channel de Finetti representation for permutation-covariant and no-signalling channels on finitely many systems, due to Berta et al.\ \cite{berta2022semidefinite}, was obtained by adding linear Choi marginal constraints to state de Finetti representations and then converting trace distance to diamond-norm distance via Choi–Jamiołkowski isomorphism~\cite{choi1975completely, jamiolkowski1972linear}. They proved that the reduced channel is close to a convex mixture of tensor-power (iid) channels. However, the trace distance to diamond-norm distance conversion inevitably introduces a factor \(d^k\) exponential in the number \(k\) of retained systems. Consequently, their representation does not yield a vanishing error when \(k\) grows polynomially with the total number \(n\) of systems. This is in sharp contrast with standard de Finetti theorems for states, which remain useful for all \(k\) sublinear in \(n\). More ambitiously, a channel de Finetti theorem with exponentially vanishing error that retains nearly all subsystems lies well beyond the existing literature.

\subsection{Main results}
Here we summarize the main results of our paper. The notations and definitions used are given in~\zcref{sec:math-prelim}. For detailed formulations and proofs of these results, see~\zcref{sec:exp-df-states} and~\zcref{sec:channel-df}.
\smallskip

\noindent
\paragraph*{\normalfont\bfseries
1.\ De Finetti theorems as operator inequalities
\normalfont\mdseries
(see \zcref{sec:std-exp-state-df})}

Our first main result strengthens state de Finetti representations to the level of operator order. Standard and exponential de Finetti theorems are conventionally formulated as statements in trace distance. We instead establish de Finetti representations as operator inequalities.

\begin{theorem}[(Quantum de Finetti representation for states)]
\label{thm:de Finetti mixed states}
Fix \(0\leq r\leq k\leq n\), and let \(\rho^{(n)}\in\mathcal D(\mathcal H^{\otimes n})\) be permutation invariant with \(\dim\mathcal H=d\). Then there exist a probability measure \(\mu_{nkr}\) on \(\mathcal D(\mathcal H)\), and states \(\omega_{\sigma}^{(k,r)}\in\mathcal D(\mathcal H^{\otimes k})\) that are \(\binom{k}{r}\)-almost-iid along \(\sigma\in\mathcal D(\mathcal H)\) (see \zcref{def:almost-iid-state}), such that
\begin{equation}
    \bigl(1-\varepsilon_{nkr}\bigr)
    \Tr_{n-k}\rho^{(n)}
    \leq
    \int
        \omega_{\sigma}^{(k,r)}
        \,\mathrm d\mu_{nkr}(\sigma).
    \label{eq:schematic-state-df}
\end{equation}
Moreover, \(\varepsilon_{nkr}\) can be upper bounded by
\begin{equation}
    \varepsilon_{nkr}
    \leq
    (n-k+1)^{d^2-1}
    \exp\!\left(
        -\frac{(n-k)(r+1)}{n}
    \right).
\end{equation}
In the case of \(r=0\), 
\begin{equation}
    \bigl(1-\varepsilon_{nk0}\bigr)
    \Tr_{n-k}\rho^{(n)}
    \leq
    \int
        \sigma^{\otimes k}
        \,\mathrm d\mu_{nk0}(\sigma),\qquad
    \varepsilon_{nk0}
    \leq
    \frac{d^2k}{n}.
\end{equation}
In this case, if \(\rho^{(n)}\) is classical (i.e.\ diagonal in a product basis) or pure, then \(d^2\) can be replaced by \(d\), and the measure can be restricted to classical or pure states, respectively.
\end{theorem}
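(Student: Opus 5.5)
We will prove the statement by purification followed by a measure-and-prepare construction, keeping track of operator inequalities rather than trace distances.

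\textbf{Reduction to the symmetric subspace.} Purify $\rho^{(n)}$ to a state $\ket{\Psi}\in\mathrm{Sym}^n(\mathcal H\otimes\mathcal H')$ with $\mathcal H'\cong\mathcal H$. This is the standard step from~\cite{christandl2007one,renner2007symmetry}. Partial trace is positive, so an operator inequality for the $k$-marginal of $\proj{\Psi}$ on $(\mathcal H\otimes\mathcal H')^{\otimes k}$ descends to $\Tr_{n-k}\rho^{(n)}$ after tracing out $\mathcal H'^{\otimes k}$. The approximating family also survives this step: $(\phi^{\otimes k})$ is sent to $\sigma^{\otimes k}$ with $\sigma=\Tr_{\mathcal H'}\phi$, and almost-iid states along $\ket{\phi}$ are sent to almost-iid states along $\sigma$. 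This is where the local dimension $d^2$ enters. For classical or pure $\rho^{(n)}$ we skip the purification: we work directly in $\mathrm{Sym}^n(\mathcal H)$ for pure states, or with multinomial types for classical states. This gives $d$ in place of $d^2$ and a measure supported on pure or classical states.

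\textbf{Measure-and-prepare on $n-k$ systems.} Measure the discarded $n-k$ systems with the covariant POVM $\{\dim\mathrm{Sym}^{n-k}(D)\,\phi^{\otimes(n-k)}\,d\phi\}$, where $D$ is the relevant dimension. Conditioned on outcome $\phi$, the post-measurement state on the retained $k$ systems is $\omega_\phi\propto\Tr_{n-k}[(\id\otimes\phi^{\otimes(n-k)})\proj{\Psi}]$. Since $\ket\Psi$ is symmetric, we can write
\begin{equation}
\int \omega_\phi\,d\mu(\phi)=\frac{\dim\mathrm{Sym}^{n-k}}{\dim\mathrm{Sym}^n}\,\Tr_{n-k}\bigl[P_{\mathrm{Sym}^n}(\id\otimes\cdot)\bigr],
\end{equation}
which expresses this mixture through an explicit positive map applied to the marginal. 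To obtain an \emph{operator} inequality rather than a trace bound, we will show $\int\omega_\phi\,d\mu(\phi)\ge c\,\Tr_{n-k}\proj\Psi$ directly.

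For $r=0$ we plan to replace the post-measurement states by $\phi^{\otimes k}$ and use a Chiribella-type identity. Expand $\Tr_{n-k}\proj\Psi$ in the operator basis $\{\phi^{\otimes k}\}$ via the symmetric-subspace formula $\int\phi^{\otimes n}d\phi\propto P_{\mathrm{Sym}^n}$. Comparing $\int \phi^{\otimes k}\langle\phi^{\otimes n}|\Psi\rangle\langle\Psi|\phi^{\otimes n}\rangle d\phi$ with $\Tr_{n-k}\proj\Psi$, we obtain a linear combination of partial-trace terms $\Tr_{n-j}$ with $j\le k$, each of which is positive. Dropping the positive excess terms gives
\begin{equation}
\frac{\dim\mathrm{Sym}^n}{\dim\mathrm{Sym}^{n+k}}\cdots\ \ge\ \prod_{i}\frac{n-i+1}{n+D-i}\,\Tr_{n-k}\proj{\Psi}.
\end{equation}
The leading coefficient behaves like $\binom{n}{k}/\binom{n+D-1}{k}\ge 1-Dk/n$. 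With $D=d^2$ this gives $\varepsilon\le d^2k/n$. We then normalise by the trace to make $\mu$ a probability measure.

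\textbf{Exponential case.} For $r\ge1$ we follow Renner's proof, but replace the trace-norm step by an operator inequality. Decompose $\ket\Psi$ using the projectors onto vectors that are $r$-almost-iid along $\ket\phi$ on the retained block. The key estimate, from~\cite{renner2007symmetry}, is that the component of $\Tr_{n-k}$ of the post-measurement state outside the almost-iid subspace $\mathcal V_\phi^{k,r}$ has weight at most $\binom{n}{r+1}$-type factors. These are bounded by $(n-k+1)^{D-1}e^{-(n-k)(r+1)/n}$. In operator form, the plan is:
\begin{equation}
(1-\varepsilon)\,\Tr_{n-k}\proj\Psi\le\int\Pi_\phi\,\omega_\phi\,\Pi_\phi\,d\mu(\phi),
\end{equation}
with $\Pi_\phi$ the projector onto $\mathcal V^{k,r}_\phi$.

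\textbf{Main obstacle.} The hardest step is exactly this operator inequality. Pinching by $\Pi_\phi$ is not monotone, so the trace-norm gentle-measurement argument cannot be used directly. We will first try to avoid cross terms altogether. The idea is to define $\omega_\phi$ as the normalised post-measurement state of the full almost-iid decomposition, so that its integral reproduces $\Tr_{n-k}\proj\Psi$ exactly by the symmetric-subspace resolution of identity. The only loss then comes from the POVM normalisation, via the defect-counting estimate of~\cite{renner2007symmetry} applied to the complementary weight. This also requires arguing that this complementary weight enters only as a scalar deficit, not as an indefinite correction.
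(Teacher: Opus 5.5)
Your purification step and your objects coincide with the paper's: the measure-and-prepare mixture $d_n\int\phi^{\otimes k}\langle\phi|^{\otimes n}\Psi^{(n)}|\phi\rangle^{\otimes n}\,\mathrm d\phi$, and the projected post-measurement states. The step that turns these into an operator inequality is wrong for $r=0$ and missing for $r\ge1$. Write $\Psi^{(k)}=\Tr_{n-k}\Psi^{(n)}$.

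\textbf{The case $r=0$.} You expand $P_{\mathrm{sym}}^{(n+k)}$ and drop the positive cloned-marginal terms; this is Chiribella's decomposition. It leaves weight $\binom{n}{k}/\binom{n+D+k-1}{k}=\prod_{i=0}^{k-1}\frac{n-i}{n+D+i}$ on $\Psi^{(k)}$, not $\binom{n}{k}/\binom{n+D-1}{k}=d_{n-k}/d_n$. Already for $k=1$ the permutation expansion gives exactly
\[
d_n\int\phi\,\langle\phi|^{\otimes n}\Psi^{(n)}|\phi\rangle^{\otimes n}\,\mathrm d\phi=\frac{1}{n+D}\bigl(\id+n\,\Psi^{(1)}\bigr),
\]
so dropping $\id$ leaves $\frac{n}{n+D}<\frac{n}{n+D-1}$. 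In general the deficit of this weight is $\approx k(k+D-1)/n$. That is exactly the $k^2/n$ scaling the theorem is meant to beat, so $\varepsilon_{nk0}\le Dk/n$ does not follow. Reaching $d_{n-k}/d_n$ requires trading part of the cloning terms against the marginal (for $k=1$, using $\Psi^{(1)}\le\id$), and dropping terms cannot do that.

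\textbf{The case $r\ge1$.} You name the obstacle but do not resolve it, and your proposed fix is inconsistent. The unprojected operators $\Psi_\phi^{(k)}=(\id\otimes\langle\phi|^{\otimes(n-k)})\Psi^{(n)}(\id\otimes|\phi\rangle^{\otimes(n-k)})$ do average exactly to the marginal, $d_{n-k}\int\Psi_\phi^{(k)}\,\mathrm d\phi=\Psi^{(k)}$, but they are not almost-iid. After projection, the discrepancy $\Psi^{(k)}-d_{n-k}\int P_\phi^{(k,r)}\Psi_\phi^{(k)}P_\phi^{(k,r)}\,\mathrm d\phi$ is controlled only in trace by Renner's defect counting. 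A trace bound says nothing about operator order relative to $\Psi^{(k)}$.

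\textbf{The missing idea} is an exact operator identity for the one-sided terms. Combine inclusion--exclusion for $\Pi_{\phi,j}^{(k)}$, the moment formula $\int\phi^{\otimes m}\,\mathrm d\phi=P_{\mathrm{sym}}^{(m)}/d_m$, the absorption $P_{\mathrm{sym}}^{(n)}(P_{\mathrm{sym}}^{(m)}\otimes\id)=P_{\mathrm{sym}}^{(n)}$, and a beta integral. Together these show that $P_{\mathrm{sym}}^{(n)}\int\Pi_{\phi,j}^{(k)}\otimes\phi^{\otimes(n-k)}\,\mathrm d\phi$ is a scalar multiple of $P_{\mathrm{sym}}^{(n)}$. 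Hence
\[
d_{n-k}\int P_\phi^{(k,r)}\Psi_\phi^{(k)}\,\mathrm d\phi=d_{n-k}\int\Psi_\phi^{(k)}P_\phi^{(k,r)}\,\mathrm d\phi=(1-\varepsilon)\Psi^{(k)},\qquad \varepsilon=\sum_{j=r+1}^{k}\frac{d_{n-k}}{d_n}\frac{\binom{k}{j}}{\binom{n}{j}}\binom{D+j-2}{j}.
\]
So the deficit is indeed a scalar, as you hoped, but only at the level of these cross terms.

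\textbf{Finishing the argument.} Integrate the positive operator $(P_\phi^{(k,r)}-\lambda\id)\Psi_\phi^{(k)}(P_\phi^{(k,r)}-\lambda\id)$ with $\lambda=1-\varepsilon$. This gives $(1-\varepsilon)^2\Psi^{(k)}\le d_{n-k}\int P_\phi^{(k,r)}\Psi_\phi^{(k)}P_\phi^{(k,r)}\,\mathrm d\phi$. The total mass of this mixture is $1-\varepsilon$, and dividing by it to normalize the measure gives the linear factor $1-\varepsilon$.

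At $r=0$ the same argument yields $1-\varepsilon=d_{n-k}/d_n\ge1-Dk/n$, which repairs your first step. The exponential bound follows from $\varepsilon\le\frac{d_{n-k}d_k}{d_n}(k/n)^{r+1}$ via the hockey-stick identity. The classical case needs no type counting: apply the pure-state result to $\sum_{\boldsymbol x}\sqrt{p_{\boldsymbol x}}|\boldsymbol x\rangle$ and dephase.
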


\zcref{thm:de Finetti mixed states} first of all gives an operator inequality analogue of Renner's exponential de Finetti theorem. Setting \(r=0\) reduces the representation to a mixture of iid states and recovers a standard de Finetti representation, again as an operator inequality. Such operator inequalities imply the closeness of the marginal to the mixture of (almost-)iid states in practically any distance measure. More precisely, for any distance measure $d(\rho, \sigma)$ for which a function $f(\lambda)$ with $f(\lambda) \to 0$ as $\lambda \to 1$ exists, such that $d(\rho, \sigma) \leq f(λ)$ whenever $\sigma \geq λ ρ$, the above operator inequality directly yields an associated de Finetti theorem. These conditions hold in particular for the following choices of $d(\rho, \sigma)$: (i) for any norm $\vertiii{\rho-\sigma}$ and (ii) any of the typically used generalized divergences $\mathbb{D}(ρ\|σ)$. In particular the conditions hold for any quantum R\'enyi divergence as axiomatically characterized in \cite{tomamichel_quantum_2016} and any smoothed version thereof (see \zcref{cor:trace-distance-df-thm}).

In addition to providing a unified proof of the standard and exponential state de Finetti theorems, our result for \(r=0\) significantly improves upon previous classical and quantum de Finetti bounds in max-divergence~\cite{chiribella2011quantum,song2025coded,gavalakis2024finite}: our error scales as \(k/n\). While a \(k/n\) dependence was previously known for trace distance~\cite{christandl2007one} in the quantum case, and for total variation (TV) distance~\cite{diaconis1980finite} and relative entropy~\cite{johnson2025relative,gavalakis2024finite} in the classical case, the best previously known operator inequality (or equivalently max-divergence) bounds scaled as \(k^2/n\), even in the classical setting~\cite{song2025coded,gavalakis2024finite}; see \zcref{sec:background} for details. Since the \(k/n\) dependence is tight classically for TV distance \cite{diaconis1980finite} so is our theorem, and it allows to obtain tight classical and quantum de Finetti theorems in essentially any distance measure.

\begin{remark}
The operator inequality we show for the the standard ($r=0$) de Finetti theorem is essentially already hidden in the proof of \cite[Theorem II.2]{christandl2007one} where it can be obtained from the last unnumbered equation on page 4 after some further optimizations, although this is not mentioned in any way. The proof we give works simultaneously for the standard ($r=0$) and the exponential ($r>0$) case.
\end{remark}

\begin{remark}
Note that our operator-inequality de Finetti representation is different from what are commonly known as \emph{de Finetti reductions} in the literature, although the latter also take the form of operator inequalities. In its standard form~\cite{christandl2009postselection,fawzi2015quantum,nahar2024postselection}, de Finetti reduction states that every permutation-invariant state \(\rho^{(n)}\in\mathcal D(\mathcal H^{\otimes n})\), with \(\dim\mathcal H=d\), satisfies
\begin{equation}
    \rho^{(n)}
    \leq
    \binom{n+d^2-1}{n}
    \int \sigma^{\otimes n}\,\mathrm d\sigma,
\end{equation}
where \(\mathrm d\sigma\) is the induced by Haar measure on pure states in \(\mathcal H\otimes\mathcal H\) after tracing out one copy. It is obtained by bounding the symmetric purification of \(\rho^{(n)}\) by the projector onto \(\operatorname{Sym}^n(\mathcal H\otimes\mathcal H)\) and tracing out the purifying system. This bounds the full \(n\)-copy state with a prefactor growing as \((n+1)^{d^2-1}\), and the bound is universal: it does not depend on the particular permutation-invariant state on the left-hand side. In contrast, our theorem bounds a \(k\)-copy marginal of \(\rho^{(n)}\) with a prefactor tending to one asymptotically, and the de Finetti state on the right-hand side depends on the left-hand side through the probability measure.
\end{remark} 
\smallskip

\noindent
\paragraph*{\normalfont\bfseries
2.\ Improved de Finetti theorem for channels
\normalfont\mdseries
(see \zcref{sec:std-channel-df})}

Besides being of independent interest and potentially useful in other quantum-information-theoretic settings, the operator inequality formulation is also particularly well-suited to studying channel de Finetti theorems. Under the Choi–Jamiołkowski isomorphism \cite{choi1975completely,jamiolkowski1972linear}, operator order between Choi states is equivalent to completely-positive (CP) order between the corresponding maps. This allows us to formulate and prove channel de Finetti representations directly in CP order, rather than first obtaining trace-distance bounds for Choi states and subsequently converting them into diamond-norm bounds for channels, hence avoiding the undesirable \(d^k\) factor in the channel de Finetti theorem of \cite{berta2022semidefinite} (see also the discussion below \zcref{eq:BBFS}).

\begin{theorem}[(Standard de Finetti representation for channels)]
\label{thm:standard-channel-definetti}
Fix \(0\leq k\leq n\), and let \(\mathcal N^{(n)}\in\mathrm{CPTP}(A^n,B^n)\) be permutation-covariant (see \zcref{eq:permutation-covariance}). Let \(\Tr_{n-k}\mathcal N^{(n)}=\mathcal N^{(k)}\in\mathrm{CPTP}(A^k,B^k)\) be the reduced channel defined by \zcref{eq:reduced-channel}. Then there exists a probability measure \(\nu_{nk}\) on \(\mathrm{CPTP}(A,B)\) such that
\begin{equation}
    \bigl(1-\epsilon_{nk}\bigr)
    \Tr_{n-k}\mathcal N^{(n)}
    \leq_{\mathrm{CP}}
    \int
        \Phi^{\otimes k}
        \,\mathrm d\nu_{nk}(\Phi).
    \label{eq:standard-channel-definetti}
\end{equation}
Moreover, \(\epsilon_{nk}\) can be upper bounded by
\begin{equation}
    \epsilon_{nk}
    \leq
    \frac{d_A^2d_B^2k}{n}
    +2d_Ak\sqrt{\frac{d_A^2d_B^2-1}{n+d_A^2d_B^2-1}}
    +\frac{d_A^2k^2}{n+d_A^2d_B^2-1}.
    \label{eq:standard-channel-definetti-error}
\end{equation}
\end{theorem}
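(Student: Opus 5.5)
The plan is to pass to Choi states, apply the $r=0$ case of \zcref{thm:de Finetti mixed states}, and then repair each iid component so that it becomes the Choi state of a channel, paying only a small multiplicative loss.

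\textbf{Reduction to states.} Let $J^{(n)}=(\mathrm{id}\otimes\mathcal N^{(n)})(\Phi^+_{A'A})^{\otimes n}$ be the normalized Choi state of $\mathcal N^{(n)}$, viewed as a state on $(A'B)^{\otimes n}$ with local dimension $D:=d_A^2d_B^2$. Permutation covariance of $\mathcal N^{(n)}$ makes $J^{(n)}$ permutation invariant. I will check from \zcref{eq:reduced-channel} that $\Tr_{n-k}J^{(n)}$ is the normalized Choi state of $\Tr_{n-k}\mathcal N^{(n)}$. Applying the $r=0$ case of \zcref{thm:de Finetti mixed states} gives
\[
(1-\tfrac{Dk}{n})\Tr_{n-k}J^{(n)}\le\int\sigma^{\otimes k}\,\mathrm d\mu(\sigma),
\]
which produces the first error term. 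The components $\sigma\in\mathcal D(A'B)$ need not satisfy $\sigma_{A'}=\id/d_A$, so they are not Choi states of channels. Since operator order of Choi states is CP order of the maps, it suffices to dominate $\sigma^{\otimes k}$ by a channel Choi state up to a factor.

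\textbf{Repairing each component.} For each $\sigma$, set $\eta_\sigma:=\min\{1,\,d_A\|\sigma_{A'}-\id/d_A\|_\infty\}$, so that $(1-\eta_\sigma)\sigma_{A'}\le\id/d_A$. Define
\[
\tilde\sigma:=(1-\eta_\sigma)\sigma+\bigl(\id/d_A-(1-\eta_\sigma)\sigma_{A'}\bigr)\otimes\tau_B
\]
for any fixed state $\tau_B$. This is positive, has $A'$-marginal exactly $\id/d_A$, and so is the normalized Choi state of some $\Phi_\sigma\in\mathrm{CPTP}(A,B)$. It also satisfies $\tilde\sigma\ge(1-\eta_\sigma)\sigma$. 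Tensoring gives $\tilde\sigma^{\otimes k}\ge(1-\eta_\sigma)^k\sigma^{\otimes k}\ge(1-k\eta_\sigma)\sigma^{\otimes k}$. Let $\nu$ be the pushforward of $\mu$ under $\sigma\mapsto\Phi_\sigma$. Then
\[
\int\Phi^{\otimes k}\,\mathrm d\nu\ \ge_{\mathrm{CP}}\ \int\sigma^{\otimes k}\,\mathrm d\mu-k\int\eta_\sigma\,\sigma^{\otimes k}\,\mathrm d\mu .
\]
It remains to show that the subtracted term is at most $\epsilon'\int\sigma^{\otimes k}\,\mathrm d\mu$ plus a term absorbable against $\Tr_{n-k}J^{(n)}$, with $\epsilon'$ matching the last two terms of \zcref{eq:standard-channel-definetti-error}.

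\textbf{Concentration, the main obstacle.} The crux is that $\mu$ concentrates on $\sigma$ whose $A'$-marginal is near $\id/d_A$. This is not automatic for an arbitrary de Finetti measure, so I would use the explicit measure built in the proof of \zcref{thm:de Finetti mixed states}. That measure comes from symmetric purification and the coherent-state POVM on $\mathrm{Sym}^n$, with density proportional to $\Tr[\sigma^{\otimes n}\cdots]$ against the Haar-induced measure. The constraint $J^{(n)}_{A'^n}=(\id/d_A)^{\otimes n}$ then lets me compute $\int\sigma_{A'}\,\mathrm d\mu$ and the second moment $\int\|\sigma_{A'}-\id/d_A\|_2^2\,\mathrm d\mu$ exactly via the $\mathrm{Sym}^{n+1}$ and $\mathrm{Sym}^{n+2}$ identities. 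I expect a variance of order $(D-1)/(n+D-1)$, plus a bias term of order $k/(n+D-1)$ coming from reweighting by $\sigma^{\otimes k}$.

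Handling that reweighting is the delicate point: one needs the expectation of $\eta_\sigma$ under the tilted measure $\propto\sigma^{\otimes k}\mathrm d\mu$, not under $\mu$ itself. I would first try to carry the factor $k\eta_\sigma$ inside the operator inequality before tracing out $n-k$ systems, treating it as an $(n+1)$-copy moment computation. Then Cauchy–Schwarz together with $\|\cdot\|_\infty\le\|\cdot\|_2$ gives $k\,\mathbb E[\eta_\sigma]\le d_Ak\bigl(2\sqrt{(D-1)/(n+D-1)}+d_Ak/(n+D-1)\bigr)$. Combining with the first step via $(1-a)(1-b)\ge1-a-b$ yields the stated $\epsilon_{nk}$.
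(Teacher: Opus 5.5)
\textbf{There is a genuine gap in how you use the completion.} Your setup matches the paper's: the Choi state, the explicit $r=0$ measure $\mathrm d\mu_n(\phi)=d_n\langle\phi|^{\otimes n}\Psi^{(n)}|\phi\rangle^{\otimes n}\,\mathrm d\phi$ from a symmetric purification, a single-copy Choi completion, and concentration of $\phi_A$ near $\tau_A$.

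The problem starts with the Bernoulli step $(1-\eta_\sigma)^k\geq 1-k\eta_\sigma$. It turns the loss into the operator-valued remainder $Y=k\int\eta_\sigma\,\sigma^{\otimes k}\,\mathrm d\mu$. To get a CP-order statement you must then prove $Y\leq\epsilon'\int\sigma^{\otimes k}\,\mathrm d\mu+\delta J^{(k)}$ \emph{in operator order}.

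Your moment computations only bound the scalar $k\int\eta_\sigma\,\mathrm d\mu=\Tr Y$. There is no scalar ``tilted measure'' here, because $\sigma^{\otimes k}$ is an operator. So you control $Y$ only in trace norm at the Choi level, and converting that to a channel bound costs a factor $d_A^k$.

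The operator inequality is genuinely stronger. For the identity channel, $J^{(k)}=\Omega^{\otimes k}$ has rank one, while $Y$ has support well outside $\operatorname{supp}J^{(k)}$. Your inequality would then force, for every $v\perp\operatorname{supp}J^{(k)}$,
\[
\int k\eta_\sigma\langle v|\sigma^{\otimes k}|v\rangle\,\mathrm d\mu\leq\epsilon'\int\langle v|\sigma^{\otimes k}|v\rangle\,\mathrm d\mu .
\]
These $v$-dependent weights vanish at the centre of $\mu$ and can weight atypical $\sigma$ (large $\eta_\sigma$) heavily. Nothing in your first- and second-moment calculation under $\mu$ controls them.

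\textbf{The missing idea is to use the completion multiplicatively and push the factor into the measure.} The paper takes $c_\phi=1/(d_A\|\phi_A\|_\infty)\geq 1/d_A$, so that $\phi_{AB}^{\otimes k}\leq c_\phi^{-k}\,\widehat\phi_{AB}^{\otimes k}$. Hence
\[
(1-Dk/n)J^{(k)}\leq Z_{nk}\int\widehat\phi_{AB}^{\otimes k}\,\mathrm d\nu_{nk},\qquad \mathrm d\nu_{nk}=c_\phi^{-k}\,\mathrm d\mu_n/Z_{nk},\quad Z_{nk}=\int c_\phi^{-k}\,\mathrm d\mu_n .
\]
The whole loss is now a scalar, and $\epsilon_{nk}\leq Dk/n+\ln Z_{nk}$. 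Your truncation $\eta_\sigma=\min\{1,\cdot\}$ does not work here, since $1-\eta_\sigma$ can vanish; $c_\sigma=1/(1+d_A\|\sigma_{A'}-\tau_{A'}\|_\infty)\geq 1/d_A$ would do.

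The price is that $Z_{nk}$ is essentially an exponential moment $\mathbb E\,e^{k\eta}$, so a variance computation does not suffice and you need the full tail. The paper gets it in three steps:
\begin{enumerate}
\item For any function $f(\phi_A)$, $\int f\,\mathrm d\mu_n$ depends on $\Psi^{(n)}$ only through $\Psi^{(n)}_{A^n}=\tau_A^{\otimes n}$. This follows from a twirl over $\mathsf U(BE)$ and the fixed-marginal Haar identity. So $\Psi^{(n)}$ may be replaced by $\psi^{\otimes n}$ with $\psi_A=\tau_A$.
\item One has $d_A\|\phi_A\|_\infty\leq 1+2d_A\sqrt{1-|\langle\psi|\phi\rangle|^2}$.
\item The Haar fidelity is $\mathrm{Beta}(1,D-1)$-distributed. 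This reduces $Z_{nk}$ to $\mathbb E\exp(2d_Ak\sqrt U)$ with $U\sim\mathrm{Beta}(D-1,n+1)$.
\end{enumerate}
The term $d_A^2k^2/(n+D-1)$ in the final bound comes from this exponential moment, not from a first-order reweighting bias.
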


\zcref{thm:standard-channel-definetti} is a channel de Finetti representation, with an error scaling as \(k/\sqrt{n}\). The CP-order bound directly implies a corresponding channel de Finetti representation in diamond-norm distance (see \zcref{cor:standard-channel-definetti-diamond}) or any other channel distance that satisfies the conditions $d(\N, \M) \leq f(λ)$ whenever $\M \geq λ\N$ and $f(λ) \to 0$ as $λ \to 1$. In particular this is the case when the distance arises from optimizing a state distance over input states, i.e.\ $d(\N, \M) = \sup_{\nu} d_s((\mathcal{R} \otimes \N)(ν), (\mathcal{R} \otimes \M)(ν))$, where $\mathcal{R}$ is some arbitrary fixed channel and the state distance $d_s$ satisfies the equivalent conditions from above. 

{We also point out that the proof of this result does not rely on the no-signalling condition in \zcref{Eq:no-signalling-last-position}, whereas the proof of the next (exponential) theorem does. Nevertheless, when this condition holds, the reduced channel defined by \zcref{eq:reduced-channel} has a more natural operational interpretation; see the discussions in \zcref{sec:channel-df-setup}.}
\smallskip

\noindent
\paragraph*{\normalfont\bfseries
3.\ Exponential de Finetti theorem for channels
\normalfont\mdseries
(see \zcref{sec:exp-channel-df})}

As in the state setting, one cannot in general obtain an exponentially accurate de Finetti representation by mixtures of tensor-power (iid) channels, which motivates an appropriate notion of almost-iid channels. We call a channel almost-iid along a reference channel if its Choi state is almost-iid along the Choi state of that reference channel; see \zcref{def:almost-iid-channel}. The structural and operational properties of almost-iid channels are developed further in our companion paper~\cite{companion-paper}. Under permutation-covariance and no-signalling conditions, we show that a reduced channel can be approximated by a mixture of almost-iid channels, with an error that decays exponentially under a suitable trade-off between the number of discarded systems and the number of allowed defects.

\begin{theorem}[(Exponential de Finetti representation for channels)]
\label{thm:exp-channel-df}
Fix \(1\leq r\leq k\leq n\), and let \(\mathcal N^{(n)}\in\mathrm{CPTP}(A^n,B^n)\) be both permutation-covariant (see \zcref{eq:permutation-covariance}) and no-signalling (see \zcref{Eq:no-signalling-last-position}). Let \(\Tr_{n-k}\mathcal N^{(n)}=\mathcal N^{(k)}\in\mathrm{CPTP}(A^k,B^k)\) be the reduced channel defined by \zcref{eq:reduced-channel}. Then there exist a probability measure
\(\nu_{nkr}\) on \(\mathrm{CPTP}(A,B)\), and channels \(\mathcal{E}_{\Phi}^{(k,r)}\in\mathrm{CPTP}(A^k,B^k)\) that are \(\binom{k}{r}\)-almost-iid  along \(\Phi\in\mathrm{CPTP}(A,B)\) (see \zcref{def:almost-iid-channel}), such that
\begin{equation}
    \bigl(1-\epsilon_{nkr}\bigr)\Tr_{n-k}\mathcal{N}^{(n)}
    \leq_{\mathrm{CP}}
    \int\mathcal{E}_{\Phi}^{(k,r)}\,\mathrm d\nu_{nkr}(\Phi).
    \label{eq:exp-channel-df}
\end{equation}
Moreover, \(\epsilon_{nkr}\) can be upper bounded by
\begin{equation}
\epsilon_{nkr}
    \leq
    4(n+1)^{3d_A^2d_B^2/2 - 1}
    \exp\!\left[
        -\frac{r+1}{2}
        \ln\!\left(
            \frac{nr^2}{2d_A^4k^3}
        \right)
    \right].
\label{eq:exp-channel-df-error}
\end{equation}
\end{theorem}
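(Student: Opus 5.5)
We will reduce the statement to the exponential state de Finetti theorem (\zcref{thm:de Finetti mixed states}) applied to the Choi state, and then repair the Choi marginal constraint. Let $J^{(n)}=(\mathrm{id}\otimes\mathcal N^{(n)})(\Omega_A^{\otimes n})$ be the normalized Choi state on $(A'B)^{\otimes n}$. Permutation covariance makes $J^{(n)}$ permutation invariant with local dimension $d=d_Ad_B$. No-signalling ensures that $\Tr_{n-k}J^{(n)}$ is exactly the normalized Choi state $J^{(k)}$ of the reduced channel $\mathcal N^{(k)}$. Applying \zcref{thm:de Finetti mixed states} with some $r'$ and $k'$, to be optimized later, gives
\[
(1-\varepsilon)J^{(k)}\le\int\omega_\sigma^{(k,r)}\,\mathrm d\mu(\sigma),
\]
with $\omega_\sigma$ almost-iid along $\sigma\in\mathcal D(A'B)$. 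Since Choi operator order is CP order, the task is to replace each pair $(\sigma,\omega_\sigma)$ by a pair $(J_\Phi,\text{Choi of an almost-iid channel})$ while losing little in the operator inequality.

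\textbf{Step 1: concentration of the measure near valid Choi states.} Take the partial trace over $B^k$ of the inequality. The left side gives $(1-\varepsilon)(\id/d_A)^{\otimes k}$. The right side is a mixture of almost-iid states along $(\Tr_B\sigma)$. Testing against a projector onto the typical subspace of $(\id/d_A)^{\otimes k}$, or using fidelity/a one-copy marginal test, we show that $\mu$ puts most of its weight on $\sigma$ with $\|\Tr_B\sigma-\id/d_A\|$ of order $\sqrt{k^{-1}}$, up to defect corrections. This should be done via the gentle behaviour of $k$-fold products: if $\Tr_B\sigma$ is far from maximally mixed, then $(\Tr_B\sigma)^{\otimes k}$ has small overlap with the maximally mixed state, with the $r$ defects affecting only polynomial factors. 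The resulting loss is roughly $\binom{k}{r}$-weighted and of the form $(k^3/(nr^2))^{(r+1)/2}$, which we expect to match the claimed rate after choosing $r'\approx r$ and $n-k'\approx n/2$.

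\textbf{Step 2: local correction.} For $\sigma$ with $\tau=\Tr_B\sigma$ close to $\id/d_A$, define
\[
J_\Phi=(\tau^{-1/2}/\sqrt{d_A}\otimes\id)\,\sigma\,(\tau^{-1/2}/\sqrt{d_A}\otimes\id),
\]
which is a valid normalized Choi state, so $\sigma\le\lambda_{\max}(d_A\tau)J_\Phi$ with $\lambda_{\max}(d_A\tau)\le 1+\delta$. Apply the same congruence $C^{\otimes k}$ to $\omega_\sigma$. The $k$-fold amplification costs $(1+\delta)^k$, which is why we need $\delta\lesssim 1/k$ and why the error carries the $k^3$ and $d_A^4$ dependence. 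The conjugated operator $C^{\otimes k}\omega_\sigma C^{\dagger\otimes k}$ is almost-iid along $J_\Phi$, since a local congruence preserves the structure of the almost-iid subspace. It is, however, not yet the Choi state of a TP map on the defect positions. We fix this by adding a positive correction supported on the defects, or by mixing in $(\id/d_A\otimes\cdot)$ terms, so that the $A$-marginal becomes exactly maximally mixed while remaining in the almost-iid class. The no-signalling marginal identity of $J^{(k)}$ is used here to control the size of this correction. Renormalizing and absorbing the discarded bad region of $\mu$ and the $(1+\delta)^k$ factor into $\epsilon_{nkr}$ gives \zcref{eq:exp-channel-df}.

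\textbf{Main obstacle.} We expect Step 2's marginal repair on the defect positions to be hardest. Almost-iid states are superpositions over defect locations, so a naive local fix breaks the coherent structure. We would first try to define the repaired channel by composing with a fixed normalization map $X\mapsto (d_A\Tr_B X)^{-1/2}X(d_A\Tr_B X)^{-1/2}$ restricted to the symmetric almost-iid subspace. We then verify operator domination via a pinching/polynomial factor, which is the likely source of the $(n+1)^{3d^2/2-1}$ prefactor.
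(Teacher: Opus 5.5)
There is a genuine gap in both steps, and the ideas that make the theorem work are missing.

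\textbf{Reference-state correction.} The domination claimed in Step 2 is false. Conjugating by the non-scalar operator $C=(d_A\sigma_A)^{-1/2}$, with $\sigma_A=\Tr_B\sigma$, does not give $\sigma\le\lambda_{\max}(d_A\sigma_A)\,J_\Phi$. For example, take the pure state $|\phi\rangle=\sqrt p\,|00\rangle+\sqrt{1-p}\,|11\rangle$ with $p\neq 1/2$. Then your $J_\Phi$ is the maximally entangled state, and no multiple of it dominates $|\phi\rangle\!\langle\phi|$.

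A correct completion must be additive, as in \zcref{lem:single-copy-choi-completion}. Its cost $(d_A\lVert\sigma_A\rVert_\infty)^k$ per atom is then unavoidable for iid atoms: taking the $A^k$-marginal of $\sigma^{\otimes k}\le c\,\widehat J$ with $\widehat J_{A^k}=\tau_A^{\otimes k}$ forces $c\ge(d_A\lVert\sigma_A\rVert_\infty)^k$.

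In an operator inequality this cost must be paid on \emph{all} of $\mu$. You cannot drop a ``bad region'' of the right-hand side and absorb its weight into $\epsilon_{nkr}$, because that destroys domination; the trick only works in trace distance. For the same reason, Step 1's typical-subspace test yields weights, not domination.

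The de Finetti measure also does not concentrate at scale $1/k$. Already for $\mathcal N^{(n)}=\Phi^{\otimes n}$ one has $\mathrm d\mu_n=d_n|\langle\psi|\phi\rangle|^{2n}\mathrm d\phi$, so $\lVert\phi_A-\tau_A\rVert$ fluctuates at scale $n^{-1/2}$. By \zcref{lem:fixed-marginal-measure-independence}, the law of $\phi_A$ under $\mu_n$ is the same for every channel. The averaged cost is therefore $1+\Omega(k/\sqrt n)$; it is exactly $Z_{nk}$ from the proof of \zcref{thm:standard-channel-definetti}. This route therefore cannot beat that theorem's $O(k/\sqrt n)$ error, let alone give exponential decay.

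The paper avoids non-Choi references altogether by integrating over $\mathcal P_{ABE}(\tau_A)$ with the measure induced by Haar on $\mathsf U(BE)$. This has three costs:
\begin{enumerate}
\item the Haar moments become operators $d_A^{-m}K_{A^m}$ (\zcref{lem:constrained-haar-identity});
\item no-signalling turns the scalar defect weights $p_j$ into operators $L_j$ on $A^k$ (\zcref{lem:channel-projection-identities});
\item the cross terms can no longer be handled by optimizing a scalar $\lambda$, and must instead be dominated by $2d_k\lVert L_{>r}\rVert_\infty\int\hat\phi_{AB}^{\otimes k}\,\mathrm d\hat\phi$.
\end{enumerate}
The polynomial prefactor comes from this $d_k$ and from $a_{nk}\ge d_{n-k}^{-1}$, not from pinching.

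\textbf{Defect repair.} Even with exactly Choi references, an almost-iid atom need not be Choi, and your normalization map does not fix this. $M=d_A^k\Tr_{B^k}X$ is a non-product operator on $A^k$, so $M^{-1/2}$ has the same non-domination problem, and generically maps $\operatorname{ran}P^{(k,r)}$ out of every almost-iid subspace. The marginal identity of $J^{(k)}$ cannot control the correction either, since it only constrains the $\mu$-average of the atoms. Also, $\Tr_{n-k}J^{(n)}=J^{(k)}$ is the definition \zcref{eq:reduced-choi-state}, not a consequence of no-signalling.

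The paper uses no-signalling \emph{per atom}. It twirls the post-selected operators over $V\in\mathsf U(E)$ and applies \zcref{lem:fixed-marginal-haar-moment} together with $\Tr_{B^k}J^{(n)}=\tau_A^{\otimes k}\otimes J^{(n-k)}$. This makes each pre-projection atom $\widetilde\omega^{(k)}_{\hat\phi}$ exactly Choi up to normalization. The marginal deviation after projection then equals $\Tr_{(BE)^k}\bigl(P^{(k,r)}_{\hat\phi}\omega^{(k)}_{\hat\phi}P^{(k,r)}_{\hat\phi}-\omega^{(k)}_{\hat\phi}\bigr)$. It is controlled by the discarded mass $\delta$, whose $\mu$-average is at most $d_{n-k}(k/n)^{r+1}$ (\zcref{prop:projected-channel-df}).

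The repair in \zcref{prop:choi-completion} is additive and stays inside the almost-iid subspace:
\begin{enumerate}
\item The deviation expands in Weyl strings of weight at most $2r$.
\item Each string is written as $F_{\boldsymbol\alpha}=X_{\boldsymbol\alpha}Y_{\boldsymbol\alpha}$ with factors of weight at most $r$.
\item This yields vectors $|z_{\boldsymbol\alpha}\rangle$, built from $X_{\boldsymbol\alpha}|\hat\phi\rangle^{\otimes k}$ and $Y_{\boldsymbol\alpha}^\dagger|\hat\phi\rangle^{\otimes k}$, that lie in $\operatorname{ran}P^{(k,r)}_{\hat\phi}$ and whose partial traces cancel the deviation exactly.
\end{enumerate}
Domination is then automatic. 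The cost $2\sqrt{\delta N_{kr}}+\delta N_{kr}$ follows from the orthonormality of $\{F_{\boldsymbol\alpha}\sqrt{\omega^{(k)}_{\hat\phi}}\}$, which again relies on the exact Choi marginal. The rate $(k^3/(nr^2))^{(r+1)/2}$ comes from $N_{kr}(k/n)^{r+1}$, not from a concentration estimate like your Step 1.
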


The CP-order bound again implies a corresponding channel de Finetti representation in diamond-norm distance (see \zcref{cor:exponential-channel-definetti-diamond}). The representation error decays exponentially with suitable choice of \(n,k,r\), providing a channel analogue of the exponential de Finetti representation for states. A limitation of the present bound is that exponential decay is guaranteed only in regimes where (still) a large fraction of the systems is discarded; see \zcref{rmk:decay-regime} for details. This restriction originates from the additional cost incurred when enforcing the Choi marginal constraint on the individual states in the convex mixture. Consequently, our result does not yet attain the characteristic regime of Renner's exponential de Finetti theorem for states, in which nearly all systems can be retained while the error remains exponentially small. Extending the channel representation to this regime remains an important direction for future work.

\section{Background and previous work}
\label{sec:background}
\subsection{De Finetti theorems for states}
\subsubsection{Classical de Finetti theorems.}
A sequence of random variables $X_1,\dots,X_n$ is \emph{exchangeable} if its joint distribution is invariant under every permutation of the indices. Exchangeability is a much weaker requirement than independence, yet de Finetti's representation theorem~\cite{definetti1931funzione,definetti1937prevision} shows that the two notions are intimately related: an \emph{infinite} exchangeable sequence of binary random variables is distributed as a mixture of independent and identically distributed (iid) sequences, i.e.\ there is a unique probability measure $\mu$ on $[0,1]$ such that $\Pr[X_1^k = x_1^k] = \int Q_p^{\otimes k}(x_1^k)\,d\mu(p)$ for every $k$. Hewitt and Savage~\cite{hewitt1955symmetric} extended this to exchangeable sequences with values in general (compact Hausdorff) spaces, identifying the iid measures as the extreme points of the convex set of exchangeable measures. The theorem is a cornerstone of Bayesian statistics, where it justifies the use of a prior over an unknown parameter as a consequence of exchange symmetry rather than as an additional assumption, and it plays an equally important structural role in probability and information theory: it explains why symmetric objects --- symmetrised codebooks, random permutation ensembles, sampling without replacement, and permutation-invariant test statistics --- behave, to leading order, as if they were iid, which is the regime in which the classical asymptotic tools (the law of large numbers, the asymptotic equipartition property, Sanov's theorem, and the method of types) apply.
 
For \emph{finite} exchangeable sequences the exact representation fails: Diaconis~\cite{diaconis1977finite} gave the simple example of a pair $(X_1,X_2)$ that is uniformly distributed on $\{(0,1),(1,0)\}$, which is exchangeable but cannot be written as a mixture of product distributions. What survives is an approximate statement, usually called a \emph{finite de Finetti theorem}: if $X_1^n$ is exchangeable and $k\ll n$, then the law $P_k$ of the first $k$ variables is close to a mixture $M_{k,\mu_n}=\int Q^{\otimes k}\,d\mu_n(Q)$ of iid laws. Diaconis and Freedman~\cite{diaconis1980finite} obtained the sharp rates in total variation (TV) distance,
\begin{equation}
\label{eq:DF}
\|P_k - M_{k,\mu_n}\|_{\mathrm{TV}} \le \frac{k(k-1)}{2n}
\qquad\text{and}\qquad
\|P_k - M_{k,\mu_n}\|_{\mathrm{TV}} \le \frac{2ck}{n},
\end{equation}
the first holding for exchangeable vectors with values in an \emph{arbitrary} measurable space and the second for a finite alphabet of size $c$; both rates are tight. The natural mixing measure $\mu_n$ in these results is the law of the empirical measure of $X_1^n$, and the proof reduces the problem to the classical comparison between sampling $k$ balls from an urn with and without replacement, i.e.\ between multinomial and hypergeometric laws~\cite{diaconis1980finite,freedman1977remark,diaconis1987dozen}.
 
\subsubsection{Classical de Finetti theorems beyond total variation.}
Total variation is, however, not the only---and often not the most useful---way to measure closeness. In information theory one is typically interested in relative entropy (Kullback--Leibler divergence), in R\'enyi divergences, or in the max-divergence $D_{\max}$, which control TV distance through Pinsker's inequality but not conversely. Finite de Finetti theorems in these stronger measures are more recent. Stam~\cite{stam1978distance} bounded the relative entropy between the hypergeometric and multinomial laws by $(c-1)k(k-1)/\big(2(n-1)(n-k+1)\big)$, a bound later refined by Harremo\"es and Mat\'u\v{s}~\cite{harremoes2020bounds}; combined with the sampling representation of $P_k$ and $M_{k,\mu_n}$ and the joint convexity of relative entropy, this yields a finite de Finetti theorem in relative entropy of optimal order $O(k^2/n^2)$ on finite alphabets~\cite{johnson2025relative,gavalakis2024finite}. Independently, Gavalakis and Kontoyiannis~\cite{gavalakis2021information,gavalakis2023information} gave purely information-theoretic proofs of finite de Finetti theorems in relative entropy (first for binary, then for finite alphabets), and Berta, Gavalakis and Kontoyiannis~\cite{berta2024third} obtained a bound of the form $\frac{k(k-1)}{2(n-k-1)}\,H(X_1)$ valid on arbitrary discrete alphabets, using an argument imported from the quantum information literature. Most relevant for the present work is a family of results that are stronger still: for any exchangeable $X_1^n$ one has the pointwise \emph{domination} inequality
\begin{equation}
\label{eq:classical-domination}
P_k \;\le\; \frac{n^k\,(n-k)!}{n!}\; M_{k,\mu_n},
\end{equation}
which goes back to an observation of Freedman~\cite{freedman1977remark} and was recently turned into a finite de Finetti theorem by Song, Attiah and Yu~\cite{song2025coded} and by Gavalakis, Johnson and Kontoyiannis~\cite{gavalakis2024finite} (who extended it to arbitrary measurable spaces). Inequality~\zcref{eq:classical-domination} is a bound on the max-divergence, $D_\infty(P_k\|M_{k,\mu_n})\le \log\frac{n^k(n-k)!}{n!}\le -\log\big(1-\tfrac{k(k-1)}{2n}\big)$; since $D_{\max}$ dominates every R\'enyi divergence and every $f$-divergence, a single inequality of this type implies a finite de Finetti theorem in \emph{all} of these measures simultaneously, with a bound of order $k^2/n$ that is independent of the alphabet size. In this sense, the domination inequality is the ``master'' form of the classical finite de Finetti theorem.
 
\subsubsection{Quantum de Finetti theorems.}
The quantum analogue of an exchangeable sequence is a \emph{permutation-invariant} (or symmetric) state $\rho^{(n)}$ on $\mathcal{H}^{\otimes n}$, $\mathcal{H}\simeq\mathbb{C}^d$, i.e.\ a state satisfying $U^\pi\rho^{(n)}(U^\pi)^{\dagger}=\rho^{(n)}$ for all permutations $\pi\in S_n$; the analogue of an iid law is a product state $\sigma^{\otimes k}$, and the analogue of a mixture of iid laws is a \emph{de Finetti state} $\int\sigma^{\otimes k}\,d\mu(\sigma)$, where $\mu$ is a probability measure on the state space (for instance the unitarily invariant measure on pure states~\cite{zyczkowski2001induced,bengtsson2007geometry}). The infinite quantum de Finetti theorem---every state on an infinite tensor product that is invariant under finite permutations is a mixture of product states---was established by St{\o}rmer~\cite{stormer1969symmetric} and by Hudson and Moody~\cite{hudson1976locally}, and further developed in the setting of mean-field quantum statistical mechanics by Fannes, Lewis and Verbeure~\cite{fannes1988symmetric} and Raggio and Werner~\cite{raggio1989quantum}. Caves, Fuchs and Schack~\cite{caves2002unknown} gave an elementary proof for finite-dimensional systems together with a Bayesian interpretation of the ``unknown quantum state'', and Fuchs, Schack and Scudo~\cite{fuchs2004finetti} extended the representation to quantum processes.
 
The finite quantum de Finetti theorem is due to K\"onig and Renner~\cite{konig2005finetti} and, in its sharpest and now standard form, to Christandl, K\"onig, Mitchison and Renner~\cite{christandl2007one}: for every permutation-invariant state $\rho^{(n)}$ on $(\mathbb{C}^d)^{\otimes n}$ there is a probability measure $\mu$ on the set of states such that
\begin{equation}
\label{eq:CKMR}
\Big\|\Tr_{n-k}\rho^{(n)} - \int \sigma^{\otimes k}\,d\mu(\sigma)\Big\|_1 \le \frac{2d^2 k}{n},
\end{equation}
and the constant improves to $2dk/n$ (with a measure supported on pure states) when $\rho^{(n)}$ is supported on the symmetric subspace. In contrast with the classical bound~\zcref{eq:DF}, the dependence on the local dimension in~\zcref{eq:CKMR} cannot be removed in general in trace distance~\cite{christandl2007one}, which is one of the reasons why quantum de Finetti theorems are so much more delicate than their classical counterparts. Alternative proofs and variants were given by K\"onig and Mitchison~\cite{konig2009most}, by Christandl and Toner~\cite{christandl2009finite} for conditional probability distributions, and by Arnon-Friedman and Renner~\cite{arnonfriedman2015finetti} in the form of ``de Finetti reductions''; see~\cite{harrow2013church} for a survey of the symmetric-subspace techniques underlying most of these results. Of particular importance for us is the work of Chiribella~\cite{chiribella2011quantum}, who exhibited an exact relation between the finite de Finetti theorem, optimal state estimation and optimal universal cloning: the partial trace of a state supported on the symmetric subspace is expressed as a combination of the measure-and-prepare (estimate-and-reprepare) channel and cloning channels acting on smaller numbers of subsystems. A direct consequence of Chiribella's formula is an \emph{operator} inequality of the form
\begin{equation}
\label{eq:chiribella}
\Tr_{n-k}\rho^{(n)} \;\le\; c_{n,k,d}\int |\phi\rangle\langle\phi|^{\otimes k}\,d\mu_{\rho^{(n)}}(\phi),
\qquad c_{n,k,d}=1+O\!\Big(\frac{k(k+d)}{n}\Big),
\end{equation}
where $\mu_{\rho^{(n)}}$ is the (explicit) measure obtained by optimal estimation on $\rho^{(n)}$. Inequality~\zcref{eq:chiribella} is the quantum analogue of the classical domination inequality~\zcref{eq:classical-domination}: it is a bound on the quantum max-relative entropy $D_{\max}(\Tr_{n-k}\rho^{(n)}\,\|\,\int\phi^{\otimes k}d\mu_{\rho^{(n)}})$ by $\log c_{n,k,d}$ and therefore implies a de Finetti theorem not only in trace distance but in every quantum R\'enyi divergence and, more generally, in any distinguishability measure that is monotone under operator domination; see also~\cite{berta2024third}, where a classical version of this argument is used.
 
\subsubsection{Exponential de Finetti theorems and post-selection.}
A qualitatively different type of statement is Renner's \emph{exponential} de Finetti theorem~\cite{renner2005security,renner2007symmetry}, see \zcref{eq:Renner-exp}. There, one gives up on approximating the reduced state by a mixture of iid states, and instead shows that every permutation-invariant state on $n$ systems has a purification which, after discarding a small number $n-k$ of subsystems, is close to a mixture of \emph{almost-iid}\ states (see \zcref{def:almost-iid-state}). The approximation error is now exponentially small in $n$, instead of decaying only polynomially in $k/n$, and the statement applies to the marginal on almost all $n$ subsystems rather than a marginal on $k\ll n$ of them. This is precisely what is needed for finite-key security proofs of quantum key distribution (QKD) against general (coherent) attacks~\cite{renner2005security}. The price paid is that the ``almost-iid'' states are no longer exactly tensor power, so the ensuing information-theoretic analysis has to be carried out for this larger class of states; a systematic treatment of ``almost-iid'' information theory has recently been given by Mazzola, Sutter and Renner~\cite{mazzola2026almost} and subsequently extended to broader notions in~\cite{girardi2026new,girardi2026quantum,datta2026entropy,datta2026robustness,girardi2026generalised}. The exponential de Finetti theorem was extended to infinite-dimensional systems by Renner and Cirac~\cite{renner2009finetti} and, in the Gaussian setting, by Leverrier~\cite{leverrier2018su}.

\subsubsection{Applications.}
Quantum de Finetti theorems have become an indispensable tool throughout quantum information theory. Besides their original motivation in the security analysis of QKD, where they reduce general attacks to collective (iid) attacks~\cite{renner2005security,renner2007symmetry,christandl2009postselection,nahar2024postselection}, they underpin the Bayesian and the reliable approaches to quantum state and process tomography~\cite{caves2002unknown,fuchs2004finetti,christandl2012reliable}; they explain the convergence of the Doherty--Parrilo--Spedalieri hierarchy of semidefinite programs for entanglement detection based on $k$-extendibility~\cite{doherty2004complete,navascues2009power}, and more generally of semidefinite programming hierarchies for polynomial and bilinear optimisation problems~\cite{doherty2012convergence,berta2022semidefinite}; in complexity theory, de Finetti theorems under local measurements give the best known algorithms for the class $\mathrm{QMA}(2)$ and for related optimisation problems~\cite{brandao2017quantum}; and in mathematical physics they are the mechanism behind the validity of mean-field (Hartree) theory for large bosonic systems and of Bose--Einstein condensation~\cite{raggio1989quantum,lewin2014derivation,rougerie2015finetti}. In all of these applications, the strength of the conclusion is dictated by the distinguishability measure in which the de Finetti approximation is available --- trace distance for hypothesis-testing type tasks, relative entropy and R\'enyi divergences for rates, error exponents and one-shot quantities.

\subsection{De Finetti theorems for channels}

\label{sec:channels-intro}
\subsubsection{Classical channels and conditional probability distributions.}
In classical information theory, a channel is a conditional probability distribution $P_{A^n|X^n}$, often called a \emph{box} in the nonlocality literature. The first de Finetti theorem for such objects is due to Christandl and Toner~\cite{christandl2009finite}, who proved a finite de Finetti theorem for conditional probability distributions that are symmetric under simultaneous permutations of inputs and outputs and satisfy a no-signalling constraint: the marginal on $k$ boxes is close in total variation to a mixture $\int Q_{A|X}^{\otimes k}\,d\mu(Q)$ of iid conditional distributions, with an explicit error vanishing as $k/n\to0$. An infinite version for general probabilistic theories has been established independently by Barrett and Leifer~\cite{barrett2009finetti}.
 
\subsubsection{Quantum channels: exchangeable sequences.}
The infinite quantum de Finetti theorem for channels is due to Fuchs, Schack and Scudo~\cite{fuchs2004finetti} (see also~\cite{fuchs2004unknown}), who introduced the notion of an \emph{exchangeable sequence of channels} $\{\mathcal{N}_{B^n\to \oldbar{B}^n}\}_n$: each $\mathcal{N}_{B^n\to\oldbar{B}^n}$ is permutation-covariant, and the sequence is consistent under discarding, $\Tr_{\oldbar{B}_n}\circ\mathcal{N}_{B^n\to\oldbar{B}^n}=\mathcal{N}_{B^{n-1}\to\oldbar{B}^{n-1}}\circ\Tr_{B_n}$, so that in particular no input can signal to the outputs of other positions. They proved, by mapping the problem to an exchangeable sequence of Choi states subject to the trace-preservation constraint, that such a sequence has a unique representation $\mathcal{N}_{B^k\to\oldbar{B}^k}=\int\mathcal{M}^{\otimes k}\,d\mu(\mathcal{M})$ as a mixture of iid channels, thereby giving a Bayesian justification of quantum-process tomography that parallels the justification of state tomography in~\cite{caves2002unknown}. The corresponding \emph{finite} statement was obtained only much later by Berta, Borderi, Fawzi and Scholz~\cite{berta2022semidefinite}, as a special case of a de Finetti theorem for symmetric states subject to linear constraints. In the channel language, if $\mathcal{N}_{AB^n\to\oldbar{A}\oldbar{B}^n}$ is covariant under permutations of the $n$ $B$-systems and satisfies the two consistency conditions
\begin{equation}
\label{eq:BBFS-conditions}
\Tr_{\oldbar{B}_n}\!\big[\mathcal{N}(\cdot)\big]=\Tr_{\oldbar{B}_n}\!\Big[\mathcal{N}\big(\Tr_{B_n}[\cdot]\otimes\tfrac{\mathbbm{1}_{B_n}}{d_B}\big)\Big],\qquad
\Tr_{\oldbar{A}}\!\big[\mathcal{N}(\cdot)\big]=\Tr_{\oldbar{A}}\!\Big[\mathcal{N}\big(\tfrac{\mathbbm{1}_A}{d_A}\otimes\Tr_{A}[\cdot]\big)\Big],
\end{equation}
then the reduced channel $\mathcal{N}^{(k)}$ on $k$ of the $B$-systems is close to a mixture $\sum_ip_i\,\mathcal{E}^i_{A\to\oldbar{A}}\otimes(\mathcal{D}^i_{B\to\oldbar{B}})^{\otimes k}$ of product channels. The bound is proved at the level of Choi states, where it reads
\begin{equation}
\label{eq:BBFS}
\Big\|\rho_{A\oldbar{A}(B\oldbar{B})^k}-\sum_ip_i\,\sigma^i_{A\oldbar{A}}\otimes(\omega^i_{B\oldbar{B}})^{\otimes k}\Big\|_1
\;\le\;k\,f(B\oldbar{B}|\cdot)\sqrt{(2\ln2)\,\frac{\log(d_Ad_{\oldbar{A}})+(k-1)\log(d_Bd_{\oldbar{B}})}{n-k+1}},
\end{equation}
with $\sigma^i,\omega^i$ Choi states of genuine channels and $f(B\oldbar{B}|\cdot)\le\sqrt{18(d_Bd_{\oldbar{B}})^3}$ a ``distortion'' constant of an informationally complete measurement; the conversion to the diamond norm costs a further factor $d_Ad_B^k$, and the authors leave as an open problem a diamond-norm bound with polynomial dependence on $d_B$ and $k$. The proof is information-theoretic, based on the chain rule for the conditional mutual information in the manner of Brand\~ao and Harrow~\cite{brandao2017quantum}, and it is the finite counterpart of the Fuchs--Schack--Scudo theorem when $A\oldbar{A}$ is trivial. A variant in which one of the two constraints in~\zcref{eq:BBFS-conditions} is dropped yields closeness to one-way LOCC channels instead, and the $k=1$ conditions are the channel analogue of $n$-extendibility: they define \emph{$n$-extendible channels}~\cite{kaur2019extendibility,pankowski2013entanglement}, which converge to entanglement-breaking (measure-and-prepare) channels as $n\to\infty$ and which underlie semidefinite-programming hierarchies for approximate quantum error correction~\cite{berta2022semidefinite} and for the resource theory of unextendibility~\cite{kaur2019extendibility}.

\section{Mathematical preliminaries}
\label{sec:math-prelim}

Let $\mathcal{H}$ denote a finite-dimensional Hilbert space, and $\mathcal{L}(\mathcal{H})$ the set of linear operators acting on $\mathcal{H}$. Let $\mathcal{D}(\mathcal{H})$ denote the set of density matrices on $\mathcal{H}$, i.e., the set of positive semidefinite operators of unit trace. We label different quantum systems by capital Roman letters ($A,B,C$, etc.) and often use these letters interchangeably with the corresponding Hilbert space (i.e., we write $A$ instead of $\mathcal{H}_A$). We also concatenate these letters to denote tensor products of systems, i.e., we write $AR$ for $\mathcal{H}_A \otimes \mathcal{H}_R$. The identity operator in $\mathcal{L}(\mathcal{H})$ is denoted by $\id_{\mathcal{H}}$, whereas the identity map acting on $\mathcal{L}(\mathcal{H})$ is denoted by $\idc_{\mathcal{H}}$. The maximally mixed state on $A$ is denoted by $\tau_A$. We use the shorthand notation \(\phi=|\phi\rangle\!\langle\phi|\) for pure states. For operators
$X,Y\in\mathcal L(A)$, the Hilbert--Schmidt inner product
is defined by
$\langle X,Y\rangle_{\mathrm{HS}}
\coloneqq \Tr(X^\dagger Y)$. The set of unitaries on \(\mathcal{H}\) is denoted by \(\mathsf U(\mathcal{H})\). A quantum channel (usually denoted by $\mathcal{N},\mathcal{E},\Phi$, etc.) is a completely positive trace-preserving map acting on $\mathcal{L}(\mathcal{H})$. The set of quantum channels from $\mathcal{L}(A)$ to $\mathcal{L}(B)$ is denoted by $\mathrm{CPTP}(A,B)$. The normalized Choi state of a channel $\Phi_{A\to B}\in\mathrm{CPTP}(A,B)$ is defined by
\begin{equation}
    J(\Phi_{A\to B})
    \coloneqq
    (\Phi_{A\to B}\otimes\idc_{A'})
    (|\Omega\rangle\langle\Omega|_{AA'}), \qquad
    |\Omega\rangle_{AA'}
    \coloneqq
    \frac{1}{\sqrt{d_A}}
    \sum_{i=1}^{d_A}|i\rangle_A|i\rangle_{A'},
\end{equation}
where $A'\simeq A$, and $\{|i\rangle_A\}_{i=1}^{d_A}$ is a fixed
orthonormal basis of $A$, and
$\{|i\rangle_{A'}\}_{i=1}^{d_A}$ is the corresponding basis of
$A'$. With this normalization, $J(\Phi_{A\to B})$ is a
state on \(A'B\) satisfying $\Tr_BJ(\Phi_{A\to B})=\tau_{A'}$. Conversely, a state $\omega_{A'B}$ is the normalized Choi state of a unique channel from
$A$ to $B$ if and only if $\omega_{A'}=\tau_{A'}$.

A state $\rho_{AB}\in\mathcal D(AB)$ is called an extension of
$\rho_A\in\mathcal D(A)$ if $\Tr_B\rho_{AB}=\rho_A$. Analogously, a
channel $\mathcal N_{A\to BE}$ is called an extension of
$\mathcal N_{A\to B}$ if
$\Tr_E\circ\mathcal N_{A\to BE}=\mathcal N_{A\to B}$. This is
equivalent to the corresponding state-extension property for their
Choi states:
$\Tr_EJ(\mathcal N_{A\to BE})=J(\mathcal N_{A\to B})$.

For Hermitian operators \(X,Y\), we write \(X\leq Y\) if \(Y-X\geq0\), i.e., if \(Y-X\) is positive semidefinite. Analogously, for Hermiticity-preserving linear superoperators \(\mathcal N,\mathcal M:\mathcal L(A)\to\mathcal L(B)\), we write
\(\mathcal N\leq_{\mathrm{CP}}\mathcal M\) if \(\mathcal M-\mathcal N\) is completely positive. Under the Choi isomorphism:
\begin{equation}
    \mathcal N\leq_{\mathrm{CP}}\mathcal M
    \quad\Longleftrightarrow\quad
    J({\mathcal N})\leq J({\mathcal M}).
\end{equation}

For integers \(k\geq 1\), let \(S_k\) denote the symmetric group of degree \(k\); for every \(\pi\in S_k\), we denote by
\(U_{\mathcal H^{\otimes k}}^\pi\) the unitary on \(\mathcal H^{\otimes k}\)
that permutes the \(k\) tensor factors according to \(\pi\). We use \([k]\) to denote the set \(\{1,\ldots,k\}\). For a \(D\)-dimensional Hilbert space \(\mathcal H\) and an integer
\(m\geq 1\), the symmetric subspace is
\begin{equation}
    \operatorname{Sym}^{m}(\mathcal H)
    \coloneq
    \left\{
        \lvert\psi\rangle\in\mathcal H^{\otimes m}:
        U_{\mathcal H^{\otimes m}}^\pi\lvert\psi\rangle
        =
        \lvert\psi\rangle
        \ \text{for all }\pi\in S_m
    \right\}.
\end{equation}
We denote its orthogonal projection and dimension by
\begin{equation}
    P_{\mathrm{sym}}^{(m)}
    =
    \frac{1}{m!}\sum_{\pi\in S_m}
        U_{\mathcal H^{\otimes m}}^\pi,
    \qquad
    d_m
    \coloneq
    \dim\operatorname{Sym}^{m}(\mathcal H)
    =
    \binom{m+D-1}{D-1}
    \leq
    (m+1)^{D-1}.
\end{equation}
We denote by \(\mathrm d\phi\) the normalized Haar measure on the set of pure states in \(\mathcal H\), then \(P_{\mathrm{sym}}^{(m)}\) admits the integral representation \cite{watrous2018theory}:
\begin{equation}
    P_{\mathrm{sym}}^{(m)}=d_m\int
        \lvert\phi\rangle\!\langle\phi\rvert^{\otimes m}
        \,\mathrm d\phi.
    \label{eq:haar-moment-symmetric-subspace}
\end{equation}

\subsection{Almost-iid states and channels}

Three inequivalent notions of almost-iid states have been introduced. In increasing order of generality, these are Mazzola-Sutter-Renner (MSR), Wasserstein, and weakly almost-iid states, and they form a strict hierarchy~\cite{girardi2026new}. In this work, we consider only the MSR notion and henceforth refer to MSR almost-iid states simply as almost-iid states.

For a pure state \(|\phi\rangle\in\mathcal H\) and integers $k\geq1$ and \(0\leq r\leq k\), let
\begin{equation}
    \mathcal V\bigl(\mathcal H^{\otimes k},|\phi\rangle^{\otimes k-r}\bigr)\coloneq \left\{U_{\mathcal H^{\otimes k}}^{\pi}\left(|\phi\rangle^{\otimes k-r}\otimes|\omega^{(r)}\rangle\right):\pi\in S_k,\ |\omega^{(r)}\rangle\in\mathcal H^{\otimes r}\right\}.
\end{equation}

\begin{definition}[(MSR-Almost-iid state~\cite{mazzola2026almost})]
\label{def:almost-iid-state}
Let $\sigma_A\in\mathcal D(A)$, let $k\geq1$, and $0\leq r\leq k$ be integers. A state
$\rho_{A^k}\in\mathcal D(A^k)$ is called an
$\binom{k}{r}$-almost-iid state along the reference state $\sigma_A$ if there exist a
purification $|\phi\rangle_{AE}$ of $\sigma_A$ and an extension
$\rho_{A^kE^k}$ of $\rho_{A^k}$, such that
\begin{equation}
\begin{aligned}
\textnormal{(1)}\quad&
    U^\pi_{(AE)^k}\rho_{A^kE^k}
    \bigl(U^\pi_{(AE)^k}\bigr)^\dagger
    =
    \rho_{A^kE^k},
    \quad\forall\,\pi\in S_k;                                         \\
\textnormal{(2)}\quad&
    \supp(\rho_{A^kE^k})
    \subseteq
    \operatorname{span}
    \mathcal V\!\left(
        \mathcal H_{AE}^{\otimes k},
        |\phi\rangle_{AE}^{\otimes(k-r)}
    \right).
\end{aligned}
\end{equation}
We sometimes use the notation \(\rho_{\sigma}^{(k,r)}\) to indicate that \(\rho\) is $\binom{k}{r}$-almost-iid along $\sigma$.
\end{definition}

When the reference state is pure, the above general definition simplifies as follows: a state \(\rho_{A^k}\in\mathcal D(A^k)\) is \(\binom{k}{r}\)-almost iid along \(\lvert\phi\rangle_A\) if it is permutation invariant and
\begin{equation}
    \supp\!\left(\rho_{A^k}\right)
    \subseteq
    \operatorname{span}
    \mathcal V\!\left(
        \mathcal H_A^{\otimes k},
        \lvert\phi\rangle_A^{\otimes(k-r)}
    \right).
\end{equation}
Indeed, every purification of the pure reference is of the product form \(\lvert\phi\rangle_A\otimes\lvert\eta\rangle_E\). Hence, tracing out \(E^k\) gives the simplified permutation invariance and support condition. Conversely, any state satisfying these conditions admits the trivial extension \(\rho_{A^k}\otimes\lvert\eta\rangle\!\langle\eta\rvert_E^{\otimes k}\).

The projection onto the almost-iid subspace can be constructed explicitly. For \(0\leq j\leq k\) and \(|\phi\rangle\in\mathcal H\), define
\begin{equation}
\label{eq:def-Pi-j}
    \Pi_{\phi,j}^{(k)}\coloneq \sum_{\substack{S\subseteq[k]\\ |S|=j}}|\phi\rangle\!\langle\phi|^{\otimes S^{\mathrm c}}\otimes\bigl(\id-|\phi\rangle\!\langle\phi|\bigr)^{\otimes S}, \qquad \sum_{j=0}^{k}\Pi_{\phi,j}^{(k)}=\id,
\end{equation}
where the tensor factors are placed in the positions indexed by \(S\) and \(S^{\mathrm c}=[k]\setminus S\). Thus \(\Pi_{\phi,j}^{(k)}\) projects onto the subspace with exactly \(j\) tensor factors orthogonal to \(|\phi\rangle\) and remaining factors in
\(\operatorname{span}\{\lvert\phi\rangle\}\). The projection onto \(\operatorname{span}\mathcal V(\mathcal H^{\otimes k},|\phi\rangle^{\otimes k-r})\) is denoted as
\begin{equation}
    P_{\phi}^{(k,r)}\coloneq \sum_{j=0}^{r}\Pi_{\phi,j}^{(k)},\qquad \operatorname{ran}P_{\phi}^{(k,r)}=\operatorname{span}\mathcal V\bigl(\mathcal H^{\otimes k},|\phi\rangle^{\otimes(k-r)}\bigr).
    \label{eq:def-defect-projector-range}
\end{equation}
To see why the range of \(P_{\phi}^{(k,r)}\) coincides with \(\operatorname{span}\mathcal V\bigl(\mathcal H^{\otimes k},|\phi\rangle^{\otimes(k-r)}\bigr)\), decompose
\(\mathcal H=\operatorname{span}\{\lvert\phi\rangle\}\oplus
\{\lvert\phi\rangle\}^{\perp}\) with \(\{\lvert\phi\rangle\}^{\perp}\) being the orthogonal complement. Expanding the \(r\) defect factors of any vector in
\(\mathcal V(\mathcal H^{\otimes k},
\lvert\phi\rangle^{\otimes(k-r)})\) with respect to this decomposition produces only terms with at most \(r\) factors orthogonal to \(\lvert\phi\rangle\), which lie in the range of
\(P_{\phi}^{(k,r)}\). Conversely, every vector with
exactly \(j\leq r\) factors orthogonal to \(\lvert\phi\rangle\) and remaining factors in
\(\operatorname{span}\{\lvert\phi\rangle\}\) contains at least \(k-r\) copies of
\(\lvert\phi\rangle\), and hence belongs \(\operatorname{span}\mathcal V\bigl(\mathcal H^{\otimes k},|\phi\rangle^{\otimes(k-r)}\bigr)\).

The notion of almost-iid states can be extended to define almost-iid quantum channels via the Choi–Jamiołkowski isomorphism:

\begin{definition}[(Almost-iid channel \cite{companion-paper})]
\label{def:almost-iid-channel}
Let $\Phi\in\mathrm{CPTP}(A,B)$, let $k\geq1$, and $0\leq r\leq k$ be integers. A channel
$\mathcal N^{(k)}\in\mathrm{CPTP}(A^k,B^k)$ is called an
$\binom{k}{r}$-almost-iid channel along the reference channel $\Phi$ if its Choi state $J(\mathcal N^{(k)})$ is an
$\binom{k}{r}$-almost-iid state along the Choi state of the reference channel $J(\Phi)$. Here $J(\mathcal N^{(k)})$, initially written as a state on $A'^kB^k$, is regarded as a state on $(A'B)^{k}$ by grouping $A_i'$ with $B_i$ for each $i\in[k]$.
\end{definition}

We mention that a different notion of almost-iid processes, based on the quantum Wasserstein distance of order $1$ \cite{de2021quantum}, was proposed in \cite[Definition 12]{girardi2026quantum}.

\section{Quantum state de Finetti theorems as operator inequalities}
\label{sec:exp-df-states}

\subsection{Setup for state de Finetti theorems}
\label{sec:state-df-setup}
The finite quantum de Finetti representation for states concerns the reduced state \(\rho^{(k)}\) of a permutation-invariant state
\(\rho^{(n)}\in\mathcal D(A^n)\), with
\(\dim A=d\). By the standard result that permutation-invariant states admit symmetric purifications \cite{renner2005security}, it suffices to consider symmetric pure states on the purified Hilbert space \(AE\) with dimension \(D=d^2\), and establish de Finetti representations on the enlarged space. The corresponding statements for permutation-invariant (mixed) states on
the original \(d\)-dimensional system \(A\) then follow by tracing out the purifying system \(E\), which preserves operator inequalities. Moreover, iid pure states
\(\phi_{AE}^{\otimes k}\) reduce to iid mixed states \(\phi_A^{\otimes k}\coloneq (\Tr_E\phi_{AE})^{\otimes k}\), while almost-iid states along \(\phi_{AE}\) reduce to \cite[Remark 2.4 (e)]{mazzola2026almost} almost-iid states
along \(\phi_A\).

Accordingly, for integers \(0\leq k\leq n\), let \(|\Psi^{(n)}\rangle\in\operatorname{Sym}^{n}(\mathcal H)\) be a symmetric pure state, with \(\dim \mathcal H=D\), and denote its reduced state on arbitrary $k$ copies by \(\Psi^{(k)}\coloneq \Tr_{n-k}\Psi^{(n)}\). We will establish de Finetti representations for \(\Psi^{(k)}\).

For notational convenience, for every pure state \(\lvert\phi\rangle\in\mathcal H\), define the
(unnormalized) postselected operator
\begin{equation}
\label{eq:abbreviation-contraction}
    \Psi_{\phi}^{(k)}\coloneq \left(\id\otimes\langle\phi|^{\otimes(n-k)}\right)\Psi^{(n)}\left(\id\otimes|\phi\rangle^{\otimes(n-k)}\right)\in\mathcal L(\mathcal H^{\otimes k}).
\end{equation}
The vector
\(
(\id\otimes\langle\phi\rvert^{\otimes(n-k)})
\lvert\Psi^{(n)}\rangle
\)
is symmetric, since permutations of the remaining \(k\) systems commute with the contraction and leave \(\lvert\Psi^{(n)}\rangle\) invariant.
Hence $\supp\Psi_{\phi}^{(k)}\subseteq\operatorname{Sym}^{k}(\mathcal H)$.

\subsection{Standard and exponential de Finetti theorems for states}
\label{sec:std-exp-state-df}
Proofs of trace-distance de Finetti theorems typically begin with a symmetric pure state \(\Psi^{(n)}\). Projecting \(n-k\) of its subsystems onto \(\lvert\phi\rangle^{\otimes(n-k)}\) leaves an unnormalized state \(\Psi_{\phi}^{(k)}\) on the remaining \(k\) subsystems. This state is then projected onto the almost-iid subspace spanned by vectors in which at least \(k-r\) tensor factors equal \(\lvert\phi\rangle\), while the remaining factors may be arbitrary. The resulting trace-distance error between the reduced state
\begin{equation}
    \Psi^{(k)} \quad \text{and} \quad \int
P_{\phi}^{(k,r)}\Psi_{\phi}^{(k)}P_{\phi}^{(k,r)}\,\mathrm d\phi
\end{equation}
can be controlled by the Haar-averaged mass \(\int\Tr[(\id-P_{\phi}^{(k,r)})\Psi_{\phi}^{(k)}]\,\mathrm d\phi\) outside the almost-iid subspace via the gentle measurement lemma \cite{winter1999coding}. This Haar-averaged mass can be estimated to be exponentially small. See~\cite{vidick2016simple} for a simple proof following this strategy.

To derive an operator inequality, we refine this argument by computing exactly the contribution of each defect sector before taking traces. The resulting operator identities provide us with much richer information than the trace relations, and allow us to treat the standard and exponential de Finetti theorems within a unified argument.

\begin{lemma}\label{lem:haar-defect-coefficient}
For every \(0\leq j\leq k\leq n\), with \(\Pi_{\phi,j}^{(k)}\) defined in \zcref{eq:def-Pi-j},
\begin{equation}
\begin{aligned}
 &P_{\mathrm{sym}}^{(n)}\left(\int\Pi_{\phi,j}^{(k)}\otimes|\phi\rangle\!\langle\phi|^{\otimes(n-k)}\,\mathrm d\phi\right)
 =\left(\int\Pi_{\phi,j}^{(k)}\otimes|\phi\rangle\!\langle\phi|^{\otimes(n-k)}\,\mathrm d\phi\right)P_{\mathrm{sym}}^{(n)}\\
 &\hspace{8em}=\frac{1}{d_n}\frac{\binom{k}{j}}{\binom{n}{j}}\binom{D+j-2}{j}P_{\mathrm{sym}}^{(n)}.
\end{aligned}
\label{eq:haar-defect-coefficient}
\end{equation}
\end{lemma}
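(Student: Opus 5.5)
The plan has two parts: first show that the Haar-averaged operator is absorbed by the symmetric projector, giving $P_{\mathrm{sym}}^{(n)}X_j=X_jP_{\mathrm{sym}}^{(n)}=c_j\,P_{\mathrm{sym}}^{(n)}$ for some scalar $c_j$; then compute $c_j$ by a trace, which avoids any alternating-sum combinatorial identity.

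Set $X_j\coloneq\int\Pi_{\phi,j}^{(k)}\otimes|\phi\rangle\!\langle\phi|^{\otimes(n-k)}\,\mathrm d\phi$. I expand each defect factor as $(\id-|\phi\rangle\!\langle\phi|)^{\otimes S}=\sum_{T\subseteq S}(-1)^{|T|}|\phi\rangle\!\langle\phi|^{\otimes T}\otimes\id^{\otimes(S\setminus T)}$. This writes $\Pi_{\phi,j}^{(k)}\otimes|\phi\rangle\!\langle\phi|^{\otimes(n-k)}$ as a finite signed sum of operators of the form
\[
\id^{\otimes R}\otimes|\phi\rangle\!\langle\phi|^{\otimes([n]\setminus R)},\qquad R=S\setminus T\subseteq[k].
\]
The Haar integral is applied termwise. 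By \zcref{eq:haar-moment-symmetric-subspace}, $\int|\phi\rangle\!\langle\phi|^{\otimes m}\,\mathrm d\phi=P_{\mathrm{sym}}^{(m)}/d_m$ with $m=n-|R|$. So $X_j$ is a linear combination of operators $\id^{\otimes R}\otimes P_{\mathrm{sym}}^{(n-|R|)}$, where the partial symmetrizer acts on the positions $[n]\setminus R$.

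The key observation is absorption. Each partial symmetrizer is an average of permutation unitaries $U^\pi$ with $\pi$ fixing $R$ pointwise, and $P_{\mathrm{sym}}^{(n)}U^\pi=U^\pi P_{\mathrm{sym}}^{(n)}=P_{\mathrm{sym}}^{(n)}$. Hence
\[
P_{\mathrm{sym}}^{(n)}\bigl(\id^{\otimes R}\otimes P_{\mathrm{sym}}^{(n-|R|)}\bigr)=\bigl(\id^{\otimes R}\otimes P_{\mathrm{sym}}^{(n-|R|)}\bigr)P_{\mathrm{sym}}^{(n)}=P_{\mathrm{sym}}^{(n)}.
\]
Summing the terms gives both equalities in \zcref{eq:haar-defect-coefficient} with the same scalar $c_j$. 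In particular $X_j$ commutes with $P_{\mathrm{sym}}^{(n)}$.

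It remains to identify $c_j$. Taking traces, $c_j d_n=\Tr[P_{\mathrm{sym}}^{(n)}X_j]=\int\Tr[P_{\mathrm{sym}}^{(n)}(\Pi_{\phi,j}^{(k)}\otimes|\phi\rangle\!\langle\phi|^{\otimes(n-k)})]\,\mathrm d\phi$. The integrand is independent of $\phi$ because $P_{\mathrm{sym}}^{(n)}$ commutes with $U^{\otimes n}$. I therefore fix $|\phi\rangle=|1\rangle$ in an orthonormal basis $|1\rangle,\dots,|D\rangle$. Then $Q\coloneq\Pi_{1,j}^{(k)}\otimes|1\rangle\!\langle1|^{\otimes(n-k)}$ is diagonal in the computational product basis. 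It is the projector onto strings with exactly $j$ letters different from $1$, all of them among the first $k$ positions.

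I evaluate $\Tr[P_{\mathrm{sym}}^{(n)}Q]$ in the orthonormal type basis of $\operatorname{Sym}^n(\mathcal H)$. These are uniform superpositions over all strings of a given type, and only types with exactly $j$ non-$1$ letters contribute. For such a type every string carries exactly $j$ non-$1$ letters, so its diagonal entry equals the fraction of strings of that type whose non-$1$ letters all lie in $[k]$. That fraction is $\binom{k}{j}/\binom{n}{j}$, since the set of positions holding non-$1$ letters is uniformly distributed among $j$-subsets of $[n]$.

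The number of such types is the number of multisets of size $j$ on the $D-1$ letters $\{2,\dots,D\}$, namely $\binom{D+j-2}{j}$. Hence $\Tr[P_{\mathrm{sym}}^{(n)}Q]=\binom{D+j-2}{j}\binom{k}{j}/\binom{n}{j}$, which gives
\[
c_j=\frac{1}{d_n}\frac{\binom{k}{j}}{\binom{n}{j}}\binom{D+j-2}{j}.
\]

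The step needing most care is the absorption argument: it is what gives the product $P_{\mathrm{sym}}^{(n)}X_j$ itself as a multiple of $P_{\mathrm{sym}}^{(n)}$, not merely the sandwich $P_{\mathrm{sym}}^{(n)}X_jP_{\mathrm{sym}}^{(n)}$, which is all Schur's lemma would give. The counting in the trace step is routine once $\phi$ is fixed.
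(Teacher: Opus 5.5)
Your proof is correct. Its first half is exactly the paper's argument: the inclusion--exclusion expansion of $(\id-|\phi\rangle\!\langle\phi|)^{\otimes S}$, the Haar-moment identity, and absorption of the partial symmetrizers into $P_{\mathrm{sym}}^{(n)}$. The two routes diverge only in how the scalar is identified.

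The paper keeps the explicit coefficient $\binom{k}{j}\sum_{t=0}^{j}(-1)^t\binom{j}{t}\,d_{n-j+t}^{-1}$ produced by the expansion. It evaluates this alternating sum analytically, writing $1/d_m=(D-1)\int_0^1x^m(1-x)^{D-2}\,\mathrm dx$ and resumming with the binomial theorem. That needs no choice of basis, and it is the same Beta-integral device the paper reuses to show $\sum_j p_j=1$.

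You never evaluate that sum. Instead you recover $c_j$ from $c_j d_n=\Tr[P_{\mathrm{sym}}^{(n)}X_j]$, fix $|\phi\rangle=|1\rangle$ using $U^{\otimes n}$-invariance, and count in the type basis. This gives each factor a transparent meaning:
\begin{itemize}
\item $\binom{D+j-2}{j}$ counts the defect types, i.e.\ it is $\dim\operatorname{Sym}^j$ of the $(D-1)$-dimensional complement of $|\phi\rangle$;
\item $\binom{k}{j}/\binom{n}{j}$ is the hypergeometric probability that all $j$ defects fall among the retained $k$ positions.
\end{itemize}
Comparing your result with the paper's computation also proves the alternating-sum identity combinatorially, as a by-product.

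One small correction to your closing remark. Because $X_j$ commutes with every $U^{\otimes n}$, Schur--Weyl duality places it in the span of the permutation unitaries $U^\pi$. Each of these satisfies $P_{\mathrm{sym}}^{(n)}U^\pi=U^\pi P_{\mathrm{sym}}^{(n)}=P_{\mathrm{sym}}^{(n)}$. So representation theory does yield the full product identity, not just the sandwich $P_{\mathrm{sym}}^{(n)}X_jP_{\mathrm{sym}}^{(n)}$. Your elementary absorption step simply makes that machinery unnecessary.
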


\begin{proof}
For \(S\subseteq[k]\) with \(|S|=j\), expanding the tensor power \(\bigl(\id-|\phi\rangle\!\langle\phi|\bigr)^{\otimes S}\) gives
\begin{equation}
 |\phi\rangle\!\langle\phi|^{\otimes S^{\mathrm c}}\otimes\bigl(\id-|\phi\rangle\!\langle\phi|\bigr)^{\otimes S}\otimes|\phi\rangle\!\langle\phi|^{\otimes(n-k)}=\sum_{T\subseteq S}(-1)^{|T|}|\phi\rangle\!\langle\phi|^{\otimes(n-j+|T|)}\otimes\id^{\otimes(j-|T|)},
\end{equation}
where it is understood that the operators on the right-hand side remain in their original positions. For every \(0\leq\ell\leq n\), we have
\begin{equation}
    P_{\mathrm{sym}}^{(n)}\bigl(P_{\mathrm{sym}}^{(n-\ell)}\otimes\id^{\otimes\ell}\bigr)=P_{\mathrm{sym}}^{(n)},
\end{equation}
because every globally symmetric vector is symmetric under local permutations of any chosen \(n-\ell\) factors. Counting the subsets \(S\subseteq[k]\) and \(T\subseteq S\), we therefore obtain
\begin{equation}
\begin{aligned}
P_{\mathrm{sym}}^{(n)}
 \left(
   \int \Pi_{\phi;j}^{(k)}
   \otimes |\phi\rangle\!\langle\phi|^{\otimes(n-k)}
   \,\mathrm d\phi
 \right)&=
 \sum_{\substack{S\subseteq[k]\\ |S|=j}}
 \sum_{T\subseteq S}(-1)^{|T|}
 P_{\mathrm{sym}}^{(n)}
 \left(
   \int
   |\phi\rangle\!\langle\phi|^{\otimes(n-j+|T|)}\, \mathrm d\phi
   \otimes\id^{\otimes(j-|T|)}
 \right)                                                        \\
&=
 \sum_{\substack{S\subseteq[k]\\ |S|=j}}
 \sum_{T\subseteq S}
 \frac{(-1)^{|T|}}{d_{n-j+|T|}}
 P_{\mathrm{sym}}^{(n)}\bigl(P_{\mathrm{sym}}^{(n-j+|T|)}\otimes\id^{\otimes (j-|T|)}\bigr)                                        \\
&=
 \binom{k}{j}
 \sum_{t=0}^{j}(-1)^t\binom{j}{t}
 \frac{1}{d_{n-j+t}}
 P_{\mathrm{sym}}^{(n)},
\end{aligned}
\end{equation}
where the second equality follows from \zcref{eq:haar-moment-symmetric-subspace}.
It remains to evaluate the coefficient. Since for any positive integer $m$, by the integral representation of beta function,
\begin{equation}
    \frac{1}{d_m}=\frac{m!(D-1)!}{(m+D-1)!}=(D-1)\int_0^1x^m(1-x)^{D-2}\,\mathrm dx,
\end{equation}
the binomial theorem gives
\begin{equation}
\begin{aligned}
 \sum_{t=0}^{j}(-1)^t\binom{j}{t}\frac{1}{d_{n-j+t}}
 &= (D-1)\int_0^1x^{n-j}(1-x)^{D-2}\sum_{t=0}^{j}\binom{j}{t}(-x)^t\,\mathrm dx\\
 &= (D-1)\int_0^1x^{n-j}(1-x)^{D+j-2}\,\mathrm dx
 =\frac{1}{d_n}\frac{1}{\binom{n}{j}}\binom{D+j-2}{j}.
\end{aligned}
\end{equation}
This proves one of the equalities in \zcref{eq:haar-defect-coefficient}; the other one follows by taking the Hermitian conjugate.
\end{proof}

For notational convenience, for \(0\leq j\leq k\), define
\begin{equation}
    p_j\coloneq \frac{d_{n-k}}{d_n}\frac{\binom{k}{j}}{\binom{n}{j}}\binom{D+j-2}{j}.
    \label{eq:def-pj}
\end{equation}
These coefficients are chosen such that \(\{p_j\}_{j=0,\cdots,k}\) is a probability distribution. This can be seen by a direct calculation, using the integral representation of beta function and the binomial theorem,
\begin{equation}
\begin{aligned}
 \sum_{j=0}^{k}p_j&=\frac{(n-k+D-1)!}{(D-2)!(n-k)!}
  \sum_{j=0}^{k}\binom{k}{j}
  \frac{(D+j-2)!(n-j)!}{(n+D-1)!} \\
&=\frac{(n-k+D-1)!}{(D-2)!(n-k)!}
  \sum_{j=0}^{k}\binom{k}{j}
  \int_0^1 x^{D+j-2}(1-x)^{n-j}\,\mathrm dx \\
 &=\frac{(n-k+D-1)!}{(D-2)!(n-k)!}\int_0^1x^{D-2}(1-x)^{n-k}\sum_{j=0}^{k}\binom{k}{j}x^j(1-x)^{k-j}\,\mathrm dx\\
 &=\frac{(n-k+D-1)!}{(D-2)!(n-k)!}\int_0^1x^{D-2}(1-x)^{n-k}\,\mathrm dx=1.
\end{aligned}
\end{equation}

A direct consequence of \zcref{lem:haar-defect-coefficient} are the following operator identities. The operator relations, rather than merely scalar relations of the traces, are the key additional
information needed for the subsequent operator inequality argument.

\begin{lemma}\label{lem:contracted-defect-coefficient}
For every \(0\leq j\leq k\leq n\), with \(\{p_j\}_{j=0,\cdots,k}\) defined in \zcref{eq:def-pj},
\begin{align}
    d_{n-k}\int\Pi_{\phi,j}^{(k)}\Psi_{\phi}^{(k)}\,\mathrm d\phi=
 d_{n-k}\int&\Psi_{\phi}^{(k)}\Pi_{\phi,j}^{(k)}\,\mathrm d\phi=p_j\Psi^{(k)},
 \label{eq:lem2-first}
 \\
 d_{n-k}\int\Psi_{\phi}^{(k)}\,\mathrm d\phi&=\Psi^{(k)}.
 \label{eq:lem2-second}
\end{align}
\end{lemma}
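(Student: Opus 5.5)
Both identities follow by writing the integrals as partial traces of $\Psi^{(n)}$ against operators on the last $n-k$ copies, and then invoking \zcref{lem:haar-defect-coefficient} together with $\Psi^{(n)}=P_{\mathrm{sym}}^{(n)}\Psi^{(n)}=\Psi^{(n)}P_{\mathrm{sym}}^{(n)}$.

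\textbf{The identity \zcref{eq:lem2-second}.} Write the postselected operator as a partial trace:
\[
\Psi_\phi^{(k)}=\Tr_{n-k}\bigl[\Psi^{(n)}(\id^{\otimes k}\otimes|\phi\rangle\!\langle\phi|^{\otimes(n-k)})\bigr].
\]
Integrate over $\phi$ and use \zcref{eq:haar-moment-symmetric-subspace} on the last $n-k$ factors. This gives
\[
d_{n-k}\int\Psi_\phi^{(k)}\,\mathrm d\phi=\Tr_{n-k}\bigl[\Psi^{(n)}(\id^{\otimes k}\otimes P_{\mathrm{sym}}^{(n-k)})\bigr].
\]
Since $\Psi^{(n)}=\Psi^{(n)}P_{\mathrm{sym}}^{(n)}$, the symmetric projector on the last $n-k$ factors is absorbed, because $P_{\mathrm{sym}}^{(n)}(\id^{\otimes k}\otimes P_{\mathrm{sym}}^{(n-k)})=P_{\mathrm{sym}}^{(n)}$. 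This is the identity used in the proof of \zcref{lem:haar-defect-coefficient}, with the symmetric factors placed in any positions. The right-hand side is therefore $\Tr_{n-k}\Psi^{(n)}=\Psi^{(k)}$. The same result also follows by summing \zcref{eq:lem2-first} over $j$ and using $\sum_j\Pi^{(k)}_{\phi,j}=\id$ and $\sum_j p_j=1$.

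\textbf{The identity \zcref{eq:lem2-first}.} Since $\Pi_{\phi,j}^{(k)}$ acts only on the first $k$ factors, it commutes past the partial trace:
\[
\Pi_{\phi,j}^{(k)}\Psi_\phi^{(k)}=\Tr_{n-k}\bigl[(\Pi_{\phi,j}^{(k)}\otimes\id^{\otimes(n-k)})\Psi^{(n)}(\id^{\otimes k}\otimes|\phi\rangle\!\langle\phi|^{\otimes(n-k)})\bigr].
\]
Both operators depending on the last $n-k$ factors now sit to the right of $\Psi^{(n)}$, so cyclicity of the partial trace over those factors is needed. Moving $\id\otimes|\phi\rangle\!\langle\phi|^{\otimes(n-k)}$ to the left is legitimate, since $\Tr_{n-k}[XY]=\Tr_{n-k}[YX]$ whenever $Y$ acts only on the traced systems. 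This yields
\[
\Tr_{n-k}\bigl[(\Pi_{\phi,j}^{(k)}\otimes|\phi\rangle\!\langle\phi|^{\otimes(n-k)})\Psi^{(n)}\bigr].
\]

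Next, integrate over $\phi$ inside the partial trace and write $\Psi^{(n)}=P_{\mathrm{sym}}^{(n)}\Psi^{(n)}$. The operator $\int\Pi_{\phi,j}^{(k)}\otimes|\phi\rangle\!\langle\phi|^{\otimes(n-k)}\,\mathrm d\phi$ then appears multiplied on the right by $P_{\mathrm{sym}}^{(n)}$. By \zcref{lem:haar-defect-coefficient}, this product equals
\[
\frac{1}{d_n}\frac{\binom kj}{\binom nj}\binom{D+j-2}{j}P_{\mathrm{sym}}^{(n)}.
\]
Multiplying by $d_{n-k}$ produces exactly $p_j$ from \zcref{eq:def-pj}, so the expression becomes $p_j\Tr_{n-k}[P_{\mathrm{sym}}^{(n)}\Psi^{(n)}]=p_j\Psi^{(k)}$.

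The version with $\Pi_{\phi,j}^{(k)}$ on the right is handled the same way, or by taking the Hermitian conjugate: $\Psi_\phi^{(k)}$ and $\Pi_{\phi,j}^{(k)}$ are Hermitian and $\Psi^{(k)}$ is Hermitian.

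\textbf{Main obstacle.} The only delicate step is the bookkeeping of operator placement. The partial-trace cyclicity must be applied only to operators supported on the traced systems, so that the untraced factor $\Pi^{(k)}_{\phi,j}$ and the projector $|\phi\rangle\!\langle\phi|^{\otimes(n-k)}$ end up on the same side of $\Psi^{(n)}$. Once they are adjacent, \zcref{lem:haar-defect-coefficient} applies directly, and exchanging the Haar integral with the (linear) partial trace is immediate in finite dimensions.
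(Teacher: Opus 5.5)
Your proposal is correct and follows essentially the paper's proof. You rewrite $\int\Pi_{\phi,j}^{(k)}\Psi_\phi^{(k)}\,\mathrm d\phi$ as $\Tr_{n-k}$ of $\bigl(\int\Pi_{\phi,j}^{(k)}\otimes|\phi\rangle\!\langle\phi|^{\otimes(n-k)}\,\mathrm d\phi\bigr)P_{\mathrm{sym}}^{(n)}\Psi^{(n)}$, apply \zcref{lem:haar-defect-coefficient}, and get the reversed ordering by Hermitian conjugation, exactly as the paper does (your explicit justification of the partial-trace cyclicity step is a detail the paper leaves implicit); the only inessential difference is that you derive \zcref{eq:lem2-second} directly from the Haar moment identity, whereas the paper obtains it by summing \zcref{eq:lem2-first} over $j$, which you also note.
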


\begin{proof}
\zcref{eq:lem2-second} follows from \zcref{eq:lem2-first}, by summing both sides from $j=0$ to $k$, and using the fact that \(\sum_{j=0}^{k}\Pi_{\phi,j}^{(k)}=\id\) and \(\sum_{j=0}^{k}p_j=1\). We now prove \zcref{eq:lem2-first}.

Recall that \(\Psi_{\phi}^{(k)}=\left(\id\otimes\langle\phi|^{\otimes(n-k)}\right)\Psi^{(n)}\left(\id\otimes|\phi\rangle^{\otimes(n-k)}\right)\), we have
\begin{equation}
\begin{aligned}
    \int\Pi_{\phi,j}^{(k)}\Psi_{\phi}^{(k)}\,\mathrm d\phi&=\Tr_{n-k}\!\left[\left(\int\Pi_{\phi,j}^{(k)}\otimes|\phi\rangle\!\langle\phi|^{\otimes(n-k)}\,\mathrm d\phi\right)\Psi^{(n)}\right]\\
    &=\Tr_{n-k}\!\left[\left(\int\Pi_{\phi,j}^{(k)}\otimes|\phi\rangle\!\langle\phi|^{\otimes(n-k)}\,\mathrm d\phi\right)P_{\mathrm{sym}}^{(n)}\Psi^{(n)}\right]\\
    &=\Tr_{n-k}\!\left[\frac{1}{d_n}\frac{\binom{k}{j}}{\binom{n}{j}}\binom{D+j-2}{j}P_{\mathrm{sym}}^{(n)}\Psi^{(n)}\right]
    =\frac{p_j}{d_{n-k}}\Psi^{(k)},
\end{aligned}
\end{equation}
where the second and fourth equality follow from \(\Psi^{(n)}=P_{\mathrm{sym}}^{(n)}\Psi^{(n)}\), the third equality follows from \zcref{lem:haar-defect-coefficient}. The expression with the order between $\Pi_{\phi,j}^{(k)}$ and $\Psi_{\phi}^{(k)}$ reversed follows by taking Hermitian conjugate.
\end{proof}

We can now prove the pure state de Finetti representation theorem.

\begin{theorem}
\label{thm:exp-df-states-operator}
Fix integers \(0\leq r\leq k\leq n\), and let \(|\Psi^{(n)}\rangle\in\operatorname{Sym}^{n}(\mathcal H)\) be a symmetric pure state, with ${\rm{dim}}\, \mathcal H = D$. With \(\{p_j\}_{j=0,\cdots,k}\) defined in \zcref{eq:def-pj}, let \(\epsilon_{nkr}\coloneq \sum_{j=r+1}^{k}p_j\), with the convention $\epsilon_{nkk}=0$. Then there exist a probability measure \(\mu_{nkr}\) on the set of pure states in $\mathcal H$, and states \(\omega_{\phi}^{(k,r)}\) that are \(\binom{k}{r}\)-almost-iid along \(|\phi\rangle \in \mathcal H\), such that
\begin{equation}
    \bigl(1-\epsilon_{nkr}\bigr)\Psi^{(k)}\leq\int\omega_{\phi}^{(k,r)}\,\mathrm d\mu_{nkr}(\phi).
\label{eq:state-dF-op-ineq}
\end{equation}
Moreover, \(\epsilon_{nkr}\) can be upper bounded by
\begin{equation}
   \epsilon_{nkr} \leq (n-k+1)^{D-1} \exp\!\left(-\frac{(n-k)(r+1)}{n}\right).
\end{equation}
\end{theorem}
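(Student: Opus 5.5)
The plan is to reduce the operator inequality to the two identities in \zcref{lem:contracted-defect-coefficient}, via a completed-square trick. Fix \(\phi\) and abbreviate \(P\coloneq P_{\phi}^{(k,r)}=\sum_{j\le r}\Pi_{\phi,j}^{(k)}\). Summing \zcref{eq:lem2-first} over \(j=0,\dots,r\) gives
\begin{equation*}
d_{n-k}\int P\Psi_{\phi}^{(k)}\,\mathrm d\phi=d_{n-k}\int\Psi_{\phi}^{(k)}P\,\mathrm d\phi=(1-\epsilon_{nkr})\Psi^{(k)},
\end{equation*}
and \zcref{eq:lem2-second} gives \(d_{n-k}\int\Psi_{\phi}^{(k)}\,\mathrm d\phi=\Psi^{(k)}\).

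For any real \(c\), the operator \((P-c\id)\Psi_{\phi}^{(k)}(P-c\id)\) is positive semidefinite. Expanding it gives the pointwise inequality
\begin{equation*}
P\Psi_{\phi}^{(k)}P\geq c\bigl(P\Psi_{\phi}^{(k)}+\Psi_{\phi}^{(k)}P\bigr)-c^2\Psi_{\phi}^{(k)}.
\end{equation*}
Multiply by \(d_{n-k}\) and integrate over Haar measure. Writing \(A\coloneq d_{n-k}\int P\Psi_{\phi}^{(k)}P\,\mathrm d\phi\), we get \(A\geq\bigl(2c(1-\epsilon_{nkr})-c^2\bigr)\Psi^{(k)}\). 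The choice \(c=1-\epsilon_{nkr}\) yields \(A\geq(1-\epsilon_{nkr})^2\Psi^{(k)}\).

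On the other hand, \(\Tr A=d_{n-k}\int\Tr[P\Psi_{\phi}^{(k)}]\,\mathrm d\phi=(1-\epsilon_{nkr})\Tr\Psi^{(k)}=1-\epsilon_{nkr}\). Define the probability measure \(\mathrm d\mu_{nkr}(\phi)\coloneq d_{n-k}\Tr[P\Psi_{\phi}^{(k)}P]\,\mathrm d\phi/(1-\epsilon_{nkr})\) and the states \(\omega_{\phi}^{(k,r)}\coloneq P\Psi_{\phi}^{(k)}P/\Tr[P\Psi_{\phi}^{(k)}P]\), choosing \(\omega_\phi\) arbitrarily on the null set where the trace vanishes. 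Then \(\int\omega_{\phi}^{(k,r)}\,\mathrm d\mu_{nkr}=A/(1-\epsilon_{nkr})\geq(1-\epsilon_{nkr})\Psi^{(k)}\), which is \zcref{eq:state-dF-op-ineq}. The degenerate case \(\epsilon_{nkr}=1\) is trivial.

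It remains to check that each \(\omega_\phi^{(k,r)}\) is almost-iid. Its support lies in \(\operatorname{ran}P\), which by \zcref{eq:def-defect-projector-range} equals \(\operatorname{span}\mathcal V(\mathcal H^{\otimes k},|\phi\rangle^{\otimes(k-r)})\). Moreover \(P\) commutes with every \(U^\pi\), and \(\supp\Psi_{\phi}^{(k)}\subseteq\operatorname{Sym}^k(\mathcal H)\), so \(\omega_\phi^{(k,r)}\) is permutation invariant. Since the reference \(|\phi\rangle\) is pure, the simplified form of \zcref{def:almost-iid-state} therefore applies.

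For the bound on \(\epsilon_{nkr}=\sum_{j>r}p_j\), I would start from the explicit formula \zcref{eq:def-pj}. First bound \(d_{n-k}\leq(n-k+1)^{D-1}\). Next compare \(\binom{D+j-2}{j}\le\binom{n+D-1}{D-1}=d_n\) for \(j\le n\), which kills the factor \(1/d_n\). Finally use \(\binom{k}{j}/\binom{n}{j}=\prod_{i<j}\frac{k-i}{n-i}\leq(k/n)^j\). This gives
\begin{equation*}
\epsilon_{nkr}\leq(n-k+1)^{D-1}\sum_{j>r}(k/n)^j.
\end{equation*}
The main obstacle is getting exactly the stated constant. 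The naive geometric sum produces an extra factor \(n/(n-k)\) in front of \((k/n)^{r+1}\leq e^{-(n-k)(r+1)/n}\). To remove it, I would instead bound the tail of the beta-binomial representation obtained in the normalization computation for \(\sum_j p_j\). There, \(\epsilon_{nkr}=\Pr[J\geq r+1]\) with \(J\mid x\sim\mathrm{Bin}(k,x)\) and \(x\) having density \(\propto x^{D-2}(1-x)^{n-k}\). Using the union bound \(\Pr[J\ge r+1\mid x]\le\binom{k}{r+1}x^{r+1}\), which gives a single term \(\binom{k}{r+1}\) times a beta integral, one can then redo the same estimates on only one term instead of a geometric tail.
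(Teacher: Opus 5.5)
Your derivation of the operator inequality is correct and is exactly the paper's argument: summing \zcref{eq:lem2-first} over \(j\le r\), expanding \((P-c\id)\Psi_\phi^{(k)}(P-c\id)\geq0\) with \(c=1-\epsilon_{nkr}\), normalizing, and checking permutation invariance and support. The gap is in the bound on \(\epsilon_{nkr}\), because the union-bound repair you propose cannot give the stated estimate. With \(x\sim\mathrm{Beta}(D-1,n-k+1)\), it yields
\[
\epsilon_{nkr}\leq\binom{k}{r+1}\mathbb{E}\bigl[x^{r+1}\bigr]=\binom{D+r-1}{r+1}\prod_{i=0}^{r}\frac{k-i}{n-k+D+i},
\]
and this does not decay in \(r\) when \(k\) is comparable to \(n-k\). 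For instance, take \(D=2\), \(n=2k\) and \(r+1\approx4\ln(k+1)\). The right-hand side is then \(\prod_{i=0}^{r}\frac{k-i}{k+2+i}\to1\), while the target \((k+1)e^{-(r+1)/2}\approx(k+1)^{-1}\to0\). The reason is that this beta-binomial law is close to a geometric law with parameter \(1/2\): its tail is about \(2^{-r}\), but its binomial moment \(\mathbb{E}\binom{J}{r+1}\) is about \(1\). Note also that the resulting expression contains no factor \(1/d_n\) for your earlier estimates to cancel, so ``redoing the same estimates on one term'' has nothing to act on.

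The fix is to put the maximum on the other factor in your first attempt. There you bounded \(\binom{D+j-2}{j}\) termwise by \(d_n\) and were then forced to sum \((k/n)^j\) geometrically, which produced the \(n/(n-k)\) loss. Instead, use that \(\binom{k}{j}/\binom{n}{j}\) is decreasing in \(j\), so for \(j\geq r+1\) it is at most \(\binom{k}{r+1}/\binom{n}{r+1}\leq(k/n)^{r+1}\). Then sum the remaining factor exactly by the hockey-stick identity, \(\sum_{j=0}^{k}\binom{D+j-2}{j}=d_k\leq d_n\). This gives \(\epsilon_{nkr}\leq\frac{d_{n-k}d_k}{d_n}(k/n)^{r+1}\leq d_{n-k}(k/n)^{r+1}\) with no \(n/(n-k)\) loss. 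The bounds \(d_{n-k}\leq(n-k+1)^{D-1}\) and \((k/n)^{r+1}=\bigl(1-\frac{n-k}{n}\bigr)^{r+1}\leq e^{-(n-k)(r+1)/n}\) then finish the proof.
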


\begin{proof}
Performing the sum $\sum_{j=0}^r$ on both sides of \zcref{eq:lem2-first} in \zcref{lem:contracted-defect-coefficient}, and using \zcref{eq:def-defect-projector-range}, we obtain
\begin{equation}
 d_{n-k}\int P_{\phi}^{(k,r)}\Psi_{\phi}^{(k)}\,\mathrm d\phi
 =d_{n-k}\int\Psi_{\phi}^{(k)}P_{\phi}^{(k,r)}\,\mathrm d\phi=\bigl(1-\epsilon_{nkr}\bigr)\Psi^{(k)},
 \label{eq:state-cross-term}
\end{equation}
while \zcref{eq:lem2-second} of \zcref{lem:contracted-defect-coefficient} states that 
\begin{equation}
    d_{n-k}\int\Psi_{\phi}^{(k)}\,\mathrm d\phi=\Psi^{(k)}.
\label{eq:state-no-projection-term}
\end{equation}

Consider the following operator identity, where $\lambda\in\mathbb{R}$ is a constant to be optimized over later:
\begin{equation}
\bigl(P_{\phi}^{(k,r)}-\lambda\id\bigr)
 \Psi_{\phi}^{(k)}
 \bigl(P_{\phi}^{(k,r)}-\lambda\id\bigr)=
 P_{\phi}^{(k,r)}
 \Psi_{\phi}^{(k)}
 P_{\phi}^{(k,r)}
 -\lambda P_{\phi}^{(k,r)}\Psi_{\phi}^{(k)}
 -\lambda\Psi_{\phi}^{(k)}P_{\phi}^{(k,r)}
 +\lambda^2\Psi_{\phi}^{(k)}.
\end{equation}
Positivity of the left-hand side, together with \zcref{eq:state-cross-term} and \zcref{eq:state-no-projection-term}, yield (after integrating over $\phi$ with respect to the Haar measure \(\mathrm d\phi\), and multiplying with $d_{n-k}$) 
the operator inequality
\begin{equation}
 0\leq d_{n-k}\int P_{\phi}^{(k,r)}\Psi_{\phi}^{(k)}P_{\phi}^{(k,r)}\,\mathrm d\phi-2\lambda\bigl(1-\epsilon_{nkr}\bigr)\Psi^{(k)}+\lambda^2\Psi^{(k)}.
\end{equation}
To make the inequality as tight as possible, we maximize $2\lambda\bigl(1-\epsilon_{nkr}\bigr)-\lambda^2$, giving the optimal value $\lambda=\bigl(1-\epsilon_{nkr}\bigr)$. This gives
\begin{equation}
\label{eq:intermediate-op-ineq}
 \bigl(1-\epsilon_{nkr}\bigr)^2\Psi^{(k)}\leq d_{n-k}\int P_{\phi}^{(k,r)}\Psi_{\phi}^{(k)}P_{\phi}^{(k,r)}\,\mathrm d\phi.
\end{equation}

Define the normalized states and the probability measure\footnote{If \(\Tr\left(P_{\phi}^{(k,r)}\Psi_{\phi}^{(k)}\right)=0\), simply remove these \(\phi\)'s from the integral as they do not contribute. Also we may assume \(\epsilon_{nkr}<1\), otherwise the theorem follows trivially.}
\begin{equation}
\label{eq:state-prob-measure}
    \omega_{\phi}^{(k,r)}\coloneq \frac{P_{\phi}^{(k,r)}\Psi_{\phi}^{(k)}P_{\phi}^{(k,r)}}{\Tr\left(P_{\phi}^{(k,r)}\Psi_{\phi}^{(k)}\right)},\qquad \mathrm d\mu_{nkr}(\phi)\coloneq \frac{d_{n-k}}{1-\epsilon_{nkr}}\Tr\left(P_{\phi}^{(k,r)}\Psi_{\phi}^{(k)}\right)\,\mathrm d\phi.
\end{equation}
The fact that \(\mu_{nkr}\) is normalized is again the consequence of \zcref{eq:state-cross-term}. Moreover, \(\omega_{\phi}^{(k,r)}\) is by construction permutation invariant and supported on \(\operatorname{span}\mathcal V(\mathcal H^{\otimes k},|\phi\rangle^{\otimes k-r})\), and is therefore \(\binom{k}{r}\)-almost-iid along \(|\phi\rangle\). Finally, by \zcref{eq:intermediate-op-ineq},
\begin{equation}
    \int\omega_{\phi}^{(k,r)}\,\mathrm d\mu_{nkr}(\phi)=\frac{d_{n-k}}{1-\epsilon_{nkr}}\int P_{\phi}^{(k,r)}\Psi_{\phi}^{(k)}P_{\phi}^{(k,r)}\,\mathrm d\phi\geq\bigl(1-\epsilon_{nkr}\bigr)\Psi^{(k)},
\end{equation}
which proves \zcref{eq:state-dF-op-ineq} in the claim. The bound on \(\epsilon_{nkr}\) can be obtained as follows,
\begin{equation}
    \epsilon_{nkr}
=\frac{d_{n-k}}{d_n}
  \sum_{j=r+1}^{k}
  \frac{\binom{k}{j}}{\binom{n}{j}}
  \binom{D+j-2}{j}
\leq \frac{d_{n-k}}{d_n}
  \left(\frac{k}{n}\right)^{r+1}
  \sum_{j=0}^{k}\binom{D+j-2}{j}.
\end{equation}
The sum can be evaluated explicitly by using the Hockey-stick identity\footnote{For \(m\le n\), \(\sum_{i=m}^{n} \binom{i}{m} = \binom{n+1}{m+1}\).} to be $d_k=\binom{D+k-1}{D-1}$. The \((k/n)^{r+1}\) term supplies the exponential decay by observing \(\left(1-x\right)^{a}\le e^{-ax}\) for \(0\le x\le1, a>0\). Together,
\begin{equation}
    \epsilon_{nkr}\le \frac{d_{n-k}d_k}{d_n}\left(\frac{k}{n}\right)^{r+1}
\leq
d_{n-k}\left(1-\frac{n-k}{n}\right)^{r+1}
\leq
  (n-k+1)^{D-1}
  \exp\!\left(-\frac{(n-k)(r+1)}{n}\right),
\end{equation}
proving the theorem.
\end{proof}

We can set \(r=0\) in our \zcref{thm:exp-df-states-operator} and easily recover the standard de Finetti representation using mixture of iid states, in the stronger form of an operator inequality.

\begin{corollary}[(Standard de Finetti representation for states)]
\label{cor:std-df-state}
Fix integers \(0\leq k\leq n\) and let \(|\Psi^{(n)}\rangle\in\operatorname{Sym}^{n}(\mathcal H)\) be a symmetric pure state. Then there exists a probability measure \(\mu_{n}\) on the set of pure states in $\mathcal H$, such that
\begin{equation}
    \left(1-\frac{Dk}{n}\right)\Psi^{(k)}
    \leq
    \int
    |\phi\rangle\!\langle\phi|^{\otimes k}
    \,\mathrm d\mu_{n}(\phi).
\end{equation}
More explicitly, the probability measure can be chosen as
\begin{equation}
\mathrm d\mu_{n}(\phi)
=
d_n\,
\langle\phi|^{\otimes n}
\Psi^{(n)}
|\phi\rangle^{\otimes n}
\,\mathrm d\phi.
\end{equation}
\end{corollary}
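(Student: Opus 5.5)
We specialize \zcref{thm:exp-df-states-operator} to \(r=0\). For \(r=0\) the defect projector is \(P_{\phi}^{(k,0)}=\Pi_{\phi,0}^{(k)}=|\phi\rangle\!\langle\phi|^{\otimes k}\). The unnormalized operator appearing in the proof then collapses to a multiple of an iid pure state:
\begin{equation}
P_{\phi}^{(k,0)}\Psi_{\phi}^{(k)}P_{\phi}^{(k,0)}=\langle\phi|^{\otimes n}\Psi^{(n)}|\phi\rangle^{\otimes n}\,|\phi\rangle\!\langle\phi|^{\otimes k},
\end{equation}
so that \(\omega_{\phi}^{(k,0)}=|\phi\rangle\!\langle\phi|^{\otimes k}\). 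In the proof of \zcref{thm:exp-df-states-operator}, the measure is \(\mathrm d\mu_{nk0}(\phi)=\frac{d_{n-k}}{1-\epsilon_{nk0}}\langle\phi|^{\otimes n}\Psi^{(n)}|\phi\rangle^{\otimes n}\,\mathrm d\phi\). For \(k=0\) this gives \(\epsilon=0\) and \(d_n\) in place of \(d_{n-k}\), but in general it differs from the stated \(\mu_n\). So I will not take the theorem's constant as is. Instead I will redo the final normalization directly with the explicit measure.

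The starting point is \zcref{eq:intermediate-op-ineq} for \(r=0\):
\begin{equation}
(1-\epsilon_{nk0})^2\Psi^{(k)}\leq d_{n-k}\int \langle\phi|^{\otimes n}\Psi^{(n)}|\phi\rangle^{\otimes n}|\phi\rangle\!\langle\phi|^{\otimes k}\,\mathrm d\phi=\frac{d_{n-k}}{d_n}\int|\phi\rangle\!\langle\phi|^{\otimes k}\,\mathrm d\mu_n(\phi).
\end{equation}
The measure \(\mu_n\) is normalized by \zcref{eq:haar-moment-symmetric-subspace}, since \(d_n\int\langle\phi|^{\otimes n}\Psi^{(n)}|\phi\rangle^{\otimes n}\mathrm d\phi=\Tr P_{\mathrm{sym}}^{(n)}\Psi^{(n)}=1\). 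Here \(1-\epsilon_{nk0}=p_0=d_{n-k}/d_n\), so the inequality reads \(p_0^2\Psi^{(k)}\leq p_0\int\phi^{\otimes k}\mathrm d\mu_n\). This yields \(p_0\Psi^{(k)}\leq\int\phi^{\otimes k}\mathrm d\mu_n\), with \(p_0=\binom{n-k+D-1}{D-1}/\binom{n+D-1}{D-1}\).

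It remains to show \(p_0\geq 1-Dk/n\). This is the only real estimate. Write
\begin{equation}
p_0=\prod_{i=1}^{D-1}\frac{n-k+i}{n+i}=\prod_{i=1}^{D-1}\Bigl(1-\frac{k}{n+i}\Bigr)\geq\Bigl(1-\frac{k}{n}\Bigr)^{D-1}.
\end{equation}
By Bernoulli's inequality this is at least \(1-(D-1)k/n\geq1-Dk/n\). If \(k>n\) makes a factor negative, the claim is trivial, but that case cannot occur because \(k\leq n\). The only point needing care is the measure mismatch discussed above, and the direct computation handles it. The classical and pure cases in \zcref{thm:de Finetti mixed states} follow by the reduction in \zcref{sec:state-df-setup}, with \(D\) the dimension of the relevant space.
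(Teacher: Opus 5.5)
Your argument is correct and is essentially the paper's proof. You specialize \zcref{thm:exp-df-states-operator} to $r=0$, use $P_\phi^{(k,0)}=|\phi\rangle\!\langle\phi|^{\otimes k}$ and $1-\epsilon_{nk0}=p_0=d_{n-k}/d_n$, and bound $p_0\geq 1-Dk/n$; your direct product-plus-Bernoulli proof of that bound even gives $1-(D-1)k/n$, where the paper cites CKMR. However, the ``measure mismatch'' you raise does not exist: because $1-\epsilon_{nk0}=d_{n-k}/d_n$, the theorem's measure $\frac{d_{n-k}}{1-\epsilon_{nk0}}\langle\phi|^{\otimes n}\Psi^{(n)}|\phi\rangle^{\otimes n}\,\mathrm d\phi$ equals $\mu_n$ exactly for every $k$, so your re-derivation from \zcref{eq:intermediate-op-ineq} is harmless but unnecessary.
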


\begin{proof}
Set \(r=0\) in \zcref{thm:exp-df-states-operator} and notice that every \(\binom{k}{0}\)-almost-iid state along \(\lvert\phi\rangle\) equals \(\lvert\phi\rangle\!\langle\phi\rvert^{\otimes k}\). Moreover,
\begin{equation}
    1-\epsilon_{n,k,0}
    =p_0
    =\frac{d_{n-k}}{d_n}\ge 1-\frac{Dk}{n},
\end{equation}
where the last bound is a standard estimate, which can be found in~\cite[Corollary~II.3]{christandl2007one}. Setting \(r=0\) in \zcref{eq:state-prob-measure} gives the explicit probability measure, notice that it does not depend on $k$.
\end{proof}

\zcref{thm:de Finetti mixed states} in the introduction follows directly from \zcref{thm:exp-df-states-operator} and \zcref{cor:std-df-state}.

\begin{proof}[Proof of \zcref{thm:de Finetti mixed states}]
Let \(\rho^{(k)}\coloneq\Tr_{n-k}\rho^{(n)}\). By the symmetric-purification
argument in \zcref{sec:state-df-setup}, there exist \(E\simeq\mathcal H\)
and a pure state
\(\Psi^{(n)}\in\operatorname{Sym}^{n}(\mathcal H\otimes E)\), with local
dimension \(D=d^2\), such that
\(\Tr_{E^n}\Psi^{(n)}=\rho^{(n)}\). Applying \zcref{thm:exp-df-states-operator} to
\(\Psi^{(n)}\) gives
\begin{equation}
    (1-\varepsilon_{nkr})\Psi^{(k)}
    \leq
    \int
        \omega_{\phi}^{(k,r)}
        \,\mathrm d\mu_{nkr}(\phi),
    \qquad
    \varepsilon_{nkr}
    \leq
    (n-k+1)^{d^2-1}
    \exp\!\left(
        -\frac{(n-k)(r+1)}{n}
    \right).
\end{equation}
Taking the partial trace over \(E^k\), which preserves the operator inequality, yields
\begin{equation}
    (1-\varepsilon_{nkr})\rho^{(k)}
    \leq
    \int
        \Tr_{E^k}\omega_{\phi}^{(k,r)}
        \,\mathrm d\mu_{nkr}(\phi).
\end{equation}
Writing \(\sigma_{\phi}\coloneq\Tr_E\phi\), \zcref{def:almost-iid-state} implies that
\(\Tr_{E^k}\omega_{\phi}^{(k,r)}\) is
\(\binom{k}{r}\)-almost-iid along \(\sigma_{\phi}\). Pushing the measure forward under \(\phi\mapsto\sigma_{\phi}\), and conditionally averaging over states corresponding to the same \(\sigma_{\phi}\), gives the claimed measure on \(\mathcal D(\mathcal H)\); this averaging preserves the almost-iid property because, for a fixed reference state, the set of almost-iid states is convex \cite[Remark 2.4 (g)]{mazzola2026almost}.

For \(r=0\), \zcref{cor:std-df-state} applied to \(\mathcal H\otimes E\) gives \(\varepsilon_{nk0}\leq Dk/n=d^2k/n\); after tracing out \(E^k\), the
pure iid atoms \(\phi^{\otimes k}\) become mixed iid atoms
\((\Tr_E\phi)^{\otimes k}\). If \(\rho^{(n)}\) is classical, write
\(\rho^{(n)}=\sum_{\boldsymbol x}p_{\boldsymbol x}
\lvert\boldsymbol x\rangle\!\langle\boldsymbol x\rvert\) and consider
the symmetric vector
\(\lvert\Psi^{(n)}\rangle
=\sum_{\boldsymbol x}\sqrt{p_{\boldsymbol x}}
\lvert\boldsymbol x\rangle\in\mathcal H^{\otimes n}\).
Applying \zcref{cor:std-df-state} with \(D=d\), followed by dephasing in
this basis, gives the asserted bounds and a measure supported on
classical states.
\end{proof}

Finally, as we have pointed out in the introduction, the familiar trace-distance de Finetti theorems, both standard and exponential, can be easily recovered from the operator inequality version, in fact one recovers de Finetti theorems for a very wide class of distances.
\begin{corollary}[(de Finetti theorems in other distances)]
\label{cor:trace-distance-df-thm}
Let $d(ρ, σ)$ be any distance measure such that $σ \geq λρ$ implies $d(ρ, σ) \leq f(λ)$ for a function $f(λ)$. Then, under the assumptions of \zcref{thm:exp-df-states-operator}, 
\begin{equation}
    d\left(
       \Psi^{(k)},
       \int\omega_{\phi}^{(k,r)}
       \,\mathrm d\mu_{nkr}(\phi)
    \right)
    \leq
    f\left(1 - (n-k+1)^{D-1}
    \exp\!\left(
       -\frac{(n-k)(r+1)}{n}
    \right)\right).
\end{equation}
And for \(r=0\), from \zcref{cor:std-df-state},
\begin{equation}
d\left(
       \Psi^{(k)}, 
       \int\lvert\phi\rangle\!\langle\phi\rvert^{\otimes k}
       \,\mathrm d\mu_{n}(\phi)
    \right)
    \leq f\left(1 - \frac{Dk}{n}\right).
\end{equation}
In particular this applies when
\begin{enumerate}
    \item $d(ρ, σ) = \vertiii{ρ - σ}$ for any norm $\vertiii{\cdot}$, in which case $f(λ) = (1 - λ) \sup_{ρ, σ \in \mathcal{D}(\mathcal{H}^{\otimes k})} \vertiii{ρ - σ}$. If $\vertiii{\cdot} = \norm{\cdot}{p}$ is a Schatten norm ($p \in [1, \infty]$), then $f_p(λ) = 2^{1/p} (1 - λ)$.
    \item $d(ρ, σ) = \mathbb{D}(ρ\|σ)$ for a divergence $\mathbb{D}$ which is anti-monotonous in the second argument (i.e.\ $σ_1 \leq σ_2$ implies $\mathbb{D}(ρ\|σ_1) \geq \mathbb{D}(ρ\|σ_2)$) and satisfies the scaling property $D(ρ\|λρ) = -\log(λ)$. This is for example the case for any quantum R\'enyi divergence as axiomatically characterized in \cite{tomamichel_quantum_2016}. In this case $f(λ) = - \log(λ)$. 
    \item Smoothed versions of the divergences in 2., in particular $\mathbb{D} = D_{\max}^{ε}$ with smoothing over unnormalized states (in trace-distance or purified distance) in which case $f(λ) = -\log(λ) + \log(1 - ε)$, or $\mathbb{D} = D_H^ε$ in which case $f(λ) = -\log(λ) - \log(1 - ε)$ (see e.g.\ \cite{regula2026tight} for definitions and properties of these quantities). 
    
\end{enumerate}
\end{corollary}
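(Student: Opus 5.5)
The general statement follows by plugging the operator inequalities of \zcref{thm:exp-df-states-operator} and \zcref{cor:std-df-state} into the defining property of $d$. Write $\sigma\coloneq\int\omega_{\phi}^{(k,r)}\,\mathrm d\mu_{nkr}(\phi)$. The theorem gives $\sigma\geq(1-\epsilon_{nkr})\Psi^{(k)}$, and since $\sigma\geq0$ this persists for every $\lambda\leq 1-\epsilon_{nkr}$. In particular it holds for $\lambda=1-(n-k+1)^{D-1}\exp(-(n-k)(r+1)/n)$, which lower-bounds $1-\epsilon_{nkr}$ by the theorem's estimate. Taken literally, the hypothesis only gives $d\leq f(\lambda)$ at whatever $\lambda$ we establish $\sigma\geq\lambda\rho$ for, and for that particular $\lambda$ the hypothesis directly yields the claim. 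No monotonicity of $f$ is needed. If that $\lambda$ is negative (the bound is trivial), $\sigma\geq\lambda\rho$ still holds, so the statement remains formally true. The $r=0$ case is identical, using $\lambda=1-Dk/n$ from \zcref{cor:std-df-state}.

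For item 1, suppose $\sigma\geq\lambda\rho$ with $\lambda\in[0,1]$ and both states normalized. Set $\tau\coloneq(\sigma-\lambda\rho)/(1-\lambda)$. This is positive with unit trace, so $\tau$ is a state, and $\sigma=\lambda\rho+(1-\lambda)\tau$. Then $\rho-\sigma=(1-\lambda)(\rho-\tau)$, which gives $\vertiii{\rho-\sigma}\leq(1-\lambda)\sup\vertiii{\rho-\tau}$. For Schatten norms I would bound $\norm{\rho-\tau}{p}$. The difference of two states splits as $X_+-X_-$ with orthogonal supports and $\Tr X_\pm\leq1$, so its eigenvalues lie in $[-1,1]$ with absolute sum at most $2$. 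Hence $\norm{X}{p}^p\leq\norm{X}{1}\norm{X}{\infty}^{p-1}\leq2$, giving $f_p(\lambda)=2^{1/p}(1-\lambda)$.

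For item 2, anti-monotonicity in the second argument gives $\mathbb D(\rho\|\sigma)\leq\mathbb D(\rho\|\lambda\rho)$, and the scaling property turns this into $-\log\lambda$. Sandwiched and Petz R\'enyi divergences, among others, satisfy both properties. Anti-monotonicity follows from data processing together with the dominance axiom in the Tomamichel axiomatization, and the scaling law from $\mathbb D(\rho\|c\sigma)=\mathbb D(\rho\|\sigma)-\log c$.

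For item 3, the step I expect to require the most care is bookkeeping the smoothing conventions. For $D_{\max}^{\varepsilon}$, choose $\tilde\rho=(1-\varepsilon)\rho$. This is subnormalized and lies within trace distance $\varepsilon$ of $\rho$, and also within purified distance: its fidelity with $\rho$ is $\sqrt{1-\varepsilon}$, so I would check it is $\varepsilon$-close under the definition used in \cite{regula2026tight}. Then $\tilde\rho\leq\frac{1-\varepsilon}{\lambda}\sigma$, which gives $D_{\max}^{\varepsilon}\leq-\log\lambda+\log(1-\varepsilon)$. For $D_H^{\varepsilon}$, any test $0\leq Q\leq\id$ with $\Tr Q\rho\geq1-\varepsilon$ satisfies $\Tr Q\sigma\geq\lambda(1-\varepsilon)$. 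Therefore $D_H^{\varepsilon}=-\log\inf\Tr Q\sigma\leq-\log\lambda-\log(1-\varepsilon)$.
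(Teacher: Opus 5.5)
Your proposal is correct and follows the paper's proof. In each case you plug the operator inequality into the defining property of $d$. For norms you write $\sigma=\lambda\rho+(1-\lambda)\tau$; for divergences you combine anti-monotonicity with scaling; for the smoothed quantities you use the feasible point $(1-\varepsilon)\rho$, respectively the test bound. The paper's text says $\tilde\rho=\lambda\rho$, evidently meaning $(1-\varepsilon)\rho$.

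Choosing $\lambda$ directly as the explicit lower bound on $1-\epsilon_{nkr}$ is slightly cleaner than the paper's $\lambda=1-\epsilon_{nkr}$, which tacitly needs $f$ to be nonincreasing. Note, however, that passing to a smaller $\lambda$ uses $\Psi^{(k)}\geq0$, not $\sigma\geq0$.

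The purified-distance check you deferred does fail in the standard convention. There $P=\sqrt{1-F^2}$ with generalized fidelity $F(\tilde\rho,\rho)=\|\sqrt{\tilde\rho}\sqrt{\rho}\|_1+\sqrt{(1-\Tr\tilde\rho)(1-\Tr\rho)}$, and one gets $P\bigl((1-\varepsilon)\rho,\rho\bigr)=\sqrt{\varepsilon}>\varepsilon$. No other point in the ball does better. Since $\Tr\rho=1$, the condition $\tilde\rho\leq c\rho$ forces $F(\tilde\rho,\rho)\leq\sqrt{c}$ by operator monotonicity of the square root. Hence in that convention $D_{\max}^{\varepsilon}(\rho\|\rho)=\log(1-\varepsilon^2)>\log(1-\varepsilon)=f(1)$.

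So the stated $f$ for $D_{\max}^{\varepsilon}$ holds for trace-distance smoothing, or for a fidelity ball $F^2\geq1-\varepsilon$ (purified distance at most $\sqrt{\varepsilon}$). For a purified-distance ball of radius $\varepsilon$ it must be weakened to $-\log\lambda+\log(1-\varepsilon^2)$. This is an imprecision in the statement, which the paper's proof also glosses over, not a flaw in your argument.
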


\begin{proof}
For general $d(ρ, σ)$, the operator inequality \zcref{eq:state-dF-op-ineq} in \zcref{thm:exp-df-states-operator} directly gives the desired relations with $λ = 1 - ε_{nkr}$. It remains to show that the given examples for $d(ρ, σ)$ satisfy the required properties. For the norm case, $λρ \leq σ$ implies $σ = λρ + (1 - λ)ω$ for some state $ω$ and so $\vertiii{ρ - σ} = \vertiii{(1 - λ)ρ - (1 - λ)ω} = (1 - λ) \vertiii{ρ - ω} \leq f(λ)$. The divergence case 2.\ is obvious. For 3., remember that $D_{\max}^{ε}(ρ\|λρ) = \inf_{\tilde{ρ} \approx^ε ρ} D_{\max}(\tilde{ρ}\|λρ)$, where the closeness $\tilde{ρ} \approx^ε ρ$ can be defined as either in trace-distance or purified distance, and in both cases $\tilde{ρ} = λρ$ is a feasible (sub-normalized) state and gives the desired bound. For $D_H^ε(ρ\|λρ) = - \log\inf_{0 \leq M \leq \mathbbm{1}}\Set{\Tr(M λ ρ) | \Tr(M ρ)\geq 1 - ε}$, $M = (1 - ε) \mathbbm{1}$ is feasible and also optimal.  
\end{proof}

\section{Quantum channel de Finetti theorems as operator inequalities}
\label{sec:channel-df}

\subsection{Setup for channel de Finetti theorems}
\label{sec:channel-df-setup}
Analogously to state de Finetti theorems, which represent marginals of permutation-invariant states, finite quantum de Finetti theorems for channels concern marginals \(\mathcal N^{(k)}\) of permutation-covariant channels
\(\mathcal N^{(n)}\in\mathrm{CPTP}(A^n,B^n)\), where
\(\dim A=d_A\) and \(\dim B=d_B\). As noted in the introduction, however, while the marginal of a state is unambiguously defined by a partial trace, the marginal of a channel requires additional care: discarding some input and output systems does not generally produce an operationally meaningful channel from the remaining inputs to the remaining outputs because the remaining outputs may still depend on the states at the discarded inputs.

For \(0\leq k\leq n\), the simplest convention is to define the reduced channel by requiring its Choi state to equal the corresponding marginal of the Choi state of the full channel\footnote{For notational convenience, throughout this section we regard the channels as maps from \(A'^{\,n}\) to \(B^n\), where \(A'\simeq A\), so that their Choi states \(J^{(n)}\coloneq J(\mathcal N^{(n)})\) belong to \(\mathcal D((AB)^n)\). Similarly \(J^{(k)}\coloneq J(\mathcal N^{(k)})\in\mathcal D((AB)^k)\).}:
\begin{equation}
    J^{(k)}_{A^kB^k}
    =
    \Tr_{A_{k+1}\cdots A_nB_{k+1}\cdots B_n}
    J^{(n)}_{A^nB^n}.
    \label{eq:reduced-choi-state}
\end{equation}
At the channel level, this amounts to feeding the maximally mixed state \(\tau_{A'}\) into each discarded input and tracing out the corresponding outputs, defining a reduced channel from \(A_1\cdots A_k\) to \(B_1\cdots B_k\): \(\forall \rho_{A'^k}\in\mathcal{D}(A'^k)\),
\begin{equation}
    \mathcal N^{(k)}_{A'^k\to B^k}(\rho_{A'^k})
    \coloneq
    \Tr_{B_{k+1}\cdots B_n}
    \mathcal N^{(n)}_{A'^n\to B^n}
    \left(
        \rho_{A'^k}\otimes\tau_{A'}^{\otimes(n-k)}
    \right).
    \label{eq:reduced-channel}
\end{equation}
Although natural from the Choi-state perspective, this prescription singles out the maximally mixed state without an operational reason. A more meaningful approach is to impose a no-signalling condition from the discarded inputs to the remaining outputs. The remaining outputs then depend only on the remaining inputs, so the same reduced channel is obtained irrespective of the states supplied at the discarded input positions.

For \(n\in\mathbb N\), let \(\mathcal N^{(n)}\in\mathrm{CPTP}(A'^n,B^n)\). For \(S\subseteq[n]\), denote \(A'_S=\bigotimes_{j\in S}A'_j\), \(B_S=\bigotimes_{j\in S}B_j\). We impose the following conditions on the channel \(\mathcal N^{(n)}\) under consideration:
\begin{enumerate}[label=(\roman*)]
    \item \emph{Permutation covariance:} \(\forall\pi\in S_n, \rho_{A'^n}\in\mathcal{D}(A'^n)\),
    \begin{equation}
        \mathcal N^{(n)}
        \left(
            U_{A'^n}^{\pi}\rho_{A'^n}
            \bigl(U_{A'^n}^{\pi}\bigr)^{\dagger}
        \right)
        =
        U_{B^n}^{\pi}\mathcal N^{(n)}(\rho_{A'^n})
        \bigl(U_{B^n}^{\pi}\bigr)^{\dagger};
        \label{eq:permutation-covariance}
    \end{equation}

    \item \emph{No-signalling:} \(\forall\rho_{A'^n}\in\mathcal{D}(A'^n)\),
    \begin{equation}
        \Tr_{B_n}\mathcal N^{(n)}(\rho_{A'^n})
        =
        \Tr_{B_n}\mathcal N^{(n)}
        \left(
            \Tr_{A'_n}(\rho_{A'^n})\otimes\tau_{A'_n}
        \right),
        \label{Eq:no-signalling-last-position}
    \end{equation}
    i.e., \(A'_n\) does not signal to \(B_{1}\cdots B_{n-1}\).
\end{enumerate}

At the Choi-state level, permutation covariance of the channel is equivalent to permutation invariance of the Choi state:
\begin{equation}
    \bigl(U_{A^n}^{\pi}\otimes U_{B^n}^{\pi}\bigr)
    J^{(n)}
    \bigl(U_{A^n}^{\pi}\otimes U_{B^n}^{\pi}\bigr)^\dagger
    =
    J^{(n)},
    \qquad \forall\pi\in S_n,
    \label{eq:choi-permutation-invariance}
\end{equation}
while the no-signalling condition is equivalent to
\begin{equation}
    \Tr_{B_n}J^{(n)}
    =
    \Tr_{A_nB_n}J^{(n)}\otimes\tau_{A_n}.
\end{equation}

The no-signalling condition stated above only assumes that the input supplied to \(A'_n\) cannot influence the reduced output on \(B_{[n-1]}\). We now show that this, together with permutation covariance, implies the corresponding no-signalling condition on arbitrary subsets of systems. For \(j\in[n]\), introduce the replacement map \(\mathcal R_j\) that replaces the \(j\)-th input by the maximally mixed state:
\begin{equation}
    \mathcal R_j(\rho_{A'^n})
    \coloneq
    \Tr_{A'_j}(\rho_{A'^n})\otimes\tau_{A'_j}.
\end{equation}
For each \(j\), let \(\sigma=(j\,n)\) be the permutation exchanging \(j\) and \(n\), use the abbreviation \(U_{A'}=U_{A'^n}^{\sigma}\) and \(U_B= U_{B^n}^{\sigma}\). Then, for any \(X_{B^n}\in\mathcal{L}(B^n)\),
\begin{equation}
    \Tr_{B_n}\!\left(U_BX_{B^n}U_B^\dagger\right)
    =
    \Tr_{B_j}(X_{B^n}),
    \qquad
    \mathcal R_n\!\left(U_{A'}\rho U_{A'}^\dagger\right)
    =
    U_{A'}\mathcal R_j(\rho)U_{A'}^\dagger.
\end{equation}
Applying permutation covariance and \zcref{Eq:no-signalling-last-position} gives, \(\forall \rho_{A'^n}\in\mathcal{D}(A'^n)\),
\begin{equation}
\label{eq:no-sig-for-j}
\begin{aligned}
    \Tr_{B_j}\mathcal N^{(n)}(\rho_{A'^n})
    &=
    \Tr_{B_n}\!\left[
        U_B\mathcal N^{(n)}(\rho_{A'^n})U_B^\dagger
    \right] \\
    &=
    \Tr_{B_n}\mathcal N^{(n)}
    \left(U_{A'}\rho_{A'^n} U_{A'}^\dagger\right) \\
    &=
    \Tr_{B_n}\mathcal N^{(n)}
    \left(
        \mathcal R_n
        \left(U_{A'}\rho_{A'^n} U_{A'}^\dagger\right)
    \right) \\
    &=
    \Tr_{B_n}\mathcal N^{(n)}
    \left(
        U_{A'}\mathcal R_j(\rho_{A'^n})U_{A'}^\dagger
    \right) \\
    &=
    \Tr_{B_n}\!\left[U_B \mathcal N^{(n)}
    \left(
        \mathcal R_j(\rho_{A'^n})
    \right)U_{B}^\dagger\right] \\
    &=
    \Tr_{B_j}\mathcal N^{(n)}
    \left(\mathcal R_j(\rho_{A'^n})\right).
\end{aligned}
\end{equation}
Operationally this means no-signalling from \(A_j\) to \(B_{[n]\setminus j}\). Let \(S=\{j_1,\ldots,j_s\}\subseteq[n]\) and apply \zcref{eq:no-sig-for-j} successively to \(j_1,\ldots,j_s\) yields
\begin{equation}
\begin{aligned}
    \Tr_{B_S}\mathcal N^{(n)}(\rho_{A'^n})
    =
    \Tr_{B_{S\setminus \{j_s\}}}\Tr_{B_{j_s}}\mathcal N^{(n)}
    \bigl(\rho_{A'^n}\bigr)
    &=
    \Tr_{B_{S}}\mathcal N^{(n)}
    \bigl(\mathcal R_{j_s}(\rho_{A'^n}\bigr)\bigr)
    =
    \cdots\\
    &=
    \Tr_{B_S}\mathcal N^{(n)}
    \bigl(\mathcal R_{j_1}\cdots\mathcal R_{j_s}(\rho_{A'^n})\bigr)\\
    &=
    \Tr_{B_S}\mathcal N^{(n)}
    \left(
        \Tr_{A'_S}(\rho_{A'^n})\otimes\tau_{A'_S}
    \right).
\end{aligned}
\label{eq:no-signalling-arbitrary-subset}
\end{equation}
This establishes the strongest no-signalling: for any \(S\subseteq [n]\), the inputs \(A'_S\) cannot signal to the complementary outputs \(B_{[n]\setminus S}\). At the Choi-state level, this translates to
\begin{equation}
    \Tr_{B_S}J^{(n)}
    =
    \Tr_{A_SB_S}J^{(n)}\otimes\tau_{A_S},
    \qquad \forall S\subseteq[n].
    \label{eq:choi-no-signalling-subsets}
\end{equation}

Importantly, under the no-signalling condition, the maximally mixed extension in the definiton of reduced channel (\zcref{eq:reduced-channel}) is merely a convenient choice. Indeed, for any state \(\omega_{A'^n}\), applying \zcref{eq:no-signalling-arbitrary-subset} with \(S=\{k+1,\ldots,n\}\) gives
\begin{equation}
    \Tr_{B_{k+1}\cdots B_n}
    \mathcal N^{(n)}(\omega_{A'^n})
    =
    \Tr_{B_{k+1}\cdots B_n}
    \mathcal N^{(n)}
    \left(
        \Tr_{A'_{k+1}\cdots A'_n}(\omega_{A'^n})
        \otimes\tau_{A'}^{\otimes(n-k)}
    \right)
    =
    \mathcal N^{(k)}
    \left(
        \Tr_{A'_{k+1}\cdots A'_n}(\omega_{A'^n})
    \right).
\label{eq:reduced-channel-arbitrary-extension}
\end{equation}
Consequently, the reduced state on the remaining outputs depends only on the reduced state on the remaining inputs, even if the latter is extended arbitrarily (possibly entangled) onto the discarded inputs. This highlights the importance of the no-signalling assumption, without which the definition of reduced channel would become operationally unnatural.

The no-signalling condition \zcref{Eq:no-signalling-last-position} is a significant restriction on the set of channels to which the de Finetti theorems apply; it is not hard to construct channels which are permutation covariant but do not satisfy the no-signalling constraint. For example, one can take the channel that applies a random permutation to the systems
\begin{equation}
    \mathcal N^{(n)}\!\left(\rho^{(n)}\right)
    \coloneq
    \frac{1}{n!}
    \sum_{\pi\in S_n}
    U_{B^n}^{\pi}\rho^{(n)}
    \bigl(U_{B^n}^{\pi}\bigr)^\dagger ,
\end{equation}
and it is not hard to see that for a product input \(\rho^{(n)}=\rho_1\otimes\cdots\otimes\rho_n\), the reduced output \(\Tr_{B_n}\mathcal N^{(n)}(\rho^{(n)})\) depends on \(\rho_n\).

The no-signalling condition is trivially satisfied by mixtures of iid channels, however it is also not hard to construct examples of channels which satisfy the no-signalling condition and which are not mixture of iid channels:
\begin{enumerate}
\item \emph{Replacer channels.} Let
    \begin{equation}
        \mathcal N^{(n)}\!\left(\rho^{(n)}\right)
        \coloneq
        \Tr\!\left(\rho^{(n)}\right)\sigma^{(n)},
    \end{equation}
    where \(\sigma^{(n)}\in\mathcal D(B^n)\) is permutation invariant but not a mixture of iid states, for example a GHZ state. This channel is permutation covariant and no-signalling because its output is independent of the input, but it cannot be a mixture of iid channels, since it maps every product input to the entangled state \(\sigma^{(n)}\).

    \item \emph{Symmetrized non-iid product channels.} Given channels \(\Lambda_1,\ldots,\Lambda_n\in\mathrm{CPTP}(A,B)\), let
    \begin{equation}
        \mathcal N^{(n)}
        \coloneq
        \frac{1}{n!}
        \sum_{\pi\in S_n}
        \mathcal U_{B^n}^{\pi}
        \circ
        \bigl(\Lambda_1\otimes\cdots\otimes\Lambda_n\bigr)
        \circ
        \mathcal U_{A^n}^{\pi^{-1}},
    \end{equation}
    where \(\mathcal U_{\mathcal H^{\otimes n}}^\pi(X)\coloneq U_{\mathcal H^{\otimes n}}^\pi X(U_{\mathcal H^{\otimes n}}^\pi)^\dagger\). Each term is a product of local channels, so \(\mathcal N^{(n)}\) is no-signalling, while averaging over permutations enforces permutation covariance. To see that it need not be a mixture of iid channels, take the \(\Lambda_i\) to be Pauli channels \(\Lambda_i(\rho)=
    \sum_{s\in\{I,X,Y,Z\}}
        p_i(s)\,\sigma_s\rho\sigma_s^\dagger\). Then \(\mathcal N^{(n)}\) is a Pauli channel that applies a Pauli string \(\boldsymbol s\in\{I,X,Y,Z\}^n\) according to the exchangeable distribution \(p(\boldsymbol s)
    =
    \frac{1}{n!}
    \sum_{\pi\in S_n}
        \prod_{i=1}^{n}p_i\!\left(s_{\pi(i)}\right)\). This distribution need not be a mixture of iid distributions. On the other hand, a mixture of iid Pauli channels would imply \(p(\boldsymbol s)\) to be a mixture of iid distributions.

    \item \emph{Entangled environments.} Let \(\Gamma_{AE\to B}\) be a channel and let \(\omega_{E^n}\) be permutation invariant. Then
    \begin{equation}
        \mathcal N^{(n)}\!\left(\rho_{A^n}\right)
        \coloneq
        (\Gamma_{AE \to B})^{\otimes n}
        \left(
            \rho_{A^n}
            \otimes
            \omega_{E^n}
        \right)
    \end{equation}
    is permutation covariant and no-signalling. If \(\omega_{E^n}\) is entangled across the $n$ copies of $E$, it is possible for the output \(\mathcal N^{(n)}\!\left(\rho_{A^n}\right)\) to be entangled even if the input \(\rho_{A^n}\) is separable. This is impossible for a mixture of iid channels, which maps separable inputs to separable outputs.

\end{enumerate}

The channels defined by \zcref{eq:reduced-channel} form a consistent finite sequence, in the sense that for every \(0\leq k\leq m\leq n\),
\begin{equation}
\begin{aligned}
    \mathcal N^{(k)}
    \circ\Tr_{A'_{k+1}\cdots A'_m}
    &=
    \Tr_{B_{k+1}\cdots B_m}
    \circ\mathcal N^{(m)},\\
    \mathcal N^{(k)}
    \left(
        U_{A'^k}^{\pi}\rho_{A'^k}
        \bigl(U_{A'^k}^{\pi}\bigr)^\dagger
    \right)
    &=
    U_{B^k}^{\pi}\mathcal N^{(k)}(\rho_{A'^k})
    \bigl(U_{B^k}^{\pi}\bigr)^\dagger.
\end{aligned}
\label{eq:reduced-channel-consistency}
\end{equation}
These conditions are precisely the finite versions of the extendibility and symmetry conditions used to define an exchangeable sequence of quantum operations in the infinite de Finetti theorem of Fuchs, Schack, and Scudo~\cite{fuchs2004finetti}. If a sequence
\(\{\mathcal N^{(m)}\}_{m\in\mathbb N}\) satisfies these conditions for every \(m\in\mathbb N\), their theorem claims the exact de Finetti representation \(\mathcal N^{(m)}=\int
        \mathcal E_{A'\to B}^{\otimes m}\,
        \mathrm d\nu(\mathcal E)\). Our aim is to establish a finite-\(n\) version of this representation.

\subsection{Standard de Finetti theorem for channels}
\label{sec:std-channel-df}

Choi states are generally mixed, so we will work in the symmetric purification subspace. Permutation covariance of \(\mathcal N^{(n)}\) implies permutation invariance of \(J^{(n)}\in\mathcal D((AB)^n)\), and hence \(J^{(n)}\) admits a purification \(\lvert\Psi^{(n)}\rangle\in\operatorname{Sym}^{n}(ABE)\), where \(E\simeq AB\) and \(\Tr_{E^n}\Psi^{(n)}=J^{(n)}\). The local dimension is \(D=\dim(ABE)=d_A^2d_B^2\). 

The idea is to apply the standard state de Finetti \zcref{cor:std-df-state} to \(\Psi^{(n)}\) and then trace out \(E^k\), obtaining iid atoms of the form \(\phi_{AB}^{\otimes k}\). These atoms are states but need not be Choi states, since generally \(\phi_A\neq\tau_A\). Rather than restricting the integration variable \(\phi_{ABE}\) to satisfy the Choi constraint, we impose the constraint directly on each atom: for every \(\phi_{AB}\), including those that are not Choi, we construct a Choi state \(\hat\phi_{AB}\) that completes \(\phi_{AB}\) in the sense that \(\hat\phi_{AB}\geq c_{\phi}\phi_{AB}\) with completion cost \(c_{\phi}\) (see \zcref{lem:single-copy-choi-completion}). Since an iid tensor-power state along a Choi state is also a Choi state, this single-copy Choi completion can be lifted to multiple-copy and supply an iid Choi state \(\hat\phi_{AB}^{\otimes k}\).

\begin{lemma}[(Single-copy Choi-completion)]
\label{lem:single-copy-choi-completion}
Let \(\rho_{AB}\) be a bipartite state with reduced state \(\rho_A\), and define
\(c_\rho\coloneq\frac{1}{d_A\|\rho_A\|_\infty}\). Then \(d_A^{-1}\leq c_\rho\leq 1\), and \(\widehat{\rho}_{AB}\coloneq c_\rho\rho_{AB}+\bigl(\tau_A-c_\rho\rho_A\bigr)\otimes\tau_B\) is a Choi state, i.e., \(\widehat{\rho}_{A}=\tau_A\), satisfying \(\widehat{\rho}_{AB}\geq c_\rho\rho_{AB}\geq 0\).
\end{lemma}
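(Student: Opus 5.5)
The proof is a direct verification of four properties of the explicit formula for \(\widehat\rho_{AB}\): the bounds on \(c_\rho\), the Choi marginal, positivity of the added term, and the operator inequality.

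\emph{Bounds on \(c_\rho\).} Since \(\rho_A\) is a state on a \(d_A\)-dimensional space, its largest eigenvalue satisfies \(1/d_A\leq\|\rho_A\|_\infty\leq 1\). The lower bound holds because the \(d_A\) eigenvalues sum to one. Inverting and dividing by \(d_A\) gives \(d_A^{-1}\leq c_\rho\leq 1\). In particular \(\|\rho_A\|_\infty>0\), so \(c_\rho\) is well defined.

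\emph{Choi marginal.} Take the partial trace over \(B\) of the defining formula. This uses linearity and \(\Tr_B\tau_B=1\):
\begin{equation*}
\widehat\rho_A=c_\rho\rho_A+\bigl(\tau_A-c_\rho\rho_A\bigr)=\tau_A .
\end{equation*}
Hence \(\Tr\widehat\rho_{AB}=1\). By the Choi characterization recalled in \zcref{sec:math-prelim}, \(\widehat\rho_{AB}\) is a Choi state once positivity is established.

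\emph{Positivity and the operator inequality.} This is the only step with content. We have \(\widehat\rho_{AB}-c_\rho\rho_{AB}=(\tau_A-c_\rho\rho_A)\otimes\tau_B\), and a tensor product of positive semidefinite operators is positive semidefinite. So it suffices to show \(\tau_A-c_\rho\rho_A\geq 0\). Using \(\rho_A\leq\|\rho_A\|_\infty\id_A\), we get
\begin{equation*}
c_\rho\rho_A\leq c_\rho\|\rho_A\|_\infty\id_A=\frac{\id_A}{d_A}=\tau_A .
\end{equation*}
This shows \(\widehat\rho_{AB}\geq c_\rho\rho_{AB}\). The remaining inequality \(c_\rho\rho_{AB}\geq0\) is immediate, since \(c_\rho>0\) and \(\rho_{AB}\geq0\). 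Combined with unit trace, \(\widehat\rho_{AB}\) is a state with marginal \(\tau_A\), i.e.\ a Choi state.

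There is no real obstacle here. The one point needing care is that the choice \(c_\rho=(d_A\|\rho_A\|_\infty)^{-1}\) is precisely the largest scalar for which \(\tau_A-c_\rho\rho_A\) stays positive semidefinite. This is why it is the natural completion cost in \zcref{lem:single-copy-choi-completion}.
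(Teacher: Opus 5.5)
Your proof is correct and follows the paper's argument essentially verbatim. Both use the marginal identity by construction, the bounds on \(c_\rho\) from \(1/d_A\le\|\rho_A\|_\infty\le 1\), and positivity of \((\tau_A-c_\rho\rho_A)\otimes\tau_B\) via \(c_\rho\rho_A\le c_\rho\|\rho_A\|_\infty\id_A=\tau_A\); your explicit remarks on unit trace and on \(c_\rho\) being the largest admissible constant are correct additions that the paper leaves implicit.
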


\begin{proof}
By construction \(\widehat{\rho}_{A}=\tau_A\). The bound on \(c_\rho\) follows trivially from \(\frac{1}{d_A}\leq\|\rho_A\|_\infty\leq 1\). Moreover,
\begin{equation}
    c_\rho\rho_A
    \leq c_\rho\|\rho_A\|_\infty\id_A
    =\frac{\id_A}{d_A}
    =\tau_A.
\end{equation}
Therefore \(\tau_A-c_\rho\rho_A\geq0\), and thus \( \widehat{\rho}_{AB}-c_\rho\rho_{AB}=\bigl(\tau_A-c_\rho\rho_A\bigr)\otimes\tau_B\geq0\).
\end{proof}

After the Choi completion, the main task reduces to bounding the averaged completion cost \(\int c_{\phi}^{-k}\,\mathrm d\mu_n(\phi)\). Intuitively, because the state \(\Psi^{(k)}\) being represented arises from \(\Psi^{(n)}\) which satisfies the Choi constraint, the measure \(\mu_n\) should concentrate near those \(\phi\) whose \(A\)-marginals are close to \(\tau_A\), and for which the completion cost is small. To make this intuition quantitative, we use \zcref{lem:fixed-marginal-measure-independence}, whose proof relies on \zcref{lem:fixed-marginal-haar-moment}. The latter is a fixed-marginal generalization of the Haar-moment identity in \zcref{eq:haar-moment-symmetric-subspace} and will also play an important role in the exponential channel de Finetti construction later. It reformulates the constructions of~\cite[Lemma~3.1]{fawzi2015quantum} and~\cite[Lemma~10]{nahar2024postselection} to suit our purpose; for completeness, we provide an elementary proof of \zcref{lem:fixed-marginal-haar-moment} in~\zcref{App:Proof:fixed-marginal-haar-moment}.

\begin{lemma}
\label{lem:fixed-marginal-haar-moment}
Let \(A\) and \(R\) be finite-dimensional Hilbert spaces with \(d_A\leq d_R\), fix a basis and let \(|\theta\rangle_{AR}\coloneq\sum_{i=1}^{d_A}|i\rangle_A|i\rangle_R\), and \(\mathrm dU\) be the Haar measure on \(\mathsf U(R)\). For \(m\in\mathbb N\), let \(P_{\mathrm{sym}}^{(m)}\) be the projection to \(\operatorname{Sym}^{m}(AR)\) and define \(K_{A^m}\coloneq \left(\Tr_{R^m}P_{\mathrm{sym}}^{(m)}\right)^{-1}>0\). Then
\begin{equation}
\label{eq:constrained-Haar}
    \int_{\mathsf U(R)}
    \bigl[(\id_A\otimes U_R)|\theta\rangle\!\langle\theta|_{AR}(\id_A\otimes U_R^\dagger)\bigr]^{\otimes m}\,\mathrm dU=(K_{A^m}\otimes\id_{R^m})P_{\mathrm{sym}}^{(m)}
       =P_{\mathrm{sym}}^{(m)}(K_{A^m}\otimes\id_{R^m}).
\end{equation}
Consequently, for any pure state \(|\phi\rangle_{AR}\) and \(\phi_A\coloneq\Tr_R|\phi\rangle\!\langle\phi|_{AR}\),
\begin{equation}
    \int_{\mathsf U(R)}
    \bigl[(\id_A\otimes U_R)|\phi\rangle\!\langle\phi|_{AR}(\id_A\otimes U_R^\dagger)\bigr]^{\otimes m}\,\mathrm dU
    =\bigl(\phi_A^{\otimes m}K_{A^m}\otimes\id_{R^m}\bigr)P_{\mathrm{sym}}^{(m)}.
    \label{eq:fixed-marginal-haar-moment}
\end{equation}
\end{lemma}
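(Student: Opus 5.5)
The plan is a Schur--Weyl commutation argument for the first identity, followed by a change of variables for the second.

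\emph{Step 1: commutation and symmetric support.} Set
\begin{equation*}
M\coloneq\int_{\mathsf U(R)}\bigl[(\id_A\otimes U_R)\theta_{AR}(\id_A\otimes U_R^\dagger)\bigr]^{\otimes m}\,\mathrm dU,
\qquad \theta_{AR}\coloneq|\theta\rangle\!\langle\theta|_{AR}.
\end{equation*}
Each integrand is a tensor power of a rank-one operator, so $M$ is supported on $\operatorname{Sym}^m(AR)$, i.e.\ $M=P_{\mathrm{sym}}^{(m)}MP_{\mathrm{sym}}^{(m)}$. By invariance of the Haar measure, $M$ commutes with $(\id_A\otimes V_R)^{\otimes m}$ for every $V\in\mathsf U(R)$.

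\emph{Step 2: identifying $M$ via the $U(d_A)\times U(d_R)$ decomposition.} I would decompose
\begin{equation*}
\operatorname{Sym}^m(A\otimes R)\cong\bigoplus_\lambda \mathcal S^A_\lambda\otimes\mathcal S^R_\lambda ,
\end{equation*}
with the sum over Young diagrams $\lambda$ with at most $d_A$ rows. The $\mathcal S^R_\lambda$ are pairwise inequivalent irreducible representations of $\mathsf U(R)$, so Schur's lemma forces $M=\bigoplus_\lambda X_\lambda\otimes\id_{\mathcal S^R_\lambda}$ for some operators $X_\lambda$ on $\mathcal S^A_\lambda$.

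To fix $X_\lambda$, I would use the transpose trick $(\id\otimes U)|\theta\rangle=(U^T\otimes\id)|\theta\rangle$ when $d_A=d_R$. When $d_A<d_R$, I would instead use the fact that the integrand depends only on the isometry $U|_{\mathrm{span}\{|i\rangle_R\}_{i\le d_A}}$; integrating over $U(d_A)\subset\mathsf U(R)$ acting on that span then shows that $M$ also commutes with $(W\otimes\id_R)^{\otimes m}$ for $W\in\mathsf U(A)$. Hence each $X_\lambda$ is a scalar $x_\lambda$, and $M$ is a function of the $\lambda$-isotypic projectors. The same holds for $\Tr_{R^m}P_{\mathrm{sym}}^{(m)}=\bigoplus_\lambda (\dim\mathcal S^R_\lambda)\,\id_{\mathcal S^A_\lambda}$, so $K_{A^m}=\bigoplus_\lambda(\dim \mathcal S^R_\lambda)^{-1}\id_{\mathcal S^A_\lambda}$. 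This operator commutes with $P_{\mathrm{sym}}^{(m)}$, which gives the second equality in \zcref{eq:constrained-Haar}.

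It remains to show $x_\lambda=(\dim\mathcal S^R_\lambda)^{-1}$. Since $\Tr_R\theta_{AR}=\id_A$, the integrand has $\Tr_{R^m}$ equal to $\id_{A^m}$ for every $U$. Therefore
\begin{equation*}
\Tr_{R^m}M=\id_{A^m}\quad\text{on the relevant sectors},
\qquad
\Tr_{R^m}M=\bigoplus_\lambda x_\lambda\dim\mathcal S^R_\lambda\,\id_{\mathcal S^A_\lambda}.
\end{equation*}
Comparing the two expressions yields $x_\lambda\dim\mathcal S^R_\lambda=1$. The last point to confirm is that every $\lambda$ with at most $d_A$ rows actually appears, which follows from $\Tr_{R^m}P_{\mathrm{sym}}^{(m)}$ being invertible. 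This is the main obstacle: carefully justifying the $U(A)$-invariance and the identification through partial traces. As an alternative, I would try the elementary route of expanding $M$ as a Weingarten-type sum, or note that $\theta$ is the vectorization of $\id$ and reduce to the known symmetric-subspace twirl.

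\emph{Step 3: the general formula \zcref{eq:fixed-marginal-haar-moment}.} Write $|\phi\rangle_{AR}=(\sqrt{\phi_A}\otimes V_R)|\theta\rangle$ for a suitable unitary $V_R$, using the polar and Schmidt decomposition together with $d_A\le d_R$. Invariance of the Haar measure absorbs $V_R$. Pulling out $\sqrt{\phi_A}^{\otimes m}$ on both sides then gives
\begin{equation*}
(\sqrt{\phi_A}^{\otimes m}\otimes\id)\,(K_{A^m}\otimes\id)\,P_{\mathrm{sym}}^{(m)}\,(\sqrt{\phi_A}^{\otimes m}\otimes\id).
\end{equation*}
Because $\sqrt{\phi_A}^{\otimes m}$ commutes with $P_{\mathrm{sym}}^{(m)}$, and $K_{A^m}$ commutes with $\sqrt{\phi_A}^{\otimes m}$ (it lies in the commutant of the $S_m$ action on $A^m$, being a function of $\lambda$-sectors that is $U(A)$-invariant), this collapses to $(\phi_A^{\otimes m}K_{A^m}\otimes\id_{R^m})P_{\mathrm{sym}}^{(m)}$.
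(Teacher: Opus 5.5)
Your proof is correct in substance but takes a genuinely different route from the paper. The paper never decomposes $\operatorname{Sym}^m(A\otimes R)$.

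\emph{The paper's route.} It proves, as you do, that your $M$ (its $T_m$) commutes with $V_A^{\otimes m}\otimes\id_{R^m}$. It does this by extending $V_A^{\mathsf T}$ to a unitary on $R$ and using right invariance; this handles $d_A=d_R$ and $d_A<d_R$ uniformly and is the clean form of your isometry argument. It then upgrades this to commutation with all $X_A^{\otimes m}\otimes\id_{R^m}$. Next it proves the second formula in the form $(\phi_A^{\otimes m}\otimes\id_{R^m})T_m$ \emph{before} $T_m$ is known. Averaging that identity over Haar-random pure states $\phi$ on $AR$ turns the left side into $d_m^{-1}P_{\mathrm{sym}}^{(m)}$ and the right side into $d_m^{-1}(K_{A^m}^{-1}\otimes\id_{R^m})T_m$, which is solved for $T_m$. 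Invertibility is shown separately from $P_{\mathrm{sym}}^{(m)}\geq d_A^{-m}(|\theta\rangle\!\langle\theta|)^{\otimes m}$.

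\emph{Comparison.} The paper's route needs only the Haar-moment identity and the tensor-power commutant theorem. Yours needs the $(\mathsf U(A),\mathsf U(R))$ Howe decomposition of the symmetric subspace, but it gives more: the explicit spectrum of $K_{A^m}$, and invertibility for free. Indeed $\Tr_{R^m}M=\id_{A^m}$ holds exactly, not just ``on the relevant sectors''. This forces every isotypic component of $A^{\otimes m}$ to occur, so the step you flag as the main obstacle is immediate.

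\emph{A computational slip that does not affect the logic.} Inside $A^{\otimes m}\otimes R^{\otimes m}$, the $\lambda$-block of $\operatorname{Sym}^m(A\otimes R)$ is $\mathcal S^A_\lambda\otimes\mathcal S^R_\lambda\otimes|\Phi_\lambda\rangle$. Here $|\Phi_\lambda\rangle$ is maximally entangled between the two copies of the $S_m$-irrep of dimension $f_\lambda$. So $\Tr_{R^m}$ of that block is $\dim\mathcal S^R_\lambda/f_\lambda$ times the projector onto the $\lambda$-isotypic component of $A^{\otimes m}$. Hence $K_{A^m}$ has eigenvalue $f_\lambda/\dim\mathcal S^R_\lambda$ there, not $1/\dim\mathcal S^R_\lambda$. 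For example, with $m=3$ and $d_A=d_R=2$, your values give $\Tr_{R^3}P_{\mathrm{sym}}^{(3)}$ trace $24$ instead of $d_3=20$. The same factor $f_\lambda^{-1}$ appears in $\Tr_{R^m}M$, so your comparison still gives $x_\lambda$ equal to the corresponding eigenvalue of $K_{A^m}$, and the conclusion stands.

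\emph{A point of justification in Step 3.} What makes $K_{A^m}$ commute with $\sqrt{\phi_A}^{\otimes m}$ is that it is a scalar on each isotypic component, i.e.\ it is central. Merely lying in the commutant of the $S_m$ action would not suffice.
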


Although the measure \(\mu_n\) from \zcref{cor:std-df-state} depends on \(\Psi^{(n)}\), \zcref{lem:fixed-marginal-measure-independence} below shows that expectations of functions of the \(A\)-marginal \(\phi_A\) alone do not. They may therefore be evaluated using a convenient product state; see \zcref{App:Proof:lem:fixed-marginal-measure-independence} for the proof.

\begin{lemma}
\label{lem:fixed-marginal-measure-independence}
Let \(A\) and \(R\) be finite-dimensional Hilbert spaces with \(d_A\leq d_R\). For any \(n\in\mathbb N\) and symmetric pure state \(|\Psi^{(n)}\rangle\in\operatorname{Sym}^{n}(AR)\) satisfying \(\Psi_{A^n}^{(n)}=\tau_A^{\otimes n}\), consider the probability measure \(\mu_n\) on the set of pure states \(|\phi\rangle_{AR}\) defined by
\begin{equation}
    \mathrm d\mu_{n}(\phi)\coloneq d_n\,\langle\phi|^{\otimes n}\Psi^{(n)}|\phi\rangle^{\otimes n}\,\mathrm d\phi.
\end{equation}
Then for any bounded measurable function \(f\) of the reduced state \(\phi_A=\Tr_R|\phi\rangle\!\langle\phi|_{AR}\), we have
\begin{equation}
    \int f(\phi_A)\,\mathrm d\mu_{n}(\phi)
    =d_n\int f(\phi_A)
    \left|\langle\psi|\phi\rangle\right|^{2n}\mathrm d\phi,
    \label{eq:replace-symmetric-purification}
\end{equation}
where \(|\psi\rangle_{AR}\) can be any fixed pure state satisfying \(\psi_A=\tau_A\).
\end{lemma}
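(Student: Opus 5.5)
The plan is to exploit the invariance of both the integrand \(f(\phi_A)\) and the Haar measure \(\mathrm d\phi\) under local unitaries on \(R\). Averaging over such unitaries turns the \(\Psi^{(n)}\)-dependent weight \(\langle\phi|^{\otimes n}\Psi^{(n)}|\phi\rangle^{\otimes n}\) into a quantity that depends on \(\Psi^{(n)}\) only through its \(A^n\)-marginal. That marginal is fixed to be \(\tau_A^{\otimes n}\) by hypothesis.

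First, for any \(U\in\mathsf U(R)\), write \(\phi^U\coloneq(\id_A\otimes U)\phi(\id_A\otimes U^\dagger)\). The map \(\phi\mapsto\phi^U\) preserves the unitarily invariant measure \(\mathrm d\phi\) on pure states of \(AR\), since \(\id_A\otimes U\) is a unitary on \(AR\). It also leaves \(\phi_A\) unchanged, so \(f(\phi^U_A)=f(\phi_A)\). Hence for every \(U\),
\[
\int f(\phi_A)\langle\phi|^{\otimes n}\Psi^{(n)}|\phi\rangle^{\otimes n}\mathrm d\phi=\int f(\phi_A)\Tr\bigl[\Psi^{(n)}(\phi^U)^{\otimes n}\bigr]\mathrm d\phi .
\]
I then average this over \(U\) with respect to the Haar measure \(\mathrm dU\) on \(\mathsf U(R)\). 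Because everything is bounded and the state spaces are compact, Fubini lets me move the \(U\)-integral inside the trace.

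Second, I apply \zcref{lem:fixed-marginal-haar-moment}, specifically \zcref{eq:fixed-marginal-haar-moment} with \(m=n\), which uses \(d_A\le d_R\). This replaces the inner average by \(\bigl(\phi_A^{\otimes n}K_{A^n}\otimes\id_{R^n}\bigr)P_{\mathrm{sym}}^{(n)}\). Then I use cyclicity of the trace together with \(P_{\mathrm{sym}}^{(n)}\Psi^{(n)}=\Psi^{(n)}\), which holds since \(|\Psi^{(n)}\rangle\in\operatorname{Sym}^n(AR)\). This gives
\[
\Tr\bigl[\Psi^{(n)}(\phi_A^{\otimes n}K_{A^n}\otimes\id_{R^n})\bigr]=\Tr\bigl[\Psi^{(n)}_{A^n}\phi_A^{\otimes n}K_{A^n}\bigr]=\Tr\bigl[\tau_A^{\otimes n}\phi_A^{\otimes n}K_{A^n}\bigr].
\]
Therefore
\[
\int f\,\mathrm d\mu_n=d_n\int f(\phi_A)\Tr\bigl[\tau_A^{\otimes n}\phi_A^{\otimes n}K_{A^n}\bigr]\mathrm d\phi,
\]
and the right-hand side is manifestly independent of which symmetric \(\Psi^{(n)}\) with \(\Psi^{(n)}_{A^n}=\tau_A^{\otimes n}\) was used.

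Finally, for any pure \(|\psi\rangle_{AR}\) with \(\psi_A=\tau_A\), the state \(|\psi\rangle^{\otimes n}\) is symmetric and has \(A^n\)-marginal \(\tau_A^{\otimes n}\). Running the same computation for it yields the same expression. For this choice the weight is \(\langle\phi|^{\otimes n}\psi^{\otimes n}|\phi\rangle^{\otimes n}=|\langle\psi|\phi\rangle|^{2n}\), which gives \zcref{eq:replace-symmetric-purification}.

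The only delicate point is justifying the invariance and interchange steps: that \(\mathrm d\phi\) is invariant under \(\id_A\otimes U\), and that integrals may be swapped. Both follow from unitary invariance of the Haar-induced measure and boundedness of \(f\). The real content is carried by \zcref{lem:fixed-marginal-haar-moment}, which we may assume.
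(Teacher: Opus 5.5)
Your proposal is correct and follows essentially the same route as the paper's proof. Both average the weight over local unitaries on \(R\), using invariance of \(\mathrm d\phi\) and of \(\phi_A\), and then apply the fixed-marginal Haar-moment lemma with \(P_{\mathrm{sym}}^{(n)}\Psi^{(n)}=\Psi^{(n)}\) and \(\Psi^{(n)}_{A^n}=\tau_A^{\otimes n}\) to obtain a \(\Psi^{(n)}\)-independent expression. Finally both substitute \(\psi^{\otimes n}\).
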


With the preceding lemmas in place, we can prove the channel representation theorem.

\begin{proof}[Proof of \zcref{thm:standard-channel-definetti}]
Let \(E\simeq AB\), \(D=\dim(ABE)=d_A^2d_B^2\), and let
\(|\Psi^{(n)}\rangle\in\operatorname{Sym}^n(ABE)\) be a symmetric purification of \(J^{(n)}=J(\mathcal{N}^{(n)})\). Its satisfies the Choi constraint \(\Psi_{A^n}^{(n)}=\tau_A^{\otimes n}\). Let
\(\Psi^{(k)}\coloneq\Tr_{(ABE)^{n-k}}\Psi^{(n)}\), then \(\Tr_{E^k}\Psi^{(k)}=\Tr_{(AB)^{n-k}}J^{(n)}=J^{(k)}=J(\mathcal{N}^{(k)})\), see \zcref{eq:reduced-choi-state}.

Our starting point is the standard de Finetti representation for states in \zcref{cor:std-df-state},
\begin{equation}
    \left(1-\frac{Dk}{n}\right)\Psi^{(k)}
    \leq
    \int \phi_{ABE}^{\otimes k}\,\mathrm d\mu_n(\phi), \qquad \mathrm d\mu_n(\phi)
    =d_n\langle\phi|^{\otimes n}\Psi^{(n)}
    |\phi\rangle^{\otimes n}\,\mathrm d\phi.
\end{equation}
Taking the partial trace over \(E^k\) yields \(\left(1-\frac{Dk}{n}\right)J^{(k)}\leq\int \phi_{AB}^{\otimes k}\,\mathrm d\mu_n(\phi)\). For every \(\phi\), define \(c_\phi\coloneq\frac{1}{d_A\|\phi_A\|_\infty}\) and \(\widehat{\phi}_{AB}\coloneq c_\phi\phi_{AB}+\bigl(\tau_A-c_\phi\phi_A\bigr)\otimes\tau_B\). By \zcref{lem:single-copy-choi-completion},
\(\widehat{\phi}_{AB}\) is a Choi state and \(\widehat{\phi}_{AB}\geq c_\phi\phi_{AB}\). Tensoring this inequality \(k\) times gives \(\phi_{AB}^{\otimes k}\leq c_\phi^{-k}\widehat{\phi}_{AB}^{\otimes k}\). Consequently,
\begin{equation}
\label{eq:post-choi-completion}
    \left(1-\frac{Dk}{n}\right)J^{(k)}
    \leq
    \int\bigl(d_A\|\phi_A\|_\infty\bigr)^k
    \widehat{\phi}_{AB}^{\otimes k}\,
    \mathrm d\mu_n(\phi)
    =Z_{nk}\int\widehat{\phi}_{AB}^{\otimes k}\,
    \mathrm d\nu_{nk}(\phi),
\end{equation}
where we have defined
\begin{equation}
d\nu_{nk}(\phi)\coloneq \frac{\bigl(d_A\|\phi_A\|_\infty\bigr)^k \mathrm d\mu_n(\phi)}{Z_{nk}},
\qquad Z_{nk}
    \coloneq
    \int\bigl(d_A\|\phi_A\|_\infty\bigr)^k
    \,\mathrm d\mu_n(\phi).
\end{equation}

Since \(d_A\|\phi_A\|_\infty\geq1\), we have \(Z_{nk}\geq1\). It remains to find an explicit upper bound on \(Z_{nk}\) in terms of \(k\) and \(n\). Apply \zcref{lem:fixed-marginal-measure-independence} with \(R=BE\) and
\(f(\phi_A)=(d_A\|\phi_A\|_\infty)^k\). For any fixed pure state
\(|\psi\rangle_{ABE}\) satisfying \(\psi_A=\tau_A\), we obtain
\begin{equation}
    Z_{nk}
    =
    d_n\int
    \bigl(d_A\|\phi_A\|_\infty\bigr)^k
    |\langle\psi|\phi\rangle|^{2n}\,\mathrm d\phi.
\end{equation}
We bound the integrand to be only dependent on the overlap: by monotonicity of \(p\)-norm and data-processing of trace norm under partial trace, we have
\begin{equation}
    \|\phi_A-\tau_A\|_\infty\leq\|\phi_A-\tau_A\|_1
    \leq
    \|\phi_{ABE}-\psi_{ABE}\|_1
    =
    2\sqrt{1-|\langle\psi|\phi\rangle|^2}.
\end{equation}
By triangle inequality \(\|\phi_A\|_\infty\le\|\phi_A-\tau_A\|_\infty+\frac{1}{d_A}\), it follows that
\begin{equation}
Z_{nk}\le
d_n\int
    \left(1+2d_A\sqrt{1-|\langle\psi|\phi\rangle|^2}\right)^{k}
    |\langle\psi|\phi\rangle|^{2n}\,\mathrm d\phi
\end{equation}
The factor \(d_n|\langle\psi|\phi\rangle|^{2n}\mathrm d\phi\) defines a probability
measure that tilts the Haar measure, exponentially favoring states with a large overlap with \(\psi\). For these typical states, the bracket is close to 1. Atypical states having substantially smaller overlap can produce a larger bracket value, but their contribution is exponentially suppressed by the factor \(|\langle\psi|\phi\rangle|^{2n}\).

To be precise, consider the (squared) fidelity between two pure states \(F_{\phi}\coloneq|\langle\psi|\phi\rangle|^2\). For a Haar-random pure state
\(|\phi\rangle\) in a \(D\)-dimensional Hilbert space and a fixed
\(|\psi\rangle\) in the same Hilbert space, the probability distribution of the fidelity \(F\in[0,1]\) is \((D-1)(1-F)^{D-2}\). This is a standard result on average fidelity between random quantum states, which has long been known and studied in random matrix theory, mostly by Życzkowski and others; see, e.g., \cite[Equation~(15)]{zyczkowski2005average}. Equivalently, \(F\sim\operatorname{Beta}(1,D-1)\) follows a  Beta distribution.\footnote{A simple derivation follows from generating a Haar-random state as
\(|\phi\rangle=(\sum_{i=1}^{D}|z_i|^2)^{-1/2}\sum_{i=1}^{D}z_i|i\rangle\),
where the \(z_i\) are independent complex Gaussian variables~\cite{zyczkowski2001induced},\cite[Section~7.6]{bengtsson2007geometry}. One may take \(|\psi\rangle=|1\rangle\), so that
\(|\langle\psi|\phi\rangle|^2=|z_1|^2/\sum_{i=1}^{D}|z_i|^2\). For complex Gaussian variables, \(|z_i|^2\) are independent gamma variables, and the normalized ratio can be calculated to follow a beta distribution.} Hence,
\begin{equation}
\begin{aligned}
Z_{nk}
&\leq d_n(D-1)\int_0^1
    \left(1+2d_A\sqrt{1-F}\right)^k
    F^n(1-F)^{D-2}\,\mathrm dF \\
&=  \int_0^1
    \left(1+2d_A\sqrt U\right)^k
    \frac{U^{D-2}(1-U)^n}{\mathrm B(D-1,n+1)}\,\mathrm dU,
\end{aligned}
\end{equation}
where in the second line we have changed variable \(U\coloneq1-F\) and we have used beta-gamma function relation to get \(d_n(D-1)=\frac{\Gamma(n+D)}{\Gamma(n+1)\Gamma(D-1)}=\mathrm B(D-1,n+1)^{-1}\). Notice the function
\(\mathrm B(D-1,n+1)^{-1}U^{D-2}(1-U)^n\) is the \(\operatorname{Beta}(D-1,n+1)\) PDF. Consequently,
\begin{equation}
Z_{nk}
\leq\mathbb E\left(1+2d_A\sqrt U\right)^k
\leq\mathbb E\exp\left(2d_Ak\sqrt U\right),
\qquad U\sim\operatorname{Beta}(D-1,n+1).
\end{equation}
A technical estimate on this expectation, which we provide in \zcref{App:Proof:eq-Z-estimate}, gives
\begin{equation}
\label{eq:Z-estimate}
Z_{nk}\leq\mathbb E\exp\left(2d_Ak\sqrt U\right)\leq
\exp\left(
2d_Ak\sqrt{\frac{D-1}{n+D-1}}+\frac{d_A^2k^2}{n+D-1}
\right).
\end{equation}

Finally, combining with \zcref{eq:post-choi-completion}, we have
\begin{equation}
    \bigl(1-\epsilon_{nk}\bigr)J^{(k)}
    \leq
    \int\widehat{\phi}_{AB}^{\otimes k}\,
    \mathrm d\nu_{nk}(\phi), \quad 1-\epsilon_{nk}\coloneq \left(1-\frac{Dk}{n}\right)Z_{nk}^{-1}.
\end{equation}
Using \(Z_{nk}\geq1\) and \(1-Z_{nk}^{-1}\leq\ln Z_{nk}\), we conclude that
\begin{equation}
    \epsilon_{nk}=1-\left(1-\frac{Dk}{n}\right)Z_{nk}^{-1}
    \leq
    \ln Z_{nk}+\frac{Dk}{n}
    \leq
    \frac{Dk}{n}+2d_Ak\sqrt{\frac{D-1}{n+D-1}}+\frac{d_A^2k^2}{n+D-1}.
\end{equation}

For each \(\phi\in ABE\), let \(\Phi\in\mathrm{CPTP}(A,B)\) be the channel whose Choi state is \(\widehat{\phi}_{AB}\), and, with a slight abuse of notation, let \(\nu_{nk}\) also denote the pushforward of the measure \(\nu_{nk}\), on the set of pure states in \(ABE\), under the map \(\phi\mapsto\Phi\). Then,
\begin{equation}
    \bigl(1-\epsilon_{nk}\bigr)J(\mathcal{N}^{(k)})
    \leq
    \int J(\Phi)^{\otimes k}\,
    \mathrm d\nu_{nk}(\Phi),
\end{equation}
which implies the CP-order \zcref{eq:standard-channel-definetti}. Substituting \(D=d_A^2d_B^2\) completes the proof.
\end{proof}

\begin{remark}
The error \(\epsilon_{nk}\) in \zcref{thm:standard-channel-definetti}, for fixed \(d_A,d_B\), is \(O(k/\sqrt{n})\), and therefore vanishes whenever \(k=o(\sqrt{n})\). In particular, if \(k=O(n^{1/2-\delta})\) for any \(0<\delta\le1/2\), the error decays polynomially as \(O(n^{-\delta})\). To the best of our knowledge, this is the first finite channel de Finetti theorem with error polynomial in \(n,k\) and local dimensions, even in the substantially stronger form of an operator inequality. In contrast, the channel de Finetti theorem of \cite[Corollary~4.7]{berta2022semidefinite} is formulated in diamond norm, and its error bound contains a factor exponential in \(k\). We have therefore solved the open problem raised by the authors of whether a bound with polynomial dependence on the dimensions and \(n,k\) could be obtained. 
\end{remark}

A more familiar diamond-norm de Finetti theorem can be easily recovered from the operator inequality version.

\begin{corollary}[(Diamond-norm standard de Finetti)]
\label{cor:standard-channel-definetti-diamond}
Under the assumptions of \zcref{thm:standard-channel-definetti},
\begin{equation}
    \left\|
        \Tr_{n-k}\mathcal N^{(n)}
        -
        \int
            \Phi^{\otimes k}
            \,\mathrm d\nu_{nk}(\Phi)
    \right\|_{\diamond}
    \leq
    \frac{2d_A^2d_B^2k}{n}
    +4d_Ak
        \sqrt{
            \frac{d_A^2d_B^2-1}
                 {n+d_A^2d_B^2-1}
        }                                                    
    +\frac{2d_A^2k^2}
           {n+d_A^2d_B^2-1}.
\end{equation}
\end{corollary}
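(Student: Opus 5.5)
The corollary will follow from the CP-order inequality of \zcref{thm:standard-channel-definetti}, by the same mechanism used for states in \zcref{cor:trace-distance-df-thm}, now applied at the level of channels. Write \(\mathcal N^{(k)}=\Tr_{n-k}\mathcal N^{(n)}\), \(\mathcal M\coloneq\int\Phi^{\otimes k}\,\mathrm d\nu_{nk}(\Phi)\) and \(\lambda\coloneq 1-\epsilon_{nk}\). We may assume \(\epsilon_{nk}<1\); otherwise the right-hand side of the claimed bound is at least \(2\), and \(\|\mathcal N^{(k)}-\mathcal M\|_\diamond\le 2\) holds trivially for any two channels.

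The theorem gives \(\mathcal M-\lambda\mathcal N^{(k)}\geq_{\mathrm{CP}}0\). Both \(\mathcal M\) and \(\mathcal N^{(k)}\) are trace preserving, since \(\mathcal M\) is a mixture of CPTP maps. Hence the map
\begin{equation}
    \mathcal R\coloneq\frac{\mathcal M-\lambda\mathcal N^{(k)}}{1-\lambda}
\end{equation}
is completely positive and trace preserving, i.e.\ a channel, whenever \(\lambda<1\). (If \(\lambda=1\), then \(\mathcal M-\mathcal N^{(k)}\) is CP and trace-annihilating, hence zero.) Rearranging gives
\begin{equation}
    \mathcal M-\mathcal N^{(k)}=(1-\lambda)\bigl(\mathcal R-\mathcal N^{(k)}\bigr).
\end{equation}
This is the channel analogue of the decomposition \(\sigma=\lambda\rho+(1-\lambda)\omega\) used in the proof of \zcref{cor:trace-distance-df-thm}.

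Next we bound the diamond norm of the difference. For any input state \(\rho\) on \(A'^k\otimes R\), the operators \((\mathcal R\otimes\idc_R)(\rho)\) and \((\mathcal N^{(k)}\otimes\idc_R)(\rho)\) are both states. Their trace distance is therefore at most \(2\), which gives \(\|\mathcal R-\mathcal N^{(k)}\|_\diamond\le 2\). Consequently
\begin{equation}
    \bigl\|\mathcal N^{(k)}-\mathcal M\bigr\|_\diamond\le 2\epsilon_{nk}.
\end{equation}

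Finally, substitute the bound on \(\epsilon_{nk}\) from \zcref{eq:standard-channel-definetti-error}. Doubling each of its three terms gives exactly the stated right-hand side.

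There is no real obstacle. The only point needing care is that \(\mathcal R\) is trace preserving, which requires the reduced channel \(\mathcal N^{(k)}\) to be trace preserving. This holds by its definition \zcref{eq:reduced-channel}: it is a CPTP map composed with appending \(\tau_{A'}^{\otimes(n-k)}\) and a partial trace.
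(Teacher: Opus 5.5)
Your proposal is correct and follows the paper's proof: the paper likewise writes the mixture as $(1-\epsilon_{nk})\Tr_{n-k}\mathcal N^{(n)}+\epsilon_{nk}\mathcal R^{(k)}$ with $\mathcal R^{(k)}$ a channel, bounds the diamond norm by $2\epsilon_{nk}$, and substitutes the error bound. Your explicit treatment of the edge cases $\epsilon_{nk}\geq 1$ and $\lambda=1$ is a harmless addition that the paper leaves implicit.
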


\begin{proof}
Define the mixture of iid channels
\begin{equation}
    \mathcal M^{(k)}
    \coloneq
    \int
        \Phi^{\otimes k}
        \,\mathrm d\nu_{nk}(\Phi).
\end{equation}
The CP-order inequality in \zcref{thm:standard-channel-definetti} implies that \(\exists \ \mathcal R^{(k)}\in\mathrm{CPTP}(A^k,B^k)\) such that
\begin{equation}
    \mathcal M^{(k)}
    =
    (1-\epsilon_{nk})\Tr_{n-k}\mathcal N^{(n)}
    +
    \epsilon_{nk}\mathcal R^{(k)}.
\end{equation}
Consequently,
\begin{equation}
    \left\|
        \Tr_{n-k}\mathcal N^{(n)}
        -
        \mathcal M^{(k)}
    \right\|_{\diamond}
    =
    \epsilon_{nk}
    \left\|
        \Tr_{n-k}\mathcal N^{(n)}
        -
        \mathcal R^{(k)}
    \right\|_{\diamond} 
    \leq
    2\epsilon_{nk},
\end{equation}
where we used that the diamond-norm distance between two channels is at most \(2\). The explicit estimate follows from \zcref{eq:standard-channel-definetti-error}.
\end{proof}

\subsection{Exponential de Finetti theorem for channels}
\label{sec:exp-channel-df}

Motivated by Renner's exponential de Finetti theorem for states~\cite{renner2007symmetry}, we now relax the atoms in the convex mixture from iid channels to almost-iid channels (see \zcref{def:almost-iid-channel}), allowing up to \(r\) defects among the \(k\) retained copies, in exchange for an exponentially decaying representation error. Our present bound in \zcref{thm:exp-channel-df}, however, achieves this decay only when a sufficiently large fraction of the systems is discarded and therefore does not cover the regime in which most systems are retained. See \zcref{rmk:decay-regime} for details.

The setup is the same as in the standard case: we consider the symmetric purification \(\lvert\Psi^{(n)}\rangle\in\operatorname{Sym}^{n}(ABE)\) of \(J^{(n)}\in\mathcal D((AB)^n)\), where \(E\simeq AB\) and \(\Tr_{E^n}\Psi^{(n)}=J^{(n)}\). The local dimension is \(D=\dim(ABE)=d_A^2d_B^2\). Let \(P_{\mathrm{sym}}^{(m)}\) denote the projection onto
\(\operatorname{Sym}^{m}(ABE)\), with \(d_m=\dim\operatorname{Sym}^{m}(ABE)\). 

The single-copy Choi completion \zcref{lem:single-copy-choi-completion} used for the standard de Finetti \zcref{thm:standard-channel-definetti} does not directly extend to the exponential setting. There, completing a single-copy reference state \(\phi_{AB}\) to a Choi state \(\widehat{\phi}_{AB}\) is sufficient for completing the corresponding iid state \(\phi_{AB}^{\otimes k}\) to an iid Choi state \(\widehat{\phi}_{AB}^{\otimes k}\). In the exponential setting, one considers almost iid states (see \zcref{def:almost-iid-state}) as opposed to iid states. Even when the reference state is Choi, an almost-iid state along that reference need not itself be Choi due to defects\footnote{Indeed, one can construct sequence of states \(\omega_{(AB)^k}\) that are \(\binom{k}{1}\)-almost-iid along a fixed Choi state, but \(\frac{1}{2}\lVert\omega_{A^k}-\tau_A^{\otimes k}\rVert_1\) is bounded away from 0 as \(k\to\infty\).}, thus a single-copy Choi completion for the reference does not imply the almost iid state satisfies the Choi constraint. The difficulty with the exponential channel de Finetti theorem originates from the fact that the Choi constraint and the almost-iid constraint generally do not commute: projecting onto an almost-iid subspace can disturb the input marginal, while applying a Choi completion after the projection need not preserve the almost-iid support. One has to add the Choi constraint to both the reference state and the almost iid state.

We first address the Choi constraint on the reference state by incorporating it directly into the integration: rather than integrating over all pure states on \(ABE\), we integrate only over those whose marginal on \(A\) is maximally mixed. More precisely, let
\begin{equation}
    \mathcal P_{ABE}(\tau_A)
    \coloneq
    \left\{
        \hat\phi_{ABE}=|\hat\phi\rangle\!\langle\hat\phi|_{ABE}:
        \Tr_{BE}\hat\phi_{ABE}=\tau_A
    \right\}.
\end{equation}
Equivalently, \(\mathcal P_{ABE}(\tau_A)\) is the set of purifications of \(\tau_A\) with purifying system \(BE\). In particular, for every \(\hat\phi\in\mathcal P_{ABE}(\tau_A)\), the state \(\hat\phi_{AB}\coloneq\Tr_E\hat\phi_{ABE}\) is a normalized Choi state corresponding to a single-copy quantum channel from \(A'\) to \(B\).

To equip \(\mathcal P_{ABE}(\tau_A)\) with a natural probability measure analogous to the Haar measure in the unconstrained case, fix any \(\hat\phi_0\in\mathcal P_{ABE}(\tau_A)\). By the unitary equivalence of purifications, every \(\hat\phi\in\mathcal P_{ABE}(\tau_A)\) can be written as
\begin{equation}
    \hat\phi
    =
    (\id_A\otimes U_{BE})\hat\phi_0
    (\id_A\otimes U_{BE}^{\dagger})
\end{equation}
for some \(U_{BE}\in\mathsf U(BE)\). The normalized Haar measure \(\mathrm dU\) on \(\mathsf U(BE)\) therefore induces a probability measure \(\mathrm d\hat\phi\) on \(\mathcal P_{ABE}(\tau_A)\): for every possibly operator-valued function \(f\) of \(\hat\phi\), we define
\begin{equation}
    \int_{\mathcal P_{ABE}(\tau_A)}
        f(\hat\phi)\,\mathrm d\hat\phi
    \coloneq
    \int_{\mathsf U(BE)}
        f\!\left(
            (\id_A\otimes U_{BE})\hat\phi_0
            (\id_A\otimes U_{BE}^{\dagger})
        \right)\mathrm dU.
    \label{eq:fixed-marginal-haar-measure}
\end{equation}
By the right invariance of Haar measure, this definition is independent of the choice of reference purification \(\hat\phi_0\).
This construction is inspired by \cite{nahar2024postselection} whose authors considered de Finetti reduction with fixed marginal, but we apply it to de Finetti representation.

We first require a constrained analogue of \zcref{eq:haar-moment-symmetric-subspace}, in which the integration should be taken with respect to \(\mathrm d\hat\phi\) over \(\mathcal P_{ABE}(\tau_A)\). The following lemma provides precisely this relation as a direct consequence of \zcref{lem:fixed-marginal-haar-moment}.

\begin{lemma}
\label{lem:constrained-haar-identity}
For \(m\in\mathbb N\), let \(P_{\mathrm{sym}}^{(m)}\) be the projection onto
\(\operatorname{Sym}^{m}(ABE)\), and \(K_{A^m}\coloneq
    \left(
        \Tr_{(BE)^m}P_{\mathrm{sym}}^{(m)}
    \right)^{-1}\). Then
\begin{equation}
    \int_{\mathcal P_{ABE}(\tau_A)}
        |\hat\phi\rangle\!\langle\hat\phi|^{\otimes m}\,\mathrm d\hat\phi
    =
    d_A^{-m}
    \bigl(K_{A^m}\otimes\id_{(BE)^m}\bigr)
    P_{\mathrm{sym}}^{(m)}
    =
    d_A^{-m}
    P_{\mathrm{sym}}^{(m)}
    \bigl(K_{A^m}\otimes\id_{(BE)^m}\bigr).
    \label{eq:constrained-haar-identity}
\end{equation}
\end{lemma}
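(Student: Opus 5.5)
The plan is to reduce \zcref{lem:constrained-haar-identity} to \zcref{lem:fixed-marginal-haar-moment}, taking $R=BE$ there. The condition $d_A\leq d_R$ holds because $d_R=d_Bd_E=d_Ad_B^2\geq d_A$.

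First I fix the reference purification. By definition \zcref{eq:fixed-marginal-haar-measure}, the integral does not depend on which $\hat\phi_0\in\mathcal P_{ABE}(\tau_A)$ we use. A convenient choice is
\begin{equation}
|\hat\phi_0\rangle_{ABE}=d_A^{-1/2}|\theta\rangle_{AR}=d_A^{-1/2}\sum_{i=1}^{d_A}|i\rangle_A|i\rangle_{R},
\end{equation}
where $\{|i\rangle_R\}_{i=1}^{d_A}$ is any orthonormal family in $R=BE$, extended to a basis of $R$. This vector is normalized, and its $A$-marginal is $d_A^{-1}\sum_i|i\rangle\!\langle i|=\tau_A$, so $\hat\phi_0$ indeed lies in $\mathcal P_{ABE}(\tau_A)$.

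Next I substitute this choice into the defining integral. It becomes
\begin{equation}
\int_{\mathsf U(BE)}\bigl[(\id_A\otimes U_{BE})\hat\phi_0(\id_A\otimes U_{BE}^\dagger)\bigr]^{\otimes m}\,\mathrm dU
= d_A^{-m}\int_{\mathsf U(R)}\bigl[(\id_A\otimes U_R)|\theta\rangle\!\langle\theta|(\id_A\otimes U_R^\dagger)\bigr]^{\otimes m}\,\mathrm dU .
\end{equation}
The factor $d_A^{-m}$ comes out because $\hat\phi_0=d_A^{-1}|\theta\rangle\!\langle\theta|$ and the $m$-fold tensor power is homogeneous of degree $m$. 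The remaining integral is exactly the left side of \zcref{eq:constrained-Haar}. Applying that identity gives both stated equalities at once, namely $d_A^{-m}(K_{A^m}\otimes\id)P_{\mathrm{sym}}^{(m)}$ and $d_A^{-m}P_{\mathrm{sym}}^{(m)}(K_{A^m}\otimes\id)$.

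The only points that need care are bookkeeping. The projector $P_{\mathrm{sym}}^{(m)}$ onto $\operatorname{Sym}^m(ABE)$ must be read as the projector onto $\operatorname{Sym}^m(AR)$ under the identification $R=BE$. This is consistent because permuting the $m$ copies of $ABE$ is the same as permuting the $m$ copies of $AR$. The inverse $K_{A^m}$ is likewise the same object, since $\Tr_{(BE)^m}=\Tr_{R^m}$. I expect no real obstacle; the substance lies entirely in \zcref{lem:fixed-marginal-haar-moment}, which I take as given. As an alternative route, one could also apply \zcref{eq:fixed-marginal-haar-moment} directly with $\phi=\hat\phi_0$. Since $\phi_A=\tau_A$, this gives $\phi_A^{\otimes m}=d_A^{-m}\id$, which reproduces the same prefactor.
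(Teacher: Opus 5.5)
Your proposal is correct and is essentially the paper's proof: both reduce the statement to \zcref{lem:fixed-marginal-haar-moment} with $R=BE$. The paper uses the route you list as the alternative, applying \zcref{eq:fixed-marginal-haar-moment} to an arbitrary $\hat\phi_0\in\mathcal P_{ABE}(\tau_A)$ with $\hat\phi_{0,A}^{\otimes m}=d_A^{-m}\id_{A^m}$ and taking the commutation from the earlier lemma; your main route instead fixes $\hat\phi_0=d_A^{-1/2}|\theta\rangle$ and reads off \zcref{eq:constrained-Haar} directly, which is the same argument with different bookkeeping.
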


\begin{proof}
Apply \zcref{lem:fixed-marginal-haar-moment} with \(R=BE\). Since any \(\hat\phi_0\in\mathcal P_{ABE}(\tau_A)\) satisfies
\(\hat\phi_{0,A}=d_A^{-1}\id_A\), it gives
\begin{equation}
    \int_{\mathcal P_{ABE}(\tau_A)}
        \hat\phi^{\otimes m}\,\mathrm d\hat\phi
    =\int_{\mathsf U(BE)}
        \left(
            (\id_A\otimes U_{BE})\hat\phi_0
            (\id_A\otimes U_{BE}^{\dagger})
        \right)^{\otimes m}\mathrm dU
    =
    d_A^{-m}
    \bigl(
        K_{A^m}\otimes\id_{(BE)^m}
    \bigr)
    P_{\mathrm{sym}}^{(m)}.
\end{equation}
The commutation
\(
[K_{A^m}\otimes\id_{(BE)^m},P_{\mathrm{sym}}^{(m)}]=0
\) was also established in \zcref{lem:fixed-marginal-haar-moment}.
\end{proof}

The key difference from \zcref{eq:haar-moment-symmetric-subspace} is that the coefficient in front of \(P_{\mathrm{sym}}^{(m)}\) is no longer a scalar but an operator \(K_{A^m}\otimes\id_{(BE)^m}\). Similarly, the scalar coefficients \(p_j\) appearing in the state de Finetti argument are replaced here by operator coefficients on \(A^k\). We now define and normalize these coefficients.

For any fixed subset \(R\subseteq[n]\), let \(P_{\mathrm{sym}}^{(R)}\) be the projection onto the symmetric subspace of \(\bigotimes_{i\in R} A_iB_iE_i\), and denote \(K_{A^R}=
    \left(
        \Tr_{(BE)^R}P_{\mathrm{sym}}^{(R)}
    \right)^{-1}\in\mathcal{L}(A^R)\). When it is clear which systems we are considering, we write \(P_{\mathrm{sym}}^{(\ell)}\) and \(K_{A^\ell}\) for \(|R|=\ell\). By \zcref{lem:constrained-haar-identity}, 
\begin{equation}
    \int_{\mathcal P_{ABE}(\tau_A)}
        \hat\phi^{\otimes R}\,\mathrm d\hat\phi
    =
    d_A^{-|R|}
    \bigl(
        K_{A^R}\otimes\id_{(BE)^R}
    \bigr)
    P_{\mathrm{sym}}^{(R)}.
    \label{eq:constrained-haar-subset}
\end{equation}

For \(0\leq k\leq n\) and an arbitrary fixed pure state
\(\hat\psi=|\hat\psi\rangle\!\langle\hat\psi|
\in\mathcal P_{ABE}(\tau_A)\), let
\(F_{\hat\phi}\coloneq
|\langle\hat\phi|\hat\psi\rangle|^2\). Define
\begin{equation}
\begin{aligned}
    a_{nk}
    \coloneq
    \int_{\mathcal P_{ABE}(\tau_A)}
        F_{\hat\phi}^{\,n-k}\,\mathrm d\hat\phi                                 
    &=
    \Tr\left[
        \hat\psi^{\otimes(n-k)}
        \int_{\mathcal P_{ABE}(\tau_A)}
            \hat\phi^{\otimes(n-k)}\,\mathrm d\hat\phi
    \right]                                                                             \\
    &=
    d_A^{-(n-k)}
    \Tr\left[
        K_{A^{n-k}}\hat\psi_A^{\otimes(n-k)}
    \right]
    =
    d_A^{-2(n-k)}\Tr K_{A^{n-k}}.
\end{aligned}
\label{eq:ank-definition}
\end{equation}
The last equality follows from \(\hat\psi_A=\tau_A\). In particular, \(a_{nk}>0\) is a constant independent of \(\hat\psi\). For \(0\leq j\leq k\) and an arbitrary fixed pure state
\(\hat\psi\in\mathcal P_{ABE}(\tau_A)\), define \(L_j\in\mathcal L(A^k)\) by
\begin{equation}
    L_j
    \coloneq
    \frac{d_A^k}{a_{nk}}
    \int_{\mathcal P_{ABE}(\tau_A)}
        F_{\hat\phi}^{\,n-k}
        \Tr_{(BE)^k}
        \left[
            \Pi_{\hat\phi,j}^{(k)}
            \hat\psi^{\otimes k}
        \right]
        \mathrm d\hat\phi.
    \label{eq:Lj-definition}
\end{equation}
Recall (from the proof of \zcref{lem:haar-defect-coefficient}) that
\begin{equation}
    \Pi_{\hat\phi,j}^{(k)}
    =
    \sum_{\substack{S\subseteq[k]\\|S|=j}}
    \ \sum_{T\subseteq S}
        (-1)^{|T|}
        \hat\phi^{\otimes(([k]\setminus S)\cup T)}
        \otimes\id_{(ABE)^{S\setminus T}}.
    \label{eq:Pi-inclusion-exclusion-channel}
\end{equation}
For each \(S\subseteq[k]\) and \(T\subseteq S\), let \(R_{S,T}\coloneq([n]\setminus S)\cup T\) with \(|R_{S,T}|=n-j+|T|\). Substituting \zcref{eq:Pi-inclusion-exclusion-channel} into
\zcref{eq:Lj-definition}, applying \zcref{eq:constrained-haar-subset} to the copies indexed by \(R_{S,T}\), therefore gives an alternative expression for \(L_j\),
\begin{equation}
    L_j
    = \frac{1}{a_{nk}}
    \sum_{\substack{S\subseteq[k]\\|S|=j}}
    \ \sum_{T\subseteq S}
        (-1)^{|T|}
        d_A^{-(2n-k-j+|T|)}                                    
    \Tr_{A_{k+1}\cdots A_n}
        \left[
            K_{A^{R_{S,T}}}
            \otimes\id_{A^{S\setminus T}}
        \right].
\label{eq:Lj-K-expansion}
\end{equation}
This expression contains no reference to \(\hat\psi\), proving that \(L_j\)
is independent of its choice. It also shows that \(L_j\) is Hermitian, since
each \(K_{A^{R_{S,T}}}\) is positive. The operators \(L_j\) sum to identity, analogous to the scalars \(p_j\) summing to one. Indeed,
using \(\sum_{j=0}^k\Pi_{\hat\phi,j}^{(k)}=\id_{(ABE)^k}\),
\begin{equation}
\begin{aligned}
    \sum_{j=0}^{k}L_j
    &=
    \frac{d_A^k}{a_{nk}}
    \int_{\mathcal P_{ABE}(\tau_A)}
        F_{\hat\phi}^{\,n-k}
        \Tr_{(BE)^k}\hat\psi^{\otimes k}
        \,\mathrm d\hat\phi                                      \\
    &=
    \frac{d_A^k}{a_{nk}}
    \int_{\mathcal P_{ABE}(\tau_A)}
        F_{\hat\phi}^{\,n-k}
        \tau_A^{\otimes k}\,\mathrm d\hat\phi
    =
    \id_{A^k}.
\end{aligned}
\label{eq:Lj-normalization}
\end{equation}
We also compute the trace of \(L_j\) for future reference. Using \(\Tr(\hat\phi\hat\psi)=F_{\hat\phi}\),
\begin{equation}
    \Tr\left[
        \Pi_{\hat\phi,j}^{(k)}
        \hat\psi^{\otimes k}
    \right]
    =
    \binom{k}{j}
    F_{\hat\phi}^{\,k-j}
    \left(1-F_{\hat\phi}\right)^j.
\end{equation}
Taking the trace in \zcref{eq:Lj-definition} therefore gives
\begin{equation}
    \Tr L_j
    =
    \frac{d_A^k}{a_{nk}}
    \binom{k}{j}
    \int_{\mathcal P_{ABE}(\tau_A)}
        F_{\hat\phi}^{\,n-j}
        \left(1-F_{\hat\phi}\right)^j
        \,\mathrm d\hat\phi.
    \label{eq:trace-Lj}
\end{equation}
In particular, summing over \(j\) and performing the binomial sum gives
\begin{equation}
    \sum_{j=0}^{k}\Tr L_j
    =
    \frac{d_A^k}{a_{nk}}
    \int_{\mathcal P_{ABE}(\tau_A)}
        F_{\hat\phi}^{\,n-k}
        \left(
            F_{\hat\phi}+1-F_{\hat\phi}
        \right)^k
        \,\mathrm d\hat\phi
    =
    d_A^k,
\end{equation}
consistent with the normalization of \(L_j\).

For every \(\hat\phi\in\mathcal P_{ABE}(\tau_A)\), we define the (unnormalized) postselected operator
\begin{equation}
    \Psi_{\hat\phi}^{(k)}
    \coloneq
    \bigl(
        \id_{(ABE)^k}\otimes
        \langle\hat\phi|^{\otimes(n-k)}
    \bigr)
    \Psi^{(n)}
    \bigl(
        \id_{(ABE)^k}\otimes
        |\hat\phi\rangle^{\otimes(n-k)}
    \bigr)\in\mathcal{L}\left((ABE)^{\otimes k}\right),
    \label{eq:channel-conditional-state}
\end{equation}
supported on $\operatorname{Sym}^{k}(ABE)$, see the comment below \zcref{eq:abbreviation-contraction}.

The following lemma is the constrained analogue of \zcref{lem:contracted-defect-coefficient} and provides the key ingredient needed to extend the state de Finetti argument to channels.

\begin{lemma}
\label{lem:channel-projection-identities}
For every \(0\leq j\leq k\leq n\), let \(a_{nk}>0\) and \(L_j\in\mathcal{L}(A^k)\) be defined as
in \zcref{eq:ank-definition} and \zcref{eq:Lj-definition}, respectively. Then
\begin{align}
    \frac{1}{a_{nk}}
    \int_{\mathcal P_{ABE}(\tau_A)}
        \Tr_{E^k}
        \left[
            \Pi_{\hat\phi,j}^{(k)}
            \Psi_{\hat\phi}^{(k)}
        \right]
        \mathrm d\hat\phi
    &=
    \bigl(L_j\otimes\id_{B^k}\bigr)J^{(k)},
    \label{eq:channel-projected-left}
    \\
    \frac{1}{a_{nk}}
    \int_{\mathcal P_{ABE}(\tau_A)}
        \Tr_{E^k}
        \left[
            \Psi_{\hat\phi}^{(k)}
            \Pi_{\hat\phi,j}^{(k)}
        \right]
        \mathrm d\hat\phi
    &=
    J^{(k)}\bigl(L_j\otimes\id_{B^k}\bigr),
    \label{eq:channel-projected-right}
    \\
    \frac{1}{a_{nk}}
    \int_{\mathcal P_{ABE}(\tau_A)}
        \Tr_{E^k}\Psi_{\hat\phi}^{(k)}
        \,\mathrm d\hat\phi
    &=
    J^{(k)}.
    \label{eq:channel-conditional-resolution}
\end{align}
\end{lemma}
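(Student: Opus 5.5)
The plan is to follow the proof of \zcref{lem:contracted-defect-coefficient}, with the scalar Haar identity \zcref{eq:haar-moment-symmetric-subspace} replaced by its constrained analogue \zcref{eq:constrained-haar-subset}. One new ingredient is needed: the no-signalling condition \zcref{eq:choi-no-signalling-subsets}. It disposes of the discarded $A$-systems on which the operator coefficients $K_{A^R}$ still act nontrivially. This is legitimate because the lemma is only used in the exponential channel setting, where \zcref{Eq:no-signalling-last-position} is assumed.

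First, I rewrite the left-hand side of \zcref{eq:channel-projected-left} using the definition \zcref{eq:channel-conditional-state}:
\[
\int\Tr_{E^k}\bigl[\Pi_{\hat\phi,j}^{(k)}\Psi_{\hat\phi}^{(k)}\bigr]\mathrm d\hat\phi
=\Tr_{E^k}\Tr_{(ABE)^{n-k}}\Bigl[\Bigl(\int\Pi_{\hat\phi,j}^{(k)}\otimes\hat\phi^{\otimes(n-k)}\,\mathrm d\hat\phi\Bigr)\Psi^{(n)}\Bigr].
\]
Next I expand $\Pi_{\hat\phi,j}^{(k)}$ by inclusion--exclusion \zcref{eq:Pi-inclusion-exclusion-channel}. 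Each term becomes $\hat\phi^{\otimes R_{S,T}}\otimes\id_{(ABE)^{S\setminus T}}$, and by \zcref{eq:constrained-haar-subset} its integral equals
\[
d_A^{-|R_{S,T}|}\bigl(K_{A^{R_{S,T}}}\otimes\id\bigr)P_{\mathrm{sym}}^{(R_{S,T})}\otimes\id_{(ABE)^{S\setminus T}}.
\]
Since $|\Psi^{(n)}\rangle$ is symmetric under all of $S_n$, it is in particular invariant under permutations of the positions in $R_{S,T}$. Hence $P_{\mathrm{sym}}^{(R_{S,T})}\Psi^{(n)}=\Psi^{(n)}$, and the projector drops out. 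Each term therefore reduces to $d_A^{-|R_{S,T}|}\,\Tr_{E^k}\Tr_{(ABE)^{n-k}}\bigl[(K_{A^{R_{S,T}}}\otimes\id)\Psi^{(n)}\bigr]$.

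Now I carry out the partial traces. The operator $K_{A^{R_{S,T}}}\otimes\id_{A^{S\setminus T}}$ acts only on $A$-systems, so the traces over $E^n$ and $B_{k+1}\cdots B_n$ pass through it. This leaves
\[
\Tr_{A_{k+1}\cdots A_n}\bigl[(K_{A^{R_{S,T}}}\otimes\id)\,\Tr_{B_{k+1}\cdots B_n}J^{(n)}\bigr].
\]
By \zcref{eq:choi-no-signalling-subsets} with $S=\{k+1,\dots,n\}$, together with \zcref{eq:reduced-choi-state}, we have $\Tr_{B_{k+1}\cdots B_n}J^{(n)}=J^{(k)}\otimes\tau_A^{\otimes(n-k)}$. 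Since the discarded factor is proportional to the identity, the partial trace over $A_{k+1}\cdots A_n$ reduces to a partial trace of $K$ alone. Each term thus becomes
\[
d_A^{-|R_{S,T}|-(n-k)}\bigl(\Tr_{A_{k+1}\cdots A_n}[K_{A^{R_{S,T}}}\otimes\id_{A^{S\setminus T}}]\otimes\id_{B^k}\bigr)J^{(k)}.
\]
The total exponent is $|R_{S,T}|+(n-k)=2n-k-j+|T|$. Summing over $S$ and $T$ with signs $(-1)^{|T|}$ and dividing by $a_{nk}$ reproduces exactly the expansion \zcref{eq:Lj-K-expansion} of $L_j$, which proves \zcref{eq:channel-projected-left}.

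Identity \zcref{eq:channel-projected-right} then follows by taking Hermitian conjugates, using that $L_j$ (shown Hermitian after \zcref{eq:Lj-K-expansion}) and $J^{(k)}$ are Hermitian. Identity \zcref{eq:channel-conditional-resolution} follows by summing \zcref{eq:channel-projected-left} over $j$, using $\sum_j\Pi_{\hat\phi,j}^{(k)}=\id$ and \zcref{eq:Lj-normalization}. The step I expect to be the main obstacle is the treatment of the discarded $A$-systems. Unlike the state case, $K_{A^{R_{S,T}}}$ couples retained and discarded $A$'s and does not commute with the partial trace. The argument therefore hinges on the no-signalling identity turning the discarded $A$-marginal into a factor $\tau_A^{\otimes(n-k)}$ that is uncorrelated with $J^{(k)}$. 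Beyond that, the remaining work is careful bookkeeping of the powers of $d_A$ so that they match \zcref{eq:Lj-K-expansion}.
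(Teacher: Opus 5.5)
Your proposal is correct and follows the paper's own proof essentially step for step. Both expand $\Pi_{\hat\phi,j}^{(k)}$ by inclusion--exclusion, apply the constrained Haar identity on the copies $R_{S,T}$ with the symmetric projector absorbed by $\Psi^{(n)}$, and invoke no-signalling to split off $\tau_A^{\otimes(n-k)}$ so that the result matches the $K$-expansion of $L_j$. The remaining two identities then follow, as you say, by Hermitian conjugation and by summing over $j$.
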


\begin{proof}
All integrals below are over \(\mathcal P_{ABE}(\tau_A)\), we will hence omit it. For
\(S\subseteq[k]\) with \(|S|=j\) and \(T\subseteq S\), denote
\(R_{S,T}=([n]\setminus S)\cup T\). By \zcref{eq:Pi-inclusion-exclusion-channel} and \zcref{eq:channel-conditional-state},
\begin{equation}
\begin{aligned}
    &\int
        \Tr_{E^k}
        \left[
            \Pi_{\hat\phi,j}^{(k)}
            \Psi_{\hat\phi}^{(k)}
        \right]
        \mathrm d\hat\phi                                      \\
    &=
    \int
        \Tr_{E^k}
        \left[
            \Pi_{\hat\phi,j}^{(k)}
            \Tr_{(ABE)_{k+1}\cdots(ABE)_n}
            \left[
                \left(
                    \id_{(ABE)^k}
                    \otimes\hat\phi^{\otimes(n-k)}
                \right)
                \Psi^{(n)}
            \right]
        \right]
        \mathrm d\hat\phi                                      \\
    &=\sum_{\substack{S\subseteq[k]\\|S|=j}}
    \ \sum_{T\subseteq S}
        (-1)^{|T|}
        \Tr_{E^k(ABE)_{k+1}\cdots(ABE)_n}
        \left[
            \left(
                \int\hat\phi^{\otimes R_{S,T}}
                \mathrm d\hat\phi
                \otimes\id_{(ABE)^{S\setminus T}}
            \right)
            \Psi^{(n)}
        \right].
\end{aligned}
\end{equation}
Applying \zcref{eq:constrained-haar-subset} to the copies indexed by \(R_{S,T}\), the resulting projection \(P_{\mathrm{sym}}^{(R_{S,T})}\) may be removed because
\(\Psi^{(n)}\) is supported on \(\operatorname{Sym}^{n}(ABE)\). Hence
\begin{equation}
\begin{aligned}
    &\int
        \Tr_{E^k}
        \left[
            \Pi_{\hat\phi,j}^{(k)}
            \Psi_{\hat\phi}^{(k)}
        \right]
        \mathrm d\hat\phi                                      \\
    &=
    \sum_{\substack{S\subseteq[k]\\|S|=j}}
    \ \sum_{T\subseteq S}
        (-1)^{|T|}
        d_A^{-|R_{S,T}|}
        \Tr_{E^k(ABE)_{k+1}\cdots(ABE)_n}
        \left[
            \left(
                K_{A^{R_{S,T}}}
                \otimes\id_{A^{S\setminus T}(BE)^n}
            \right)
            \Psi^{(n)}
        \right]                                                 \\
    &=
    \sum_{\substack{S\subseteq[k]\\|S|=j}}
    \ \sum_{T\subseteq S}
        (-1)^{|T|}
        d_A^{-|R_{S,T}|}
        \Tr_{A_{k+1}\cdots A_nB_{k+1}\cdots B_n}
        \left[
            \left(
                K_{A^{R_{S,T}}}
                \otimes\id_{A^{S\setminus T}B^n}
            \right)
            J^{(n)}
        \right]                                                 \\
    &=
    \sum_{\substack{S\subseteq[k]\\|S|=j}}
    \ \sum_{T\subseteq S}
        (-1)^{|T|}
        d_A^{-(|R_{S,T}|+n-k)}
        \left(
            \Tr_{A_{k+1}\cdots A_n}
            \left[
                K_{A^{R_{S,T}}}
                \otimes\id_{A^{S\setminus T}}
            \right]
            \otimes\id_{B^k}
        \right)
        J^{(k)}                                                  \\
    &=
    a_{nk}
    \left(L_j\otimes\id_{B^k}\right)J^{(k)}.
\end{aligned}
\label{eq:projected-conditional-integral}
\end{equation}
The second equality uses \(\Tr_{E^n}\Psi^{(n)}=J^{(n)}\), and the third
uses the no-signalling requirement,
\(\Tr_{B_{k+1}\cdots B_n}J^{(n)}
=\tau_A^{\otimes(n-k)}\otimes J^{(k)}\), from \zcref{eq:choi-no-signalling-subsets}. The last equality follows from
\(|R_{S,T}|=n-j+|T|\) and the expression for \(L_j\) in \zcref{eq:Lj-K-expansion}. Dividing by \(a_{nk}\) proves \zcref{eq:channel-projected-left}.

Since \(L_j\) is Hermitian, taking the adjoint of \zcref{eq:channel-projected-left} gives \zcref{eq:channel-projected-right}. Finally, summing \zcref{eq:channel-projected-left} over \(j\), using
\(\sum_{j=0}^{k}\Pi_{\hat\phi,j}^{(k)}=\id_{(ABE)^k}\) and
\(\sum_{j=0}^{k}L_j=\id_{A^k}\) proves \zcref{eq:channel-conditional-resolution}.
\end{proof}

Many of the technical difficulties with the proof of the exponential channel de Finetti comes from the fact that it is much harder to deal with the operators \(L_j\) than the scalars \(p_j\). We have an explicit expression for \(p_j\), but only an integral expression \zcref{eq:Lj-definition} or a double summation expression \zcref{eq:Lj-K-expansion} for \(L_j\). While we don't have an easy expression of these operators\footnote{It is possible that a tighter exponential de Finetti theorem can be obtained by a more careful study of the operators \(L_j\), just like the explicit expression for \(p_j\) is crucial for refining the state de Finetti theorem.}, we can still bound their operator norm, which will be useful for error estimate later. The proof of the following lemma can be found in \zcref{App:Proof:lem:ank-Lj-bounds}.

\begin{lemma}
\label{lem:ank-Lj-bounds}
For every \(0\leq j\leq k\leq n\), let \(a_{nk}>0\) and \(L_j\in\mathcal{L}(A^k)\) be defined as
in \zcref{eq:ank-definition} and \zcref{eq:Lj-definition}, respectively, and recall that \(d_{n-k}=\dim\operatorname{Sym}^{n-k}(ABE)\). Then
\begin{equation}
    \frac{1}{d_{n-k}}
    \leq a_{nk}\leq 1,
    \label{eq:ank-bounds}
\end{equation}
and
\begin{equation}
    \lVert L_j\rVert_\infty
    \leq
    \binom{k}{j}d_A^{j/2}
    \left(
        \frac{1}{a_{nk}}
        \int_{\mathcal P_{ABE}(\tau_A)}
            F_{\hat\phi}^{\,n-j}
            (1-F_{\hat\phi})^j
            \,\mathrm d\hat\phi
    \right)^{1/2}.
    \label{eq:Lj-operator-norm-bound}
\end{equation}
\end{lemma}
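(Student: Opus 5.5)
The plan is to prove the two claims separately. Both rest on the definitions \zcref{eq:ank-definition} and \zcref{eq:Lj-definition} together with Cauchy--Schwarz-type inequalities.

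\emph{Bounds on \(a_{nk}\).} The upper bound \(a_{nk}\leq1\) is immediate, because \(0\leq F_{\hat\phi}\leq1\) and \(\mathrm d\hat\phi\) is a probability measure. For the lower bound I will use the closed form \(a_{nk}=d_A^{-2m}\Tr K_{A^m}\) from \zcref{eq:ank-definition}, with \(m=n-k\) and \(K_{A^m}=X^{-1}\), where \(X\coloneq\Tr_{(BE)^m}P_{\mathrm{sym}}^{(m)}>0\). The operator \(X\) acts on a space of dimension \(N=d_A^m\) and satisfies \(\Tr X=d_m\). For a positive operator, Cauchy--Schwarz on its eigenvalues gives \(\Tr X^{-1}\cdot\Tr X\geq N^2\). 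Hence \(\Tr K_{A^m}\geq d_A^{2m}/d_m\), which is exactly \(a_{nk}\geq 1/d_{n-k}\).

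\emph{Operator norm of \(L_j\).} I will bound \(|\langle u|L_j|v\rangle|\) for arbitrary unit vectors \(u,v\in A^k\).

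First, I insert the decomposition \(\Pi_{\hat\phi,j}^{(k)}=\sum_{|S|=j}Q_S\) with \(Q_S=\hat\phi^{\otimes S^{\mathrm c}}\otimes(\id-\hat\phi)^{\otimes S}\). Each \(S\)-term of the integrand becomes \(\langle\hat\psi^{\otimes k}|(|v\rangle\!\langle u|\otimes\id_{(BE)^k})Q_S|\hat\psi^{\otimes k}\rangle\).

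Next, I compute \(Q_S|\hat\psi^{\otimes k}\rangle=\langle\hat\phi|\hat\psi\rangle^{k-j}\,|\hat\phi\rangle^{\otimes S^{\mathrm c}}\otimes|\xi\rangle^{\otimes S}\), where \(|\xi\rangle=(\id-\hat\phi)|\hat\psi\rangle\) satisfies \(\|\xi\|^2=1-F_{\hat\phi}\). The prefactor therefore has modulus \(F_{\hat\phi}^{(k-j)/2}\).

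Then I apply Cauchy--Schwarz in \((BE)^k\) to the pair of partial contractions \((\langle v|\otimes\id)|\hat\psi^{\otimes k}\rangle\) and \((\langle u|\otimes\id)|\hat\phi^{\otimes S^{\mathrm c}}\xi^{\otimes S}\rangle\).
\begin{itemize}
\item The first has squared norm \(\langle v|\tau_A^{\otimes k}|v\rangle=d_A^{-k}\), since \(\hat\psi_A=\tau_A\).
\item The second has squared norm \(\langle u|\tau_A^{\otimes(k-j)}\otimes\xi_A^{\otimes j}|u\rangle\leq d_A^{-(k-j)}(1-F_{\hat\phi})^j\). Here I use \(\hat\phi_A=\tau_A\), which is the point of integrating over \(\mathcal P_{ABE}(\tau_A)\), and \(\|\xi_A\|_\infty\leq\|\xi\|^2\).
\end{itemize}
Each \(S\)-term is thus at most \(d_A^{-k+j/2}F_{\hat\phi}^{(k-j)/2}(1-F_{\hat\phi})^{j/2}\), and summing over the \(S\) gives a factor \(\binom{k}{j}\).

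Finally, multiplying by the prefactor \(d_A^k/a_{nk}\), I integrate against \(F_{\hat\phi}^{n-k}\,\mathrm d\hat\phi\). I write \(F^{n-k}\cdot F^{(k-j)/2}(1-F)^{j/2}=F^{(n-k)/2}\cdot F^{(n-j)/2}(1-F)^{j/2}\) and apply Cauchy--Schwarz in \(L^2(\mathrm d\hat\phi)\). This bounds the integral by \(\sqrt{a_{nk}}\bigl(\int F^{n-j}(1-F)^j\,\mathrm d\hat\phi\bigr)^{1/2}\). Collecting factors gives \(\binom{k}{j}d_A^{j/2}\bigl(a_{nk}^{-1}\int F^{n-j}(1-F)^j\,\mathrm d\hat\phi\bigr)^{1/2}\), and taking the supremum over \(u,v\) yields \zcref{eq:Lj-operator-norm-bound}.

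\emph{Main obstacle.} The main subtlety is getting \(d_A^{j/2}\) instead of the cruder \(d_A^{k/2}\). A single Cauchy--Schwarz applied to \(\Pi_{\hat\phi,j}^{(k)}|\hat\psi^{\otimes k}\rangle\) as a whole would lose the information on the non-defect positions. One has to split into the \(S\)-terms and exploit that both \(\hat\psi\) and \(\hat\phi\) have maximally mixed \(A\)-marginals on the \(k-j\) non-defect positions; only the \(j\) defect positions then cost a factor \(d_A^{1/2}\) each.
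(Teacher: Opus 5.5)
Your proposal is correct and follows essentially the same route as the paper. Both use Cauchy--Schwarz on the eigenvalues of $K_{A^{n-k}}^{-1}$ for the lower bound on $a_{nk}$, expand $\Pi_{\hat\phi,j}^{(k)}\hat\psi^{\otimes k}$ into subset terms so that $\hat\phi_A=\hat\psi_A=\tau_A$ supplies a factor $d_A^{-1}$ on each of the $k-j$ non-defect copies, and finish with Cauchy--Schwarz in $L^2(\mathrm d\hat\phi)$ by splitting off $F_{\hat\phi}^{(n-k)/2}$.

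The only difference is cosmetic. The paper bounds the operator norm of each single-copy tensor factor using a one-copy partial-trace Cauchy--Schwarz and then multiplies the bounds. You instead apply one Cauchy--Schwarz over all of $(BE)^k$ for each subset $S$. Both give the same per-term bound $d_A^{-k+j/2}F_{\hat\phi}^{(k-j)/2}(1-F_{\hat\phi})^{j/2}$.
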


We will now explicitly construct the convex mixture of almost-iid Choi states that are used to approximate the Choi state \(J^{(k)}\) of the reduced channel. Before presenting the full proof, we briefly explain the difficulty and how we address them. The construction closely follows the state analogue, in which one considers the projected operators
\begin{equation}
   P_{\phi}^{(k,r)}\Psi_{\phi}^{(k)}P_{\phi}^{(k,r)}, 
\end{equation}
see, e.g., \zcref{eq:state-prob-measure}. The natural modification in the channel setting would be to replace \(\phi\) by \(\hat{\phi}\) and consider
\begin{equation}
    P_{\hat\phi}^{(k,r)}\Psi_{\hat\phi}^{(k)}P_{\hat\phi}^{(k,r)},
\end{equation}
so that the reference states are by construction Choi, representing reference channels. 

As we have stressed at the beginning of this section, however, although \(\hat\phi_{AB}\) is a Choi state, a state that is almost-iid along a Choi state need not itself
satisfy the Choi constraint, since the defects will contaminate the input marginal. 

To deal with this, we apply an additional Haar twirling \(V\) on the purifying system \(E\) and define
\begin{equation}
    \omega_{\hat\phi}^{(k)} \sim \int_{\mathsf U(E)}
    \bigl(\id_{AB}\otimes V_E^\dagger\bigr)^{\otimes k}
    \Psi_{(\id_{AB}\otimes V_E)\hat\phi(\id_{AB}\otimes V^\dagger_E)}^{(k)}
    \bigl(\id_{AB}\otimes V_E\bigr)^{\otimes k}
    \,\mathrm dV.
\end{equation}
This is to ensure \(\omega_{\hat\phi}^{(k)}\) satisfies the Choi constraint while \(\Psi_{\hat\phi}^{(k)}\) does not. We then consider
\begin{equation}
    P_{\hat\phi}^{(k,r)}\omega_{\hat\phi}^{(k)}P_{\hat\phi}^{(k,r)}.
\end{equation}
The projection will inevitably disturb the input marginal, causing a deviation from the Choi constraint. But since \(\omega_{\hat\phi}^{(k)}\) is enforced to be Choi, this deviation can be characterized by how much the projection disturbs the state:
\begin{equation}
        \Tr_{(BE)^k}\!\left(
            P_{\hat\phi}^{(k,r)}
            \omega_{\hat\phi}^{(k)}
            P_{\hat\phi}^{(k,r)}
        \right)
        -\tau_A^{\otimes k}
    =
        \Tr_{(BE)^k}\!\left(
            P_{\hat\phi}^{(k,r)}
            \omega_{\hat\phi}^{(k)}
            P_{\hat\phi}^{(k,r)}
            -\omega_{\hat\phi}^{(k)}
        \right).
\end{equation}
We expect this disturbance to be exponentially small, for the same reason that the almost-iid projection causes only exponentially small disturbance in the state case, see, e.g., \zcref{eq:state-cross-term}. If the deviation were exactly zero, then we are already done. However, if the deviation is small, we can still repair it by paying an exponentially vanishing cost. We will now formalize the above intuition into the following two propositions.

\begin{proposition}
\label{prop:projected-channel-df}
Fix integers \(0\leq r\leq k\leq n\), and let \(J^{(n)}\) be the Choi state of a permutation-covariant and no-signalling channel \(\mathcal N^{(n)}_{A'^n\to B^n}\in\mathrm{CPTP}(A'^n,B^n)\). Its reduced channel from \(A'^k\) to \(B^k\) is defined via \zcref{eq:reduced-channel}, with Choi state \(J^{(k)}\). With \(a_{nk}\) and \(\{L_j\}_{j=0}^k\) defined
in \zcref{eq:ank-definition} and \zcref{eq:Lj-definition}, respectively, let
\begin{equation}
    L_{>r}\coloneq\sum_{j=r+1}^k L_j,
    \qquad
    \eta_{nkr}
    \coloneq
    \left(
        1+2d_k\lVert L_{>r}\rVert_\infty
    \right)^{-1},
    \label{eq:eta-nkr}
\end{equation}
where \(d_k=\dim\operatorname{Sym}^k(ABE)\), with the convention \(L_{>k}=0\). Then there exist a probability measure \(\mu_{nkr}\) on \(\mathcal P_{ABE}(\tau_A)\) and, for every \(\hat\phi\in\mathcal P_{ABE}(\tau_A)\), a permutation-invariant state
\(\omega_{\hat\phi}^{(k)}\) on \((ABE)^k\) satisfying \(\omega_{\hat\phi,A^k}^{(k)}=\tau_A^{\otimes k}\), such that
\begin{equation}
    \eta_{nkr}J^{(k)}
    \leq
    \int_{\mathcal P_{ABE}(\tau_A)}
        \Tr_{E^k}\!\left[
            P_{\hat\phi}^{(k,r)}
            \omega_{\hat\phi}^{(k)}
            P_{\hat\phi}^{(k,r)}
        \right]
        \mathrm d\mu_{nkr}(\hat\phi),
    \label{eq:projected-channel-df}
\end{equation}
and
\begin{equation}
    \int_{\mathcal P_{ABE}(\tau_A)}
        \Tr\!\left[
            \bigl(\id_{(ABE)^k}-P_{\hat\phi}^{(k,r)}\bigr)
            \omega_{\hat\phi}^{(k)}
        \right]
        \mathrm d\mu_{nkr}(\hat\phi)
        \leq
        \eta_{nkr}d_{n-k}
        \left(\frac{k}{n}\right)^{r+1}.
    \label{eq:projected-channel-tail}
\end{equation}
\end{proposition}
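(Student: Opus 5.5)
The plan is to mirror the proof of \zcref{thm:exp-df-states-operator}, replacing the scalars $p_j$ by the operators $L_j$ of \zcref{lem:channel-projection-identities}. The one new ingredient is a Haar twirl on the purifying system $E$, which forces each atom to satisfy the Choi constraint.

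\emph{Step 1: the twirled atoms.} For $\hat\phi\in\mathcal P_{ABE}(\tau_A)$ set
\begin{equation*}
\widetilde\omega_{\hat\phi}\coloneq\int_{\mathsf U(E)}(\id_{AB}\otimes V_E^\dagger)^{\otimes k}\,\Psi^{(k)}_{(\id\otimes V_E)\hat\phi(\id\otimes V_E^\dagger)}\,(\id_{AB}\otimes V_E)^{\otimes k}\,\mathrm dV .
\end{equation*}
Two facts make this useful.
\begin{enumerate}[label=(\alph*)]
\item Conjugating by $V_E$ maps $\mathcal P_{ABE}(\tau_A)$ to itself and preserves $\mathrm d\hat\phi$. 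Moreover $\Pi^{(k)}_{V\hat\phi V^\dagger,j}=V^{\otimes k}\Pi^{(k)}_{\hat\phi,j}V^{\dagger\otimes k}$, and $\Tr_{E^k}$ is invariant under conjugation by $V_E^{\otimes k}$. Hence every identity of \zcref{lem:channel-projection-identities} continues to hold with $\Psi^{(k)}_{\hat\phi}$ replaced by $\widetilde\omega_{\hat\phi}$. The same argument shows $\widetilde\omega_{\hat\phi}$ is supported on $\operatorname{Sym}^k(ABE)$ and is permutation invariant.
\item I then need $\Tr_{(BE)^k}\widetilde\omega_{\hat\phi}\propto\tau_A^{\otimes k}$ for each $\hat\phi$. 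The route I would try combines the twirl with the no-signalling identity $\Tr_{B_{k+1}\cdots B_n}J^{(n)}=\tau_A^{\otimes(n-k)}\otimes J^{(k)}$ from \zcref{eq:choi-no-signalling-subsets}, applied inside the contraction by $\hat\phi^{\otimes(n-k)}$, after expanding the $E$-twirl of $\hat\phi^{\otimes(n-k)}$.
\end{enumerate}
I expect (b) to be delicate and would check it first. If it fails, the fallback is to define $\omega^{(k)}_{\hat\phi}$ as the normalized $\widetilde\omega_{\hat\phi}$ followed by a further twirl on $A$ compatible with $\hat\phi$.

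\emph{Step 2: normalization.} Set $\omega^{(k)}_{\hat\phi}\coloneq\widetilde\omega_{\hat\phi}/\Tr\widetilde\omega_{\hat\phi}$ and use the density $\mathrm d\mu\propto\Tr(\widetilde\omega_{\hat\phi})\,\mathrm d\hat\phi/a_{nk}$, with the overall constant fixed at the end.

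\emph{Step 3: the operator inequality.} For $\lambda\in\mathbb R$, start from $0\le(P^{(k,r)}_{\hat\phi}-\lambda\id)\widetilde\omega_{\hat\phi}(P^{(k,r)}_{\hat\phi}-\lambda\id)$. Take $\Tr_{E^k}$, integrate against $\mathrm d\hat\phi/a_{nk}$, and use the twirled versions of \zcref{eq:channel-projected-left,eq:channel-projected-right,eq:channel-conditional-resolution} together with $\sum_{j\le r}L_j=\id-L_{>r}$. This gives
\begin{equation*}
\frac1{a_{nk}}\int\Tr_{E^k}\bigl[P\widetilde\omega P\bigr]\,\mathrm d\hat\phi\;\ge\;(2\lambda-\lambda^2)J^{(k)}-\lambda\bigl(L_{>r}J^{(k)}+J^{(k)}L_{>r}\bigr),
\end{equation*}
where $L_{>r}$ is understood as $L_{>r}\otimes\id_{B^k}$.

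The main obstacle is the non-commuting cross term $L_{>r}J+JL_{>r}$, since the scalar trick of the state case is no longer available. I would bound it from below by $-2\lVert L_{>r}\rVert_\infty$ times an operator $Y$ that is itself dominated by $d_k$ times the right-hand mixture. The natural candidate is $Y=\Tr_{E^k}P^{(k)}_{\mathrm{sym}}$-type control: $\Psi^{(k)}$ and each $\widetilde\omega$ are supported on $\operatorname{Sym}^k$, and $P^{(k)}_{\mathrm{sym}}=d_k\int\phi^{\otimes k}\,\mathrm d\phi$. After rearranging and choosing $\lambda=1$, this should give $J^{(k)}\le(1+2d_k\lVert L_{>r}\rVert_\infty)\times(\text{mixture})$, i.e.\ \zcref{eq:projected-channel-df} with $\eta_{nkr}$ as in \zcref{eq:eta-nkr}; the normalization of $\mu_{nkr}$ is what absorbs this factor.

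\emph{Step 4: the tail bound.} By the trace formula \zcref{eq:trace-Lj}, the averaged mass outside the almost-iid subspace is
\begin{equation*}
\sum_{j>r}\frac{\binom kj}{a_{nk}}\int F_{\hat\phi}^{\,n-j}(1-F_{\hat\phi})^j\,\mathrm d\hat\phi,
\end{equation*}
up to the factor $d_A^{-k}$ that $\Tr J=1$ compensates. I would bound it exactly as $\epsilon_{nkr}$ was bounded in \zcref{thm:exp-df-states-operator}: use $a_{nk}\ge d_{n-k}^{-1}$ from \zcref{lem:ank-Lj-bounds}, apply the binomial/beta estimate to produce the factor $(k/n)^{r+1}$, and multiply by the measure normalization $\eta_{nkr}$. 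This yields \zcref{eq:projected-channel-tail}.
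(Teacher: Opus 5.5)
\paragraph{The gap is in Step 3.}
Your Steps 1 and 4 follow the paper's route, but Step 3 has a genuine gap, and it comes from your choice of atoms in Step 2. With $\omega^{(k)}_{\hat\phi}=\widetilde\omega_{\hat\phi}/\Tr\widetilde\omega_{\hat\phi}$ and density $\Tr(\widetilde\omega_{\hat\phi})/a_{nk}$, the measure is already normalized, because the traced twirled resolution gives $\int\Tr\widetilde\omega_{\hat\phi}\,\mathrm d\hat\phi=a_{nk}$. So there is no constant left to fix. Write $P=P^{(k,r)}_{\hat\phi}$. The right-hand side of the claimed operator inequality is then exactly $M\coloneqq a_{nk}^{-1}\int\Tr_{E^k}[P\widetilde\omega_{\hat\phi}P]\,\mathrm d\hat\phi$. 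Your argument needs an operator $Y$ with $L_{>r}J^{(k)}+J^{(k)}L_{>r}\le 2\lVert L_{>r}\rVert_\infty Y$ \emph{and} $Y\le d_kM$, and nothing in your construction supplies the second inequality.

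Your candidate $Y$ is built from $P^{(k)}_{\mathrm{sym}}=d_k\int\phi^{\otimes k}\,\mathrm d\phi$, so it is state-independent and has essentially full support. By contrast, $M$ depends on $\Psi^{(n)}$. The identity you use in Step 3 gives $M=J^{(k)}-L_{>r}J^{(k)}-J^{(k)}L_{>r}+Q$, where $Q\coloneqq a_{nk}^{-1}\int\Tr_{E^k}[(\id-P)\widetilde\omega_{\hat\phi}(\id-P)]\,\mathrm d\hat\phi$. Suppose $J^{(k)}$ is rank-deficient, e.g. for an iid unitary channel. Then off the support of $J^{(k)}$, $M$ coincides with $Q$. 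Meanwhile the cross term has a nonzero block mapping that support into its complement, since $L_{>r}\otimes\id_{B^k}$ need not preserve it, and this block must be compensated there. No estimate relating $Q$ to this block is available.

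There is a second problem with the candidate. The unconstrained atoms $\phi_{AB}^{\otimes k}$ are not Choi states and are not indexed by $\mathcal P_{ABE}(\tau_A)$, so they cannot simply be added to the mixture either.

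\paragraph{The missing idea: put the dominating operator into the atoms.}
The paper takes $\omega^{(k)}_{\hat\phi}\propto\widetilde\omega_{\hat\phi}+2a_{nk}d_k\lVert L_{>r}\rVert_\infty\,\hat\phi^{\otimes k}$, with density proportional to $\Tr\widetilde\omega_{\hat\phi}+2a_{nk}d_k\lVert L_{>r}\rVert_\infty$. The extra iid term is harmless: $\hat\phi^{\otimes k}\in\operatorname{ran}P^{(k,r)}_{\hat\phi}$ is untouched by the projection, it is permutation invariant, and it is Choi since $\hat\phi_A=\tau_A$.

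The cross term is then dominated using the \emph{constrained} moment identity $\int_{\mathcal P_{ABE}(\tau_A)}\hat\phi^{\otimes k}\,\mathrm d\hat\phi=d_A^{-k}(K_{A^k}\otimes\id)P^{(k)}_{\mathrm{sym}}$, in four steps.
\begin{enumerate}
\item Lift to $\Psi^{(k)}$ and use $X\Psi^{(k)}+\Psi^{(k)}X\le t\Psi^{(k)}+t^{-1}X\Psi^{(k)}X$ with $X=L_{>r}\otimes\id_{(BE)^k}$ and $t=\lVert L_{>r}\rVert_\infty$.
\item The right-hand side $R$ lives on $\operatorname{Sym}^k(ABE)$, because $L_{>r}$ commutes with permutations of $A^k$.
\item Hence $R\le d_A^k\Tr[(K_{A^k}^{-1}\otimes\id)R]\int\hat\phi^{\otimes k}\,\mathrm d\hat\phi$.
\item The Choi constraint $\Psi^{(k)}_{A^k}=\tau_A^{\otimes k}$ together with $\Tr K_{A^k}^{-1}=d_k$ bounds the prefactor by $2d_k\lVert L_{>r}\rVert_\infty$.
\end{enumerate}
The total mass becomes $a_{nk}(1+2d_k\lVert L_{>r}\rVert_\infty)$, which is exactly where $\eta_{nkr}$ comes from. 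The added atoms carry no discarded weight, so this normalization is also the only source of the factor $\eta_{nkr}$ in the tail bound. With your atoms the tail is $d_A^{-k}\Tr L_{>r}$, with no $\eta_{nkr}$.

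\paragraph{Two smaller points.}
In Step 1(b), the identity you need is for the subset $[k]$, namely $\Tr_{B^k}J^{(n)}=\tau_A^{\otimes k}\otimes J^{(n-k)}$. Combine it with the fixed-marginal Haar moment lemma, applied with $A\to AB$ and $R\to E$, to evaluate the $E$-twirl of $\hat\phi^{\otimes(n-k)}$. Then (b) holds and no fallback is needed.

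In Step 4 there is no closed form analogous to $p_j$ for integrals over $\mathcal P_{ABE}(\tau_A)$. Use instead that $\binom{k}{j}/\binom{n}{j}$ decreases in $j$, together with $\sum_{j=0}^n\binom{n}{j}F^{n-j}(1-F)^j=1$ and $a_{nk}\ge d_{n-k}^{-1}$.
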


\begin{proof}
Recall that \(\lvert\Psi^{(n)}\rangle\) is a symmetric purification of \(J^{(n)}\), and \(\Psi^{(k)}=
    \Tr_{(ABE)_{k+1}\cdots(ABE)_n}\Psi^{(n)}\) with \(\Tr_{E^k}\Psi^{(k)}=J^{(k)}\). For every \(\hat\phi\in\mathcal P_{ABE}(\tau_A)\), let \(\Psi_{\hat\phi}^{(k)}=
    \langle\hat\phi|^{\otimes(n-k)}
    \Psi^{(n)}
    |\hat\phi\rangle^{\otimes(n-k)}\).

\textbf{Step 1: Construction of Choi-compatible operators.}
We apply a twirling on \(E\) so that Choi constraint is enforced before the almost-iid projection. For
\(V\in\mathsf U(E)\), let
\begin{equation}
    |\hat\phi_V\rangle
    \coloneq
    (\id_{AB}\otimes V_E)|\hat\phi\rangle
\end{equation}
and define
\begin{equation}
    \widetilde\omega_{\hat\phi}^{(k)}
    \coloneq
    \int_{\mathsf U(E)}
        (\id_{AB}\otimes V_E^\dagger)^{\otimes k}
        \Psi_{\hat\phi_V}^{(k)}
        (\id_{AB}\otimes V_E)^{\otimes k}
        \,\mathrm dV.
\end{equation}
It is easy to see each \(\widetilde\omega_{\hat\phi}^{(k)}\) is permutation invariant. We now verify its input marginal. For any fixed \(\hat\phi\), \zcref{lem:fixed-marginal-haar-moment} applied with \(A\to AB\) and \(R\to E\) gives
\begin{equation}
    \int_{\mathsf U(E)}
        \hat\phi_V^{\otimes(n-k)}
        \,\mathrm dV
    =
    \left(
    \hat\phi_{AB}^{\otimes(n-k)}
        K_{(AB)^{n-k}}
        \otimes\id_{E^{n-k}}
    \right)
    P_{\mathrm{sym}}^{(n-k)}, \quad K_{(AB)^{n-k}}
    =
    \left(
        \Tr_{E^{n-k}}P_{\mathrm{sym}}^{(n-k)}
    \right)^{-1}.
\end{equation}
Using \(\Tr_{E^n}\Psi^{(n)}=J^{(n)}\) and the no-signalling identity
\(\Tr_{B^k}J^{(n)}
=\tau_A^{\otimes k}\otimes J^{(n-k)}\) gives
\begin{equation}
\begin{aligned}
    \widetilde\omega_{\hat\phi,A^k}^{(k)}
    &=\int_{\mathsf U(E)}\Tr_{(BE)^k} \Tr_{(ABE)^{n-k}}
    \left[
        \id_{(ABE)^k}\otimes|\hat\phi_V\rangle\!\langle\hat\phi_V|^{\otimes(n-k)}\Psi^{(n)}
    \right] \,\mathrm dV
    \\
    &=
    \Tr_{(BE)^k(ABE)^{n-k}}
    \left[
        \left(
            \id_{(ABE)^k}
            \otimes
            \hat\phi_{AB}^{\otimes(n-k)}
            K_{(AB)^{n-k}}
            \otimes\id_{E^{n-k}}
        \right)
        \Psi^{(n)}
    \right]
    \\
    &=
    \Tr_{(AB)^{n-k}}\Tr_{B^k}
    \left[
        \left(
            \id_{A^kB^k}
            \otimes
            \hat\phi_{AB}^{\otimes(n-k)}
            K_{(AB)^{n-k}}
        \right)
        J^{(n)}
    \right]
    \\
    &=
    \Tr\!\left[
    \hat\phi_{AB}^{\otimes(n-k)}
        K_{(AB)^{n-k}}
        J^{(n-k)}
    \right]
    \tau_A^{\otimes k}
\equiv
    \Tr\!\left[\widetilde\omega_{\hat\phi}^{(k)}\right]
    \tau_A^{\otimes k}.
\end{aligned}
\end{equation}
Thus, up to normalization, 
\(\widetilde\omega_{\hat\phi}^{(k)}\) satisfies the Choi constraint.

\textbf{Step 2: Projection identities after twirling.}
We next show that the additional twirl does not change the projection
identities established in \zcref{lem:channel-projection-identities}. Indeed,
\begin{equation}
    \Pi_{\hat\phi_V,j}^{(k)}
    =
    (\id_{AB}\otimes V_E)^{\otimes k}
    \Pi_{\hat\phi,j}^{(k)}
    (\id_{AB}\otimes V_E^\dagger)^{\otimes k},
\end{equation}
and the measure \(\mathrm d\hat\phi\) is invariant under
\(\hat\phi\mapsto\hat\phi_V\). Hence, for every \(0\leq j\leq k\),
\begin{equation}
\begin{aligned}
    \frac{1}{a_{nk}}
    \int_{\mathcal P_{ABE}(\tau_A)}
        \Tr_{E^k}
        \left[
            \Pi_{\hat\phi,j}^{(k)}
            \widetilde\omega_{\hat\phi}^{(k)}
        \right]
        \,\mathrm d\hat\phi
    &=
    (L_j\otimes\id_{B^k})J^{(k)},                                      \\
    \frac{1}{a_{nk}}
    \int_{\mathcal P_{ABE}(\tau_A)}
        \Tr_{E^k}
        \left[
            \widetilde\omega_{\hat\phi}^{(k)}
            \Pi_{\hat\phi,j}^{(k)}
        \right]
        \,\mathrm d\hat\phi
    &=
    J^{(k)}(L_j\otimes\id_{B^k}),                                      \\
    \frac{1}{a_{nk}}
    \int_{\mathcal P_{ABE}(\tau_A)}
        \Tr_{E^k}\widetilde\omega_{\hat\phi}^{(k)}
        \,\mathrm d\hat\phi
    &=
    J^{(k)}.
\end{aligned}
\end{equation}
Since
\(\id-P_{\hat\phi}^{(k,r)}
=\sum_{j=r+1}^k\Pi_{\hat\phi,j}^{(k)}\), summing the first two
identities over \(j=r+1,\cdots,k\) gives
\begin{equation}
\label{eq:channel-projection-identities-twirl}
\begin{aligned}
    \frac{1}{a_{nk}}
    \int_{\mathcal P_{ABE}(\tau_A)}
        \Tr_{E^k}
        \left[
            \bigl(\id_{(ABE)^k}-P_{\hat\phi}^{(k,r)}\bigr)
            \widetilde\omega_{\hat\phi}^{(k)}
        \right]
        \,\mathrm d\hat\phi
    &=
    (L_{>r}\otimes\id_{B^k})J^{(k)},                                   \\
    \frac{1}{a_{nk}}
    \int_{\mathcal P_{ABE}(\tau_A)}
        \Tr_{E^k}
        \left[
            \widetilde\omega_{\hat\phi}^{(k)}
            \bigl(\id_{(ABE)^k}-P_{\hat\phi}^{(k,r)}\bigr)
        \right]
        \,\mathrm d\hat\phi
    &=
    J^{(k)}(L_{>r}\otimes\id_{B^k}).
\end{aligned}
\end{equation}
Taking the trace of the third identity yields
\begin{equation}
\label{eq:channel-identity-twirl}
    \int_{\mathcal P_{ABE}(\tau_A)}
        \Tr\widetilde\omega_{\hat\phi}^{(k)}
        \,\mathrm d\hat\phi
    =
    a_{nk}.
\end{equation}

\textbf{Step 3: Controlling the cross terms.}
Consider the operator identity
\begin{equation}
\label{eq:op-identity}
\begin{aligned}
    P_{\hat\phi}^{(k,r)}
    \widetilde\omega_{\hat\phi}^{(k)}
    P_{\hat\phi}^{(k,r)}
    &=
    \widetilde\omega_{\hat\phi}^{(k)}
    -
    \bigl(\id-P_{\hat\phi}^{(k,r)}\bigr)
    \widetilde\omega_{\hat\phi}^{(k)}
    -
    \widetilde\omega_{\hat\phi}^{(k)}
    \bigl(\id-P_{\hat\phi}^{(k,r)}\bigr)
    +
    \bigl(\id-P_{\hat\phi}^{(k,r)}\bigr)
    \widetilde\omega_{\hat\phi}^{(k)}
    \bigl(\id-P_{\hat\phi}^{(k,r)}\bigr)
    \\
    &\geq
    \widetilde\omega_{\hat\phi}^{(k)}
    -
    \bigl(\id-P_{\hat\phi}^{(k,r)}\bigr)
    \widetilde\omega_{\hat\phi}^{(k)}
    -
    \widetilde\omega_{\hat\phi}^{(k)}
    \bigl(\id-P_{\hat\phi}^{(k,r)}\bigr),
\end{aligned}
\end{equation}
where in the second line we have thrown away the last positive term. Consequently, taking the partial trace and the integral, using \zcref{eq:channel-projection-identities-twirl},
\begin{equation}
    \frac{1}{a_{nk}}
    \int_{\mathcal P_{ABE}(\tau_A)}
        \Tr_{E^k}
        \left[
            P_{\hat\phi}^{(k,r)}
            \widetilde\omega_{\hat\phi}^{(k)}
            P_{\hat\phi}^{(k,r)}
        \right]
        \,\mathrm d\hat\phi
    \geq\;
    J^{(k)}
    -
    (L_{>r}\otimes\id_{B^k})J^{(k)}
    -
    J^{(k)}(L_{>r}\otimes\id_{B^k}).
\end{equation}
It remains to dominate the last two cross terms by a mixture of Choi states. We show in \zcref{App:Proof:cross-term-domination} that
\begin{equation}
\label{eq:cross-term-domination}
    (L_{>r}\otimes\id_{B^k})J^{(k)}
    +
    J^{(k)}(L_{>r}\otimes\id_{B^k})
    \leq
    2d_k\|L_{>r}\|_\infty
    \int_{\mathcal P_{ABE}(\tau_A)}
        \hat\phi_{AB}^{\otimes k}\,\mathrm d\hat\phi.
\end{equation}
Together this yields
\begin{equation}
\label{eq:constrained-op-ineq}
   a_{nk}J^{(k)}
   \leq \int_{\mathcal P_{ABE}(\tau_A)}
        \left(\Tr_{E^k}
        \left[
            P_{\hat\phi}^{(k,r)}
            \widetilde\omega_{\hat\phi}^{(k)}
            P_{\hat\phi}^{(k,r)}
        \right]+2a_{nk}d_k\|L_{>r}\|_\infty \hat\phi_{AB}^{\otimes k}
        \right)
        \,\mathrm d\hat\phi.
\end{equation}

\textbf{Step 4: State and measure normalization.}
Define normalized states and probability measure:
\begin{equation}
    \omega_{\hat\phi}^{(k)}
    \coloneq
    \frac{
        \widetilde\omega_{\hat\phi}^{(k)}
        +
        2a_{nk}d_k\|L_{>r}\|_\infty
        \hat\phi^{\otimes k}
    }{
        \Tr\widetilde\omega_{\hat\phi}^{(k)}
        +
        2a_{nk}d_k\|L_{>r}\|_\infty
    }, \quad 
    \mathrm d\mu_{nkr}(\hat\phi)
    \coloneq
    \frac{
        \Tr\widetilde\omega_{\hat\phi}^{(k)}
        +
        2a_{nk}d_k\|L_{>r}\|_\infty
    }{
        a_{nk}
        \bigl(1+2d_k\|L_{>r}\|_\infty\bigr)
    }
    \,\mathrm d\hat\phi.
\end{equation}
The fact that \(\mu_{nkr}\) is a probability measure follows from \zcref{eq:channel-identity-twirl}. Moreover, each \(\omega_{\hat\phi}^{(k)}\) is permutation invariant
and satisfies \(\omega_{\hat\phi,A^k}^{(k)}=\tau_A^{\otimes k}\).

Since the iid state is trivially also almost-iid, i.e., \(P_{\hat\phi}^{(k,r)}\hat\phi^{\otimes k}
P_{\hat\phi}^{(k,r)}=\hat\phi^{\otimes k}\), we obtain
\begin{equation}
\begin{aligned}
    &\int_{\mathcal P_{ABE}(\tau_A)}
        \Tr_{E^k}
        \left[
            P_{\hat\phi}^{(k,r)}
            \omega_{\hat\phi}^{(k)}
            P_{\hat\phi}^{(k,r)}
        \right]
        \,\mathrm d\mu_{nkr}(\hat\phi)
    \\
    &\quad=
    \frac{1}{
        a_{nk}\bigl(1+2d_k\|L_{>r}\|_\infty\bigr)
    }
    \int_{\mathcal P_{ABE}(\tau_A)}
        \Tr_{E^k}
        \left[
            P_{\hat\phi}^{(k,r)}
            \left(
                \widetilde\omega_{\hat\phi}^{(k)}
                +
                2a_{nk}d_k\|L_{>r}\|_\infty
                \hat\phi^{\otimes k}
            \right)
            P_{\hat\phi}^{(k,r)}
        \right]
        \,\mathrm d\hat\phi
    \\
    &\quad\geq
    \frac{1}{1+2d_k\|L_{>r}\|_\infty}\,J^{(k)}
    =
    \eta_{nkr}J^{(k)},
\end{aligned}
\end{equation}
where the inequality follows from \zcref{eq:constrained-op-ineq}, proving the first statement in the proposition. This is analogous to the state de Finetti operator inequality.

\textbf{Step 5: Weight outside the almost-iid subspace.}
Finally, we estimate the average mass discarded by the projection:
\begin{equation}
\begin{aligned}
    &\qquad\int_{\mathcal P_{ABE}(\tau_A)}
        \Tr\!\left[
            \bigl(\id_{(ABE)^k}-P_{\hat\phi}^{(k,r)}\bigr)
            \omega_{\hat\phi}^{(k)}
        \right]
        \,\mathrm d\mu_{nkr}(\hat\phi)
    \\
    &\quad=
    \frac{1}{
        a_{nk}\bigl(1+2d_k\|L_{>r}\|_\infty\bigr)
    }
    \int_{\mathcal P_{ABE}(\tau_A)}
        \Tr\!\left[
            \bigl(\id_{(ABE)^k}-P_{\hat\phi}^{(k,r)}\bigr)
            \widetilde\omega_{\hat\phi}^{(k)}
        \right]
        \,\mathrm d\hat\phi
    \\
    &\quad=
    \eta_{nkr}
    \Tr\!\left[
        (L_{>r}\otimes\id_{B^k})J^{(k)}
    \right]
    =
    \eta_{nkr}d_A^{-k}\Tr L_{>r}.
\end{aligned}
\end{equation}
The previously derived \zcref{eq:trace-Lj} for \(\Tr L_j\) gives
\begin{equation}
\begin{aligned}
    d_A^{-k}\Tr L_{>r}
    &=
    \frac{1}{a_{nk}}
    \sum_{j=r+1}^{k}
        \binom{k}{j}
        \int_{\mathcal P_{ABE}(\tau_A)}
            F_{\hat\phi}^{\,n-j}
            (1-F_{\hat\phi})^j
            \,\mathrm d\hat\phi
    \\
    &=
    \frac{1}{a_{nk}}
    \sum_{j=r+1}^{k}
        \frac{\binom{k}{j}}{\binom{n}{j}}
        \int_{\mathcal P_{ABE}(\tau_A)}
            \binom{n}{j}F_{\hat\phi}^{\,n-j}
            (1-F_{\hat\phi})^j
            \,\mathrm d\hat\phi
    \\
    &\leq
    \frac{1}{a_{nk}}
    \frac{\binom{k}{r+1}}{\binom{n}{r+1}}\int_{\mathcal P_{ABE}(\tau_A)}
            \sum_{j=0}^{n}\binom{n}{j}F_{\hat\phi}^{\,n-j}
            (1-F_{\hat\phi})^j
            \,\mathrm d\hat\phi
    \\
    &\leq
    d_{n-k}
    \left(\frac{k}{n}\right)^{r+1},
\end{aligned}
\end{equation}
where the first inequality follows from \(\frac{\binom{k}{j}}{\binom{n}{j}}\) being monotonically decreasing as \(j\) increases, and relaxing the sum from \(\sum_{j=r+1}^k\) to \(\sum_{j=0}^n\). The last inequality uses the binomial theorem, \(a_{nk}\geq d_{n-k}^{-1}\) from \zcref{lem:ank-Lj-bounds}, and
\begin{equation}
    \frac{\binom{k}{r+1}}{\binom{n}{r+1}}
    =
    \prod_{i=0}^{r}\frac{k-i}{n-i}
    \leq
    \left(\frac{k}{n}\right)^{r+1}.
\end{equation}
Multiplying by \(\eta_{nkr}\) proves the second assertion.
\end{proof}

As explained before \zcref{prop:projected-channel-df}, it remains to
repair the Choi constraint disturbed by the almost-iid projections, so
that the right-hand side of \zcref{eq:projected-channel-df} becomes a
convex mixture of almost-iid Choi states along Choi reference states. This is
more subtle than the single-copy completion in
\zcref{lem:single-copy-choi-completion}: completing one copy and taking
tensor powers immediately produces an iid Choi state, whereas an
almost-iid state along a Choi reference state need not itself be Choi. The
following proposition instead performs the completion directly within
the multi-copy almost-iid subspace. Crucially, the state before
projection is exactly Choi, so its entire input marginal deviation from maximally mixed is caused by the projection and can be controlled by the discarded weight.

\begin{proposition}[(Choi completion preserving almost-iid structure)]
\label{prop:choi-completion}
Fix integers \(0\leq r\leq k\), let
\(\hat\phi\in\mathcal P_{ABE}(\tau_A)\), and let
\(\omega^{(k)}\) be a permutation-invariant state on \((ABE)^k\)
satisfying \(\omega_{A^k}^{(k)}=\tau_A^{\otimes k}\). Define the discarded mass of \(\omega^{(k)}\) under almost-iid projection along \(\hat\phi\), and a combinatorial factor,
\begin{equation}
    \delta_{kr}
    \coloneq
    \Tr\!\left[
        \bigl(\id_{(ABE)^k}-P_{\hat\phi}^{(k,r)}\bigr)
        \omega^{(k)}
    \right], \qquad N_{kr}\coloneq\sum_{j=0}^{\min\{2r,k\}}
        \binom{k}{j}(d_A^2-1)^j.
\end{equation}
Then there exists a \(\binom{k}{r}\)-almost-iid state \(\widehat\omega^{(k)}\) along \(\hat\phi\) on \((ABE)^k\), satisfying \(\widehat\omega_{A^k}^{(k)}=\tau_A^{\otimes k}\), such that
\begin{equation}
    P_{\hat\phi}^{(k,r)}
    \omega^{(k)}
    P_{\hat\phi}^{(k,r)}
    \leq
    \left(
        1+
        \sqrt{\delta_{kr}
        N_{kr}}
    \right)^2
    \widehat\omega^{(k)}.
\end{equation}
\end{proposition}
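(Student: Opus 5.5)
The plan is to start from the unnormalized operator $X\coloneq P_{\hat\phi}^{(k,r)}\omega^{(k)}P_{\hat\phi}^{(k,r)}$ and write $\widehat\omega^{(k)}$ as $X$ plus a positive correction $Z$, normalized. The correction must satisfy three requirements.
\begin{enumerate}
\item It is permutation invariant.
\item It is supported in $\operatorname{ran}P_{\hat\phi}^{(k,r)}$, so that $\widehat\omega^{(k)}$ is $\binom{k}{r}$-almost-iid along $\hat\phi$; here $\hat\phi$ is pure on $ABE$, so the simplified definition after \zcref{def:almost-iid-state} applies.
\item It satisfies $(X+Z)_{A^k}=c\,\tau_A^{\otimes k}$ with $c\leq(1+\sqrt{\delta_{kr}N_{kr}})^2$.
\end{enumerate}
The statement then follows with $\widehat\omega^{(k)}=(X+Z)/c$.

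\textbf{Step 1: size and structure of the marginal defect.} Since $\omega^{(k)}_{A^k}=\tau_A^{\otimes k}$, the defect is
\[
\Delta\coloneq X_{A^k}-\tau_A^{\otimes k}=-\Tr_{(BE)^k}\bigl[(\id-P)\omega P+P\omega(\id-P)+(\id-P)\omega(\id-P)\bigr],
\]
where $P=P_{\hat\phi}^{(k,r)}$. By Cauchy--Schwarz, each of these three pieces has trace norm at most $\sqrt{\delta_{kr}}$ or $\delta_{kr}$, with $\delta_{kr}\leq\sqrt{\delta_{kr}}$.

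The structure of $\Delta$ comes from $X$ itself. Every vector in the range of $P$ has at most $r$ factors orthogonal to $\hat\phi$. A matrix element $|u\rangle\!\langle v|$ of $X$ therefore has at least $k-2r$ positions carrying $\hat\phi$ in both the ket and the bra, and tracing out $BE$ at such a position gives exactly $\tau_A$, because $\hat\phi_A=\tau_A$. Expanding $d_A^k X_{A^k}$ in a product basis consisting of $\id_A$ together with $d_A^2-1$ traceless Hermitian operators per site, only strings of weight at most $\min\{2r,k\}$ survive. There are exactly $N_{kr}$ such strings (counted by $\binom{k}{j}(d_A^2-1)^j$ over weights $j$), and the same holds for $d_A^k\Delta$, which lies in this $N_{kr}$-dimensional span.

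Combining this coefficient expansion with the trace-norm bound, a Cauchy--Schwarz over the $N_{kr}$ coefficients should give $\|d_A^k\Delta\|_\infty\lesssim\sqrt{\delta_{kr}N_{kr}}$. This is the origin of the $\sqrt{\delta N}$ term, and I expect constants to need care here.

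\textbf{Step 2: the correction operator.} To place the correction inside the almost-iid subspace, I would use the transpose trick on the maximally entangled structure of $\hat\phi$: an operator on $A$ applied to $|\hat\phi\rangle$ equals a suitable operator on $BE$ applied to $|\hat\phi\rangle$.

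The first attempt is an ansatz of the form $X+Z=(W\otimes\id)\,Y\,(W\otimes\id)^\dagger$, where $Y$ is chosen so that $(W\otimes\id)\,Y\,(W\otimes\id)^\dagger$ has $A$-marginal $c\,\tau_A^{\otimes k}$. The natural choice is $W=(c\,\tau_A^{\otimes k}X_{A^k}^{-1})^{1/2}$ restricted to the relevant sector, with $c=(1+\|d_A^k\Delta\|_\infty)^2$. Writing $W=\id+O(\|d_A^k\Delta\|_\infty)$ should yield $X\leq(1+\sqrt{\delta N})^2\,\widehat\omega^{(k)}$ by the same sandwich argument as in the proof of \zcref{thm:exp-df-states-operator}. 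Permutation invariance of $W$ and of $X$ is automatic, since $\Delta$ is permutation invariant.

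\textbf{Main obstacle.} The hard step is the support condition in Step 2. A weight-$2r$ operator on $A^k$ applied to $\hat\phi^{\otimes k}$ can create up to $2r$ defects, not $r$.

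My fallback is not to conjugate, but to add the correction as a positive sum over subsets $S$ of size at most $r$, of the form $\hat\phi^{\otimes S^{\mathrm c}}\otimes\sigma_S$. The $\sigma_S$ would be chosen, using the transpose trick on the $A$-marginal of $\hat\phi$, to cancel the components of $\Delta$ that are supported on $S$. Each such term stays inside the range of $P$, and its cost is bounded by $\|d_A^k\Delta\|_\infty$ times the number of basis strings, which again is $N_{kr}$.

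Whether weight-$2r$ components of $\Delta$ can be absorbed by terms with only $r$ defects is exactly where I expect the argument to need the most work. Conjugating the positive part $\Delta_-$ of the defect through $\hat\phi$ on pairs of $r$-subsets is what I would try first.
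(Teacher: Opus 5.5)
There is a genuine gap. Your Step 1 matches the paper's starting point: the defect \(\Tr_{(BE)^k}[P\omega^{(k)}P]-\tau_A^{\otimes k}\) only has components of weight at most \(\min\{2r,k\}\). You also locate the obstacle correctly, but neither construction in Step 2 gets past it.

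\textbf{The fallback cannot work.} A term \(\hat\phi^{\otimes S^{\mathrm c}}\otimes\sigma_S\) has \(A^k\)-marginal \(\tau_A^{\otimes S^{\mathrm c}}\otimes(\sigma_S)_{A^S}\). Its expansion contains only strings supported inside \(S\), hence of weight at most \(|S|\leq r\). Any positive combination of such terms therefore leaves the components of weight \(r+1,\dots,2r\) untouched. These components are generically nonzero, since they come precisely from coherences \(|\hat\phi^{\otimes S^{\mathrm c}}\otimes\xi_S\rangle\langle\hat\phi^{\otimes T^{\mathrm c}}\otimes\eta_T|\) with \(S\neq T\) inside \(P\omega^{(k)}P\).

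\textbf{The conjugation ansatz fails as well.} \(W\) is a function of \(X_{A^k}^{-1}\) and so generically has full weight. Moreover, \(X\le c'(W\otimes\id)X(W\otimes\id)^\dagger\) cannot hold for any \(c'\) once \(W\) moves the support of \(X\), so \(W\approx\id\) does not give the domination.

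\textbf{The missing idea.} Your closing hint about pairs of \(r\)-subsets points the right way, but the correction must be \emph{coherent} across the pair. Use a unitary operator basis (e.g.\ Weyl operators, not a Hermitian traceless one). Factor each string as a matrix product \(F_{\boldsymbol\alpha}=X_{\boldsymbol\alpha}Y_{\boldsymbol\alpha}\), with both factors unitary of weight \(\leq r\). Set \(|x_{\boldsymbol\alpha}\rangle=(X_{\boldsymbol\alpha}\otimes\id)|\hat\phi\rangle^{\otimes k}\) and \(|y_{\boldsymbol\alpha}\rangle=(Y_{\boldsymbol\alpha}^\dagger\otimes\id)|\hat\phi\rangle^{\otimes k}\); both lie in \(\operatorname{ran}P_{\hat\phi}^{(k,r)}\). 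Then add the rank-one positive operators \(|z_{\boldsymbol\alpha}\rangle\langle z_{\boldsymbol\alpha}|\) with \(|z_{\boldsymbol\alpha}\rangle=\sqrt{|h_{\boldsymbol\alpha}|/2}\,|x_{\boldsymbol\alpha}\rangle-\overline{h_{\boldsymbol\alpha}}(2|h_{\boldsymbol\alpha}|)^{-1/2}|y_{\boldsymbol\alpha}\rangle\).
\begin{itemize}
\item Because \(\hat\phi_A=\tau_A\) and the factors are unitary, the diagonal terms have marginal exactly \(|h_{\boldsymbol\alpha}|\tau_A^{\otimes k}\).
\item The cross terms give \(-\tfrac12 d_A^{-k}(h_{\boldsymbol\alpha}F_{\boldsymbol\alpha}+\overline{h_{\boldsymbol\alpha}}F_{\boldsymbol\alpha}^\dagger)\), which is how a weight-\(2r\) component is produced from two \(r\)-defect vectors.
\item Adding \(d_A^k\sum_{\boldsymbol\alpha}|z_{\boldsymbol\alpha}\rangle\langle z_{\boldsymbol\alpha}|\) and symmetrizing removes the whole defect at cost \(\lambda=d_A^k\sum_{\boldsymbol\alpha}|h_{\boldsymbol\alpha}|\).
\end{itemize}

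\textbf{The quantitative bound also falls short.} A trace-norm bound on the defect only gives \(d_A^k|h_{\boldsymbol\alpha}|\leq\|\Delta\|_1\lesssim\sqrt{\delta_{kr}}\) per string. That yields \(\lambda\lesssim N_{kr}\sqrt{\delta_{kr}}\), not \(\sqrt{N_{kr}\delta_{kr}}\). Charging \(\|d_A^k\Delta\|_\infty\) once per string, as in your fallback, is worse still.

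To get \(\sqrt{N_{kr}}\) you need an \(\ell_2\) bound on the coefficients, and this is where the Choi constraint on \(\omega^{(k)}\) enters. Since \(\omega^{(k)}_{A^k}=\tau_A^{\otimes k}\), the families \(\{F_{\boldsymbol\alpha}\sqrt{\omega^{(k)}}\}\) and \(\{F_{\boldsymbol\alpha}^\dagger\sqrt{\omega^{(k)}}\}\) are Hilbert--Schmidt orthonormal. The two first-order contributions to \(d_A^kh_{\boldsymbol\alpha}\), namely \(\Tr[F_{\boldsymbol\alpha}^\dagger(\id-P)\omega^{(k)}]\) and \(\Tr[F_{\boldsymbol\alpha}^\dagger\omega^{(k)}(\id-P)]\), are then overlaps of the single operator \((\id-P)\sqrt{\omega^{(k)}}\), of squared norm \(\delta_{kr}\), with these families. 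Bessel's inequality bounds each of their squared sums by \(\delta_{kr}\).

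Combine this with \(|\Tr[F_{\boldsymbol\alpha}^\dagger(\id-P)\omega^{(k)}(\id-P)]|\leq\delta_{kr}\) and Cauchy--Schwarz over the \(N_{kr}\) strings. This gives \(\lambda\leq2\sqrt{\delta_{kr}N_{kr}}+\delta_{kr}N_{kr}\), i.e.\ exactly the factor \((1+\sqrt{\delta_{kr}N_{kr}})^2\) in the statement.
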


\begin{proof}
Let
\(\{F_\alpha\}_{\alpha=0}^{d_A^2-1}\) be an orthogonal unitary
operator basis of \(\mathcal{L}(A)\), for example the discrete Weyl operators \cite[section 4.1.2]{watrous2018theory}, with the following normalization convention
\begin{equation}
    F_0=\id_A,
    \qquad
    \Tr(F_\alpha^\dagger F_\beta)
    =d_A\delta_{\alpha,\beta}.
\end{equation}
For a string
\(\boldsymbol\alpha=(\alpha_1,\ldots,\alpha_k)\), let
\(F_{\boldsymbol\alpha}\coloneq F_{\alpha_1}\otimes\cdots\otimes F_{\alpha_k}\), and let \(\operatorname{wt}(\boldsymbol\alpha)\) denote the weight, namely the number of nonzero entries, of \(\boldsymbol\alpha\).

\textbf{Step 1: Expanding the deviation from Choi-ness.}
By definition, \(P_{\hat\phi}^{(k,r)}\) projects to the subspace spanned by
vectors of the form
\begin{equation}
    |\hat\phi\rangle^{\otimes S^{c}}\otimes |\xi_S\rangle,
    \qquad
    S\subseteq[k],\quad |S|\leq r,\quad
    |\xi_S\rangle\in(ABE)^S.
\end{equation}
Consequently,
\(P_{\hat\phi}^{(k,r)}\omega^{(k)}P_{\hat\phi}^{(k,r)}\)
is a linear combination of operators of the form
\begin{equation}
    \left(
        |\hat\phi\rangle^{\otimes S^{c}}\otimes|\xi_S\rangle
    \right)
    \left(
        \langle\hat\phi|^{\otimes T^{c}}\otimes\langle \eta_T|
    \right),
    \qquad S,T\subseteq[k], \quad |S|,|T|\leq r.
\end{equation}
For every subsystem indexed by \(i\notin S\cup T\), both the ket and the
bra equal \(\hat\phi\). Since \(\hat\phi_A=\tau_A\), tracing out
\((BE)^k\) therefore gives a linear combination of operators of the form
\begin{equation}
    \tau_{A^{(S\cup T)^c}}
    \otimes X_{A^{S\cup T}}, \qquad X_{A^{S\cup T}}\in\mathcal L(A^{S\cup T}).
\end{equation}
Its expansion in the \(\{F_{\boldsymbol\alpha}\}_{\boldsymbol\alpha}\) basis can contain non-identity factors only on the copies in \(S\cup T\), and hence only strings of weight at most
\begin{equation}
    \operatorname{wt}(\boldsymbol\alpha)\leq|S\cup T|
    \leq |S|+|T|
    \leq 2r.
\end{equation}
Thus, combining with the trivial bound \(\operatorname{wt}(\boldsymbol\alpha)\leq k\), we can expand the deviation as
\begin{equation}
\Tr_{(BE)^k}\!\left[
    P_{\hat\phi}^{(k,r)}
    \omega^{(k)}
    P_{\hat\phi}^{(k,r)}
\right]
-\tau_A^{\otimes k}
=
\sum_{\substack{\boldsymbol\alpha:
    \operatorname{wt}(\boldsymbol\alpha)\leq\min\{2r,k\}}}
    h_{\boldsymbol\alpha}F_{\boldsymbol\alpha}
=
\sum_{\substack{\boldsymbol\alpha:
    \operatorname{wt}(\boldsymbol\alpha)\leq\min\{2r,k\}}}
    \overline{h_{\boldsymbol\alpha}}
    F_{\boldsymbol\alpha}^{\dagger},
\label{eq:choi-marginal-deviation-expansion}
\end{equation}
where
\begin{equation}
    h_{\boldsymbol\alpha}
    =
    d_A^{-k}
    \Tr\!\left[
        F_{\boldsymbol\alpha}^{\dagger}
        \left(
            \Tr_{(BE)^k}\!\left[
                P_{\hat\phi}^{(k,r)}
                \omega^{(k)}
                P_{\hat\phi}^{(k,r)}
            \right]
            -\tau_A^{\otimes k}
        \right)
    \right].
\end{equation}

\textbf{Step 2: Complete the marginal to be Choi.}
For every \(\boldsymbol\alpha\) with
\(\operatorname{wt}(\boldsymbol\alpha)\leq\min\{2r,k\}\), partition
the non-identity tensor factors of \(F_{\boldsymbol\alpha}\) into two
sets of size at most \(r\), and write
\begin{equation}
    F_{\boldsymbol\alpha}
    =
    X_{\boldsymbol\alpha}Y_{\boldsymbol\alpha},
    \qquad
    X_{\boldsymbol\alpha},Y_{\boldsymbol\alpha}\in\mathcal L(A^k),
\end{equation}
where \(X_{\boldsymbol\alpha}Y_{\boldsymbol\alpha}\) is matrix multiplication, not tensor product, and both \(X_{\boldsymbol\alpha}\) and \(Y_{\boldsymbol\alpha}\) belong to the product operator basis containing at most \(r\) non-identity factors\footnote{This can always be done and the choice is not unique. As a toy example, let \(k=5, r=2\), and consider \(F_{\boldsymbol\alpha}=\id\otimes F_1\otimes F_1\otimes F_1\otimes F_1\), one may choose \(X_{\boldsymbol\alpha}=\id\otimes F_1\otimes F_1\otimes \id\otimes\id\) and \(Y_{\boldsymbol\alpha}=\id\otimes\id\otimes\id\otimes F_1\otimes F_1\).}.

Define
\begin{equation}
    |x_{\boldsymbol\alpha}\rangle
    \coloneq
    \bigl(X_{\boldsymbol\alpha}\otimes\id_{(BE)^k}\bigr)
    |\hat\phi\rangle^{\otimes k},\qquad
    |y_{\boldsymbol\alpha}\rangle
    \coloneq
    \bigl(Y_{\boldsymbol\alpha}^\dagger\otimes\id_{(BE)^k}\bigr)
    |\hat\phi\rangle^{\otimes k}.
\end{equation}
Since \(X_{\boldsymbol\alpha}\) and \(Y_{\boldsymbol\alpha}\) act nontrivially on at most \(r\)
copies, both \(|x_{\boldsymbol\alpha}\rangle\) and \(|y_{\boldsymbol\alpha}\rangle\) differ from
\(|\hat\phi\rangle^{\otimes k}\) on at most \(r\) copies. Hence
\begin{equation}
    |x_{\boldsymbol\alpha}\rangle,|y_{\boldsymbol\alpha}\rangle
    \in\operatorname{ran}P_{\hat\phi}^{(k,r)}.
\end{equation}
Moreover, since \(\hat\phi_A=\tau_A\) and \(X_{\boldsymbol\alpha},Y_{\boldsymbol\alpha}\) are
unitary,
\begin{equation}
\begin{aligned}
    \Tr_{(BE)^k}|x_{\boldsymbol\alpha}\rangle\!\langle x_{\boldsymbol\alpha}|
    &=
    X_{\boldsymbol\alpha}\tau_A^{\otimes k}X_{\boldsymbol\alpha}^\dagger
    =\tau_A^{\otimes k},                                                \\
    \Tr_{(BE)^k}|y_{\boldsymbol\alpha}\rangle\!\langle y_{\boldsymbol\alpha}|
    &=
    Y_{\boldsymbol\alpha}^\dagger\tau_A^{\otimes k}Y_{\boldsymbol\alpha}
    =\tau_A^{\otimes k},                                                \\
    \Tr_{(BE)^k}|x_{\boldsymbol\alpha}\rangle\!\langle y_{\boldsymbol\alpha}|
    &=
    X_{\boldsymbol\alpha}\tau_A^{\otimes k}Y_{\boldsymbol\alpha}
    =d_A^{-k}X_{\boldsymbol\alpha} Y_{\boldsymbol\alpha}
    =d_A^{-k}F_{\boldsymbol\alpha}, \\
    \Tr_{(BE)^k}|y_{\boldsymbol\alpha}\rangle\!\langle x_{\boldsymbol\alpha}|
    &=Y_{\boldsymbol\alpha}^\dagger\tau_A^{\otimes k}X_{\boldsymbol\alpha}^\dagger=d_A^{-k}Y_{\boldsymbol\alpha}^\dagger X_{\boldsymbol\alpha}^\dagger=d_A^{-k}F_{\boldsymbol\alpha}^\dagger.
\end{aligned}
\label{eq:xy-partial-traces}
\end{equation}
Since only \(\boldsymbol\alpha\) with
\(h_{\boldsymbol\alpha}\neq0\) contribute to the sum in
\zcref{eq:choi-marginal-deviation-expansion}, define
\begin{equation}
    |z_{\boldsymbol\alpha}\rangle
    \coloneq
    \sqrt{\frac{|h_{\boldsymbol\alpha}|}{2}}\,
    |x_{\boldsymbol\alpha}\rangle
    -
    \frac{\overline{h_{\boldsymbol\alpha}}}
         {\sqrt{2|h_{\boldsymbol\alpha}|}}\,
    |y_{\boldsymbol\alpha}\rangle \quad \in\operatorname{ran}P_{\hat\phi}^{(k,r)}.
\end{equation}
Using \zcref{eq:xy-partial-traces}, we obtain
\begin{equation}
\begin{aligned}
\Tr_{(BE)^k}
    |z_{\boldsymbol\alpha}\rangle
    \!\langle z_{\boldsymbol\alpha}|
&=
    \frac{|h_{\boldsymbol\alpha}|}{2}
    \left(
        \Tr_{(BE)^k}
        |x_{\boldsymbol\alpha}\rangle
        \!\langle x_{\boldsymbol\alpha}|
        +
        \Tr_{(BE)^k}
        |y_{\boldsymbol\alpha}\rangle
        \!\langle y_{\boldsymbol\alpha}|
    \right)                                                            \\
&\qquad
    -\frac{h_{\boldsymbol\alpha}}{2}
        \Tr_{(BE)^k}
        |x_{\boldsymbol\alpha}\rangle
        \!\langle y_{\boldsymbol\alpha}|
    -\frac{\overline{h_{\boldsymbol\alpha}}}{2}
        \Tr_{(BE)^k}
        |y_{\boldsymbol\alpha}\rangle
        \!\langle x_{\boldsymbol\alpha}|                               \\
& =
    |h_{\boldsymbol\alpha}|\tau_A^{\otimes k}
    -
    \frac{1}{2d_A^k}
    \left(
        h_{\boldsymbol\alpha}F_{\boldsymbol\alpha}
        +
        \overline{h_{\boldsymbol\alpha}}
        F_{\boldsymbol\alpha}^{\dagger}
    \right).
\end{aligned}
\end{equation}
Rearranging and using
\zcref{eq:choi-marginal-deviation-expansion}, we obtain
\begin{equation}
\label{eq:correction-non-sym}
\begin{aligned}
\Tr_{(BE)^k}\!\left[
    P_{\hat\phi}^{(k,r)}
    \omega^{(k)}
    P_{\hat\phi}^{(k,r)}
    +
    d_A^k
    \sum_{\substack{\boldsymbol\alpha:
        \operatorname{wt}(\boldsymbol\alpha)\leq\min\{2r,k\}}}
    |z_{\boldsymbol\alpha}\rangle
    \!\langle z_{\boldsymbol\alpha}|
\right]                                                               
=
\left(
    1+
    d_A^k
    \sum_{\substack{\boldsymbol\alpha:
        \operatorname{wt}(\boldsymbol\alpha)\leq\min\{2r,k\}}}
    |h_{\boldsymbol\alpha}|
\right)
\tau_A^{\otimes k}.
\end{aligned}
\end{equation}
We enforce permutation invariance by symmetrizing the correction term and define
\begin{equation}
\label{eq:choi-correction-sym}
\begin{aligned}
    \Delta^{(k)}
    \coloneq
    \frac{d_A^k}{k!}
    \sum_{\pi\in S_k}
    U_{(ABE)^k}^{\pi}
    \left(
        \sum_{\substack{\boldsymbol\alpha:
            \operatorname{wt}(\boldsymbol\alpha)
            \leq\min\{2r,k\}}}
        |z_{\boldsymbol\alpha}\rangle
        \!\langle z_{\boldsymbol\alpha}|
    \right)
    \bigl(U_{(ABE)^k}^{\pi}\bigr)^\dagger .
\end{aligned}
\end{equation}
Since every \(|z_{\boldsymbol\alpha}\rangle\in\operatorname{ran}P_{\hat\phi}^{(k,r)}\), and
\(P_{\hat\phi}^{(k,r)}\) commutes with every permutation, we have
\begin{equation}
    \Delta^{(k)}\geq0,
    \qquad
    P_{\hat\phi}^{(k,r)}
    \Delta^{(k)}
    P_{\hat\phi}^{(k,r)}
    =
    \Delta^{(k)},
    \qquad
    U_{(ABE)^k}^{\pi}
    \Delta^{(k)}
    \bigl(U_{(ABE)^k}^{\pi}\bigr)^\dagger
    =
    \Delta^{(k)}, \quad\forall \pi\in S_k.
\end{equation}
Both \(P_{\hat\phi}^{(k,r)}\omega^{(k)}P_{\hat\phi}^{(k,r)}\) and \(\tau_A^{\otimes k}\) are permutation invariant, symmetrizing \zcref{eq:correction-non-sym} gives
\begin{equation}
    \Tr_{(BE)^k}\!\left[
        P_{\hat\phi}^{(k,r)}
        \omega^{(k)}
        P_{\hat\phi}^{(k,r)}
        +
        \Delta^{(k)}
    \right]
    =
    (1+\lambda)\tau_A^{\otimes k}.
\end{equation}
where we have defined the completion cost
\begin{equation}
    \lambda
    \coloneq
    d_A^k
    \sum_{\substack{\boldsymbol\alpha:
        \operatorname{wt}(\boldsymbol\alpha)
        \leq\min\{2r,k\}}}
    |h_{\boldsymbol\alpha}|.
\end{equation}
We may thus define the completed state
\begin{equation}
    \widehat\omega^{(k)}
    \coloneq
    \frac{1}{1+\lambda}\left(P_{\hat\phi}^{(k,r)}
        \omega^{(k)}
        P_{\hat\phi}^{(k,r)}
        +
        \Delta^{(k)}\right).
\end{equation}
The preceding argument shows that \(\widehat\omega^{(k)}\) is normalized, permutation invariant, satisfies the Choi constraint \(\widehat\omega_{A^k}^{(k)}=\tau_A^{\otimes k}\), and \(\operatorname{supp}\widehat\omega^{(k)}\subseteq\operatorname{ran}P_{\hat\phi}^{(k,r)}\). Thus, by definition, \(\widehat\omega^{(k)}\) is a
\(\binom{k}{r}\)-almost-iid state along \(\hat\phi\). Finally, since
\(\Delta^{(k)}\geq0\),
\begin{equation}
    P_{\hat\phi}^{(k,r)}
    \omega^{(k)}
    P_{\hat\phi}^{(k,r)}
    \leq
    (1+\lambda)\widehat\omega^{(k)}.
\label{eq:choi-completion-before-factor-estimate}
\end{equation}

\textbf{Step 3: Estimate the completion cost.}
It remains to bound
\begin{equation}
\label{eq:def-halpha-lambda}
    \lambda=d_A^k
    \sum_{\substack{\boldsymbol\alpha:
        \operatorname{wt}(\boldsymbol\alpha)
        \leq\min\{2r,k\}}}
    |h_{\boldsymbol\alpha}|,\qquad
    h_{\boldsymbol\alpha}
    =
    d_A^{-k}
    \Tr\!\left[
        F_{\boldsymbol\alpha}^{\dagger}
        \left(
            \Tr_{(BE)^k}\!\left[
                P_{\hat\phi}^{(k,r)}
                \omega^{(k)}
                P_{\hat\phi}^{(k,r)}
            \right]
            -\tau_A^{\otimes k}
        \right)
    \right].
\end{equation}
in terms of the discarded mass \(\delta_{kr}\). This estimate, to be found in \zcref{App:Proof:lambda-bound}, gives,
\begin{equation}
\label{eq:lambda-bound}
\lambda
\leq
2\sqrt{
    \delta_{kr}
    N_{kr}
}
+
\delta_{kr}
N_{kr}.
\end{equation}
The idea behind it is simple: since \(\omega_{A^k}^{(k)}=\tau_A^{\otimes k}\), the deviation term in \(h_{\boldsymbol\alpha}\) is directly related to the disturbance \(P_{\hat\phi}^{(k,r)}\omega^{(k)}P_{\hat\phi}^{(k,r)}-\omega^{(k)}\) caused by the projection, and therefore naturally associated with \(\delta_{kr}\), while the combinatorial factor \(N_{kr}\) comes from Cauchy--Schwarz when summing over \(\boldsymbol\alpha:\operatorname{wt}(\boldsymbol\alpha)\leq\min\{2r,k\}\). 

Combining this with
\zcref{eq:choi-completion-before-factor-estimate} completes the proof.
\end{proof}

It is now straightforward to combine \zcref{prop:projected-channel-df} and \zcref{prop:choi-completion} to prove the main theorem.

\begin{proof}[Proof of \zcref{thm:exp-channel-df}]
Let \(E\simeq AB\), \(D=\dim(ABE)=d_A^2d_B^2\), and let
\(|\Psi^{(n)}\rangle\in\operatorname{Sym}^n(ABE)\) be a symmetric purification of \(J^{(n)}=J(\mathcal{N}^{(n)})\). Its satisfies the Choi constraint \(\Psi_{A^n}^{(n)}=\tau_A^{\otimes n}\). Let
\(\Psi^{(k)}\coloneq\Tr_{(ABE)^{n-k}}\Psi^{(n)}\), then \(\Tr_{E^k}\Psi^{(k)}=\Tr_{(AB)^{n-k}}J^{(n)}=J^{(k)}=J(\mathcal{N}^{(k)})\), see \zcref{eq:reduced-choi-state}.

Let \(\eta_{nkr}\), \(\mu_{nkr}\), and
\(\{\omega_{\hat\phi}^{(k)}\}_{\hat\phi}\) be supplied by
\zcref{prop:projected-channel-df}, and define
\begin{equation}
    \delta_{kr}^{\hat\phi}
    \coloneq
    \Tr\!\left[
        \bigl(
            \id_{(ABE)^k}-P_{\hat\phi}^{(k,r)}
        \bigr)
        \omega_{\hat\phi}^{(k)}
    \right].
\end{equation}
Then
\begin{equation}
    \eta_{nkr}J^{(k)}
    \leq
    \int_{\mathcal P_{ABE}(\tau_A)}
        \Tr_{E^k}\!\left[
            P_{\hat\phi}^{(k,r)}
            \omega_{\hat\phi}^{(k)}
            P_{\hat\phi}^{(k,r)}
        \right]
        \mathrm d\mu_{nkr}(\hat\phi),
    \label{eq:exp-channel-df-proof-projected}
\end{equation}
and
\begin{equation}
    \int_{\mathcal P_{ABE}(\tau_A)}
        \delta_{kr}^{\hat\phi}
        \,\mathrm d\mu_{nkr}(\hat\phi)
    \leq
    \eta_{nkr}d_{n-k}
    \left(\frac{k}{n}\right)^{r+1}.
    \label{eq:exp-channel-df-proof-tail}
\end{equation}
By \zcref{prop:choi-completion}, for every \(\hat\phi\) there exists
a \(\binom{k}{r}\)-almost-iid Choi state \(\widehat\omega_{\hat\phi}^{(k)}\) along \(\hat\phi\), so that
\begin{equation}
    P_{\hat\phi}^{(k,r)}
    \omega_{\hat\phi}^{(k)}
    P_{\hat\phi}^{(k,r)}
    \leq
    \left(
        1+\sqrt{N_{kr}\delta_{kr}^{\hat\phi}}
    \right)^2
    \widehat\omega_{\hat\phi}^{(k)}, \qquad N_{kr}=
    \sum_{j=0}^{\min\{2r,k\}}
        \binom{k}{j}(d_A^2-1)^j.
    \label{eq:exp-channel-df-proof-completion}
\end{equation}

Define the re-weighted probability measure
\begin{equation}
\mathrm d\nu_{nkr}(\hat\phi)
    \coloneq
    \frac{1}{Z_{nkr}}\left(
            1+\sqrt{N_{kr}\delta_{kr}^{\hat\phi}}
        \right)^2\,
    \mathrm d\mu_{nkr}(\hat\phi),
\label{eq:exp-channel-df-proof-measure}
\end{equation}
where
\begin{equation}
    Z_{nkr}
    \coloneq
    \int_{\mathcal P_{ABE}(\tau_A)}
        \left(
            1+\sqrt{N_{kr}\delta_{kr}^{\hat\phi}}
        \right)^2
        \mathrm d\mu_{nkr}(\hat\phi).
\end{equation}
By Jensen's inequality and \zcref{eq:exp-channel-df-proof-tail},
\begin{equation}
\begin{aligned}
    1\leq Z_{nkr}
    &=
    1
    +2\sqrt{N_{kr}}
        \int
            (\delta_{kr}^{\hat\phi})^{1/2}
            \,\mathrm d\mu_{nkr}(\hat\phi)
    +N_{kr}
        \int
            \delta_{kr}^{\hat\phi}
            \,\mathrm d\mu_{nkr}(\hat\phi)                              \\
    &\leq
    \left(
        1+
        \sqrt{
            \eta_{nkr}d_{n-k}N_{kr}
            \left(\frac{k}{n}\right)^{r+1}
        }
    \right)^2.
\end{aligned}
\label{eq:exp-channel-df-proof-normalization}
\end{equation}
Combining
\zcref{eq:exp-channel-df-proof-projected} and \zcref{eq:exp-channel-df-proof-completion}
gives
\begin{equation}
    \eta_{nkr}J^{(k)}
    \leq
    Z_{nkr}
    \int_{\mathcal P_{ABE}(\tau_A)}
        \Tr_{E^k}\widehat\omega_{\hat\phi}^{(k)}
        \,\mathrm d\nu_{nkr}(\hat\phi).
    \label{eq:exp-channel-df-proof-renormalized}
\end{equation}

Let \(1-\epsilon_{nkr}\coloneq \eta_{nkr}/Z_{nkr}\). It remains to estimate \(\epsilon_{nkr}\). Using \((1+x)^{-2}\geq1-2x\) for \(x\geq0\) and \zcref{eq:exp-channel-df-proof-normalization}, we have
\begin{equation}
    \epsilon_{nkr}
    \leq
    1-\eta_{nkr}
    +
    2\sqrt{d_{n-k}N_{kr}}
    \left(\frac{k}{n}\right)^{(r+1)/2}.
    \label{eq:channel-df-error-preliminary}
\end{equation}
The factor \(1-\eta_{nkr}\) can be estimated using \zcref{lem:ank-Lj-bounds}, giving (see \zcref{App:Proof:eta-exponential-bound}),
\begin{equation}
    1-\eta_{nkr}
    \leq
    2d_k\sqrt{d_{n-k}(k-r)}
    \exp\!\left[
        -\frac{r+1}{2}
        \ln\!\left(
            \frac{n(r+1)}{e d_Ak^2}
        \right)
    \right].
    \label{eq:eta-exponential-bound}
\end{equation}
The second factor can be estimated to be (see \zcref{App:Proof:Nkr-exponential-bound}),
\begin{equation}
    2\sqrt{d_{n-k}N_{kr}}
    \left(\frac{k}{n}\right)^{(r+1)/2}
    \leq
    2\sqrt{d_{n-k}}
    \exp\!\left[
        -\frac{r+1}{2}
        \ln\!\left(
            \frac{nr^2}{2d_A^4k^3}
        \right)
    \right].
\label{eq:Nkr-exponential-bound}
\end{equation}

Finally, combining \zcref{eq:eta-exponential-bound} and \zcref{eq:Nkr-exponential-bound}, using \(d_k\leq d_n\leq(n+1)^{D-1}\), and noticing that
\begin{equation}
    \frac{n(r+1)}{e d_Ak^2}
    \geq
    \frac{nr^2}{2d_A^4k^3},
\end{equation}
we get
\begin{equation}
\begin{aligned}
    \epsilon_{nkr}
    &\leq
    2\sqrt{d_{n-k}}
    \left(
        1+d_k\sqrt{k-r}
    \right)                                                           
    \exp\!\left[
        -\frac{r+1}{2}
        \ln\!\left(
            \frac{nr^2}{2d_A^4k^3}
        \right)
    \right]\\
    &\leq 4(n+1)^{\frac{3D-2}{2}}\exp\!\left[
        -\frac{r+1}{2}
        \ln\!\left(
            \frac{nr^2}{2d_A^4k^3}
        \right)
    \right].
\end{aligned}
\label{eq:channel-df-combined-error}
\end{equation}

From \zcref{eq:exp-channel-df-proof-renormalized}, we have
\begin{equation}
    (1-\epsilon_{nkr})J^{(k)}
    \leq
    \int_{\mathcal P_{ABE}(\tau_A)}
        \Tr_{E^k}\widehat\omega_{\hat\phi}^{(k)}
        \,\mathrm d\nu_{nkr}(\hat\phi).
\end{equation}

For each \(\hat\phi\in\mathcal P_{ABE}(\tau_A)\), let \(\Phi\in\mathrm{CPTP}(A,B)\) be the channel with Choi state \(\hat\phi_{AB}\), and let \(\mathcal E_{\Phi}^{(k,r)}\in\mathrm{CPTP}(A^k,B^k)\) be the channel with Choi state \(\Tr_{E^k}\widehat\omega_{\hat\phi}^{(k)}\). By \zcref{def:almost-iid-channel}, \(\mathcal E_{\Phi}^{(k,r)}\)is \(\binom{k}{r}\)-almost-iid along \(\Phi\). Push \(\nu_{nkr}\) forward under \(\hat\phi\mapsto\Phi\) and, when several purifications correspond to the same \(\Phi\), conditionally average the associated channels; this preserves the almost-iid property because the set of almost-iid channels along a fixed reference channel is convex. Reusing \(\nu_{nkr}\) to denote the resulting measure on \(\mathrm{CPTP}(A,B)\), then
\begin{equation}
 (1-\epsilon_{nkr})J(\mathcal{N}^{(k)})
    \leq
    \int
        J\left(\mathcal E_{\Phi}^{(k,r)}\right)
        \,\mathrm d\nu_{nkr}(\Phi),
\end{equation}
which implies the CP-order \zcref{eq:exp-channel-df}. Substituting \(D=d_A^2d_B^2\) completes the proof.
\end{proof}

\begin{corollary}[(Diamond-norm exponential de Finetti)]
\label{cor:exponential-channel-definetti-diamond}
Under the assumptions of \zcref{thm:exp-channel-df},
\begin{equation}
\left\|
    \Tr_{n-k}\mathcal N^{(n)}
    -
    \int
        \mathcal E_{\Phi}^{(k,r)}
        \,\mathrm d\nu_{nkr}(\Phi)
\right\|_{\diamond}
\leq
8(n+1)^{\frac{3d_A^2d_B^2-2}{2}}
\exp\!\left[
    -\frac{r+1}{2}
    \ln\!\left(
        \frac{nr^2}{2d_A^4k^3}
    \right)
\right].
\end{equation}
\end{corollary}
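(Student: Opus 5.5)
The plan is to mirror the proof of \zcref{cor:standard-channel-definetti-diamond}, with the operator inequality of \zcref{thm:exp-channel-df} in place of the standard one. Write
\[
\mathcal M^{(k)}\coloneq\int\mathcal E_{\Phi}^{(k,r)}\,\mathrm d\nu_{nkr}(\Phi).
\]
This is a convex mixture of channels, hence itself CPTP; measurability is inherited from the construction of $\nu_{nkr}$ as a pushforward.

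By \zcref{eq:exp-channel-df}, the map $\mathcal M^{(k)}-(1-\epsilon_{nkr})\Tr_{n-k}\mathcal N^{(n)}$ is completely positive. Its trace is the identity functional scaled by $1-(1-\epsilon_{nkr})=\epsilon_{nkr}$, because both channels are trace preserving. So if $\epsilon_{nkr}>0$ we can write
\[
\mathcal M^{(k)}=(1-\epsilon_{nkr})\Tr_{n-k}\mathcal N^{(n)}+\epsilon_{nkr}\mathcal R^{(k)},
\]
with $\mathcal R^{(k)}\in\mathrm{CPTP}(A^k,B^k)$. If $\epsilon_{nkr}=0$, the CP difference has zero trace, so it vanishes and the two channels coincide. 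We may also assume $\epsilon_{nkr}\le1$: otherwise the claimed bound already exceeds $2$, and the statement is trivial since diamond distances between channels are at most $2$.

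Subtracting gives
\[
\Tr_{n-k}\mathcal N^{(n)}-\mathcal M^{(k)}=\epsilon_{nkr}\bigl(\Tr_{n-k}\mathcal N^{(n)}-\mathcal R^{(k)}\bigr),
\]
so
\[
\bigl\|\Tr_{n-k}\mathcal N^{(n)}-\mathcal M^{(k)}\bigr\|_\diamond\le 2\epsilon_{nkr}.
\]
Inserting the explicit estimate \zcref{eq:exp-channel-df-error} for $\epsilon_{nkr}$ yields $8(n+1)^{3D/2-1}\exp[\cdots]$. Since $3d_A^2d_B^2/2-1=(3d_A^2d_B^2-2)/2$, this is the stated bound.

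There is no real obstacle here. The only point needing care is that the remainder $\mathcal R^{(k)}$ is trace preserving, and this follows directly from both channels being trace preserving.
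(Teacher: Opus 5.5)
Your proposal is correct and matches the paper's proof. The paper simply says the argument is the same as for \zcref{cor:standard-channel-definetti-diamond}: write the mixture as $(1-\epsilon_{nkr})\Tr_{n-k}\mathcal N^{(n)}+\epsilon_{nkr}\mathcal R^{(k)}$ with $\mathcal R^{(k)}$ CPTP, bound the diamond distance by $2\epsilon_{nkr}$, and insert \zcref{eq:exp-channel-df-error}. Your extra edge-case remarks are harmless but not needed, since by construction $\epsilon_{nkr}=1-\eta_{nkr}/Z_{nkr}\in[0,1)$.
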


\begin{proof}
Same as \zcref{cor:standard-channel-definetti-diamond}.
\end{proof}

\begin{remark}
\label{rmk:decay-regime}
Let \(k\sim n^\alpha\) and \(r\sim n^\beta\), with \(0<\beta<\alpha<1\). Then \(nr^2/k^3\sim n^{1+2\beta-3\alpha}\). Hence, whenever \(1+2\beta-3\alpha>0\), for large enough \(n\), the error decays exponentially. For example, taking \(k\sim n^{3/4}\) and \(r\sim n^{2/3}\) gives \(\epsilon_{nkr}\sim\exp\!\left(-n^{2/3}\ln n\right)\). However, the drawback of \zcref{thm:exp-channel-df} is that it does not allow us to keep most of the subsystems. Indeed, if \(k\sim n-o(n)\), then \(nr^2/k^3\leq n/k\to1\) as \(n\to\infty\), so that the exponent will eventually become positive, and the error bound will explode instead of decaying. We leave the question of finding an exponential de Finetti theorem for channels with exponential decay while discarding only a negligible number of systems open.
\end{remark}

\section*{Acknowledgements}
BB and ND are supported by the Engineering and Physical Sciences Research Council [Grant Ref: EP/Y028732/1]. LY is also supported by the Engineering and Physical Sciences Research Council. The authors thank Mario Berta, Lampros Gavalakis and Ioannis Kontoyiannis for helpful discussions.

\textbf{Statement on AI usage.} OpenAI’s GPT-5.6 Sol was used to assist with brainstorming, literature search, and exploring proof strategies during the development of this work, and was used to transcribe handwritten proofs into \LaTeX. GPT-6 Astra later helped improve the estimation of several error bounds in the appendices, and Claude Fable 5.1 was used to improve parts of the introduction. The organization and exposition of the results are due to the authors, who take full responsibility for the contents in the manuscript.
\bibliography{main}
\appendix
\addtocontents{toc}{\protect\hideappendixsubsections}
\renewcommand{\thesubsection}{\thesection.\arabic{subsection}}
\section{Proof of technical statements in~\zcref{sec:std-channel-df}}

\subsection{Proof of~\zcref{lem:fixed-marginal-haar-moment}}
\label{App:Proof:fixed-marginal-haar-moment}
\begin{proof}
First we argue that the inverse in \(K_{A^m}\) is well defined. Since \(d_A^{-m/2}|\theta\rangle_{AR}^{\otimes m}\) is a unit vector in
\(\operatorname{Sym}^{m}(AR)\), we have
\begin{equation}
P_{\mathrm{sym}}^{(m)}=d_A^{-m}\bigl(|\theta\rangle\!\langle\theta|_{AR}\bigr)^{\otimes m}+P_{\perp}
    \geq d_A^{-m}
    \bigl(|\theta\rangle\!\langle\theta|_{AR}\bigr)^{\otimes m},
\end{equation}
where \(P_{\perp}\geq0\) is the projection onto the orthogonal complement
of \(\operatorname{span}\{|\theta\rangle^{\otimes m}\}\) within the symmetric subspace. Taking the partial trace over \(R^m\) and using
\(\Tr_R|\theta\rangle\!\langle\theta|_{AR}=\id_A\) gives a strictly positive marginal \(\Tr_{R^m}P_{\mathrm{sym}}^{(m)}\geq \tau_{A^m}>0\), so that \(K_{A^m}\) is well defined.

Next we show that \(K_{A^m}\) commutes with permutations on \(A^{\otimes m}\). For \(\pi\in S_m\), let \(W_\pi^A\) and \(W_\pi^R\) denote the corresponding
unitary permutation operators on \(A^m\) and \(R^m\), respectively. Since
\(P_{\mathrm{sym}}^{(m)}\) is invariant under
\(W_\pi^A\otimes W_\pi^R\),
\begin{equation}
    W_\pi^A\bigl(\Tr_{R^m}P_{\mathrm{sym}}^{(m)}\bigr)(W_\pi^A)^\dagger
    =\Tr_{R^m}\!\left[
      (W_\pi^A\otimes W_\pi^R)P_{\mathrm{sym}}^{(m)}
      (W_\pi^A\otimes W_\pi^R)^\dagger
      \right]
    =\Tr_{R^m}P_{\mathrm{sym}}^{(m)}.
\end{equation}
Hence \(K_{A^m}\) commutes with every \(W_\pi^A\). Using
\(P_{\mathrm{sym}}^{(m)}
 =\frac{1}{m!}\sum_{\pi\in S_m}W_\pi^A\otimes W_\pi^R\), it follows that
\begin{equation}
\label{eq:comm-K}
    (K_{A^m}\otimes\id_{R^m})P_{\mathrm{sym}}^{(m)}
    =P_{\mathrm{sym}}^{(m)}(K_{A^m}\otimes\id_{R^m}).
\end{equation}

For brevity, set \(T_m\coloneq\int_{\mathsf U(R)}
    \bigl[(\id_A\otimes U_R)|\theta\rangle\!\langle\theta|_{AR}
    (\id_A\otimes U_R^\dagger)\bigr]^{\otimes m}\,\mathrm dU\). We next show that
\begin{equation}
\label{eq:comm-unitary}
    [T_m,V_A^{\otimes m}\otimes\id_{R^m}]=0
    \qquad\text{for every }V_A\in\mathsf U(A).
\end{equation}
Fix \(V_A\in\mathsf U(A)\), and extend \(V_A^{\mathsf T}\) acting on \(\operatorname{span}\{|i\rangle_R:1\leq i\leq d_A\}\), to a unitary
\(\widetilde V_R\in\mathsf U(R)\). Recall the transpose trick \((V_A\otimes\id_R)|\theta\rangle_{AR}=(\id_A\otimes\widetilde V_R)|\theta\rangle_{AR}\). Consequently,
\begin{equation}
\begin{aligned}
 &(V_A^{\otimes m}\otimes\id_{R^m})T_m
   ((V_A^\dagger)^{\otimes m}\otimes\id_{R^m}) \\
 &\quad=\int_{\mathsf U(R)}
   \bigl[(V_A\otimes U_R)|\theta\rangle\!\langle\theta|_{AR}
   (V_A^\dagger\otimes U_R^\dagger)\bigr]^{\otimes m}\,\mathrm dU \\
 &\quad=\int_{\mathsf U(R)}
   \bigl[(\id_A\otimes U_R\widetilde V_R)
   |\theta\rangle\!\langle\theta|_{AR}
   (\id_A\otimes\widetilde V_R^\dagger U_R^\dagger)\bigr]^{\otimes m}
   \,\mathrm dU
   =T_m,
\end{aligned}
\end{equation}
where the last equality follows from the right invariance of Haar measure. This proves \zcref{eq:comm-unitary}. By~\cite[Theorem~7.15]{watrous2018theory}, commutation with all tensor-power unitaries is equivalent to commutation with all tensor-power linear operators, we thus conclude
\begin{equation}
\label{eq:comm-all}
    [T_m,X_A^{\otimes m}\otimes\id_{R^m}]=0
    \qquad\text{for every }X_A\in\mathcal L(A).
\end{equation}

Let \(\phi_{AR}=|\phi\rangle\!\langle\phi|_{AR}\) be any pure state. Since
\(|\phi\rangle_{AR}\) and \((\sqrt{\phi_A}\otimes\id_R)|\theta\rangle_{AR}\) are both purifications of \(\phi_A\), there exists \(V_R\in\mathsf U(R)\) such that \(|\phi\rangle_{AR}=(\sqrt{\phi_A}\otimes V_R)|\theta\rangle_{AR}\). Thus
\begin{equation}
\label{eq:rotated-purification}
\begin{aligned}
 &\quad\int_{\mathsf U(R)}
 \bigl[(\id_A\otimes U_R)|\phi\rangle\!\langle\phi|_{AR}
 (\id_A\otimes U_R^\dagger)\bigr]^{\otimes m}\,\mathrm dU \\
 &=\left(\sqrt{\phi_A}\otimes\id_{R}\right)^{\otimes m}
 \int_{\mathsf U(R)}
 \bigl[(\id_A\otimes U_RV_R)|\theta\rangle\!\langle\theta|_{AR}
 (\id_A\otimes V_R^\dagger U_R^\dagger)\bigr]^{\otimes m}\,\mathrm dU
 \left(\sqrt{\phi_A}\otimes\id_{R}\right)^{\otimes m}\\
&=\left(\phi_A^{\otimes m}\otimes\id_{R^m}\right)T_m,
\end{aligned}
\end{equation}
where in the last step we have used the right invariance of Haar measure and the commutation relation \zcref{eq:comm-all}. Therefore, \zcref{eq:fixed-marginal-haar-moment} in the lemma is a consequence of \zcref{eq:constrained-Haar}. To prove the latter, take the partial trace of \(\int|\phi\rangle\!\langle\phi|_{AR}^{\otimes m}\,\mathrm d\phi=\frac{1}{d_m}P_{\mathrm{sym}}^{(m)}\), giving
\begin{equation}
    \int\phi_A^{\otimes m}\,\mathrm d\phi
    =\frac{1}{d_m}\Tr_{R^m}P_{\mathrm{sym}}^{(m)}
    =\frac{1}{d_m}K_{A^m}^{-1}.
\end{equation}
Integrating \zcref{eq:rotated-purification} on both sides over \(\mathrm d\phi\), and using \(U_R^{\otimes m}P_{\mathrm{sym}}^{(m)}(U^{\dagger}_R)^{\otimes m}=P_{\mathrm{sym}}^{(m)}\), yields
\begin{equation}
    \frac{1}{d_m}P_{\mathrm{sym}}^{(m)}
    =\frac{1}{d_m}(K_{A^m}^{-1}\otimes\id_{R^m})T_m.
\end{equation}
It follows that
\begin{equation}
    T_m=(K_{A^m}\otimes\id_{R^m})P_{\mathrm{sym}}^{(m)}
       =P_{\mathrm{sym}}^{(m)}(K_{A^m}\otimes\id_{R^m}),
\end{equation}
where the second equality follows from \zcref{eq:comm-K}. This proves \zcref{eq:constrained-Haar}, and hence \zcref{eq:fixed-marginal-haar-moment} in the lemma.
\end{proof}

\subsection{Proof of~\zcref{lem:fixed-marginal-measure-independence}}
\label{App:Proof:lem:fixed-marginal-measure-independence}
\begin{proof}
For any \(U\in\mathsf U(R)\), we have \(\Tr_R\!\left[(\id_A\otimes U_R)\phi_{AR}(\id_A\otimes U_R^\dagger)\right]=\phi_A\), so denote \(|\phi'\rangle=(\id_A\otimes U_R)|\phi\rangle\), we have \(\phi'_A=\phi_A\), then
\begin{equation}
\begin{aligned}
    \int f(\phi_A)\,\mathrm d\mu_n(\phi)
    &=
    d_n\int f(\phi_A)
    \Tr\!\left[\Psi^{(n)}|\phi\rangle\!\langle\phi|^{\otimes n}\right]\mathrm d\phi \\
    &=
    d_n\int f(\phi'_A)
    \Tr\!\left[\Psi^{(n)}|\phi'\rangle\!\langle\phi'|^{\otimes n}\right]\mathrm d\phi \\
    &=
    d_n\int f(\phi_A)
    \Tr\!\left[
        \Psi^{(n)}
        \bigl((\id_A\otimes U_R)\phi_{AR}
        (\id_A\otimes U_R^\dagger)\bigr)^{\otimes n}
    \right]\mathrm d\phi,
\end{aligned}
\end{equation}
where the second line follows from left invariance of Haar measure.
Integrating both sides over \(U\) with respect to the Haar measure on \(\mathsf U(R)\) gives
\begin{equation}
\begin{aligned}
    \int f(\phi_A)\,\mathrm d\mu_n(\phi)
    &=
    d_n\int_{\mathsf U(R)}\int f(\phi_A)
    \Tr\!\left[
        \Psi^{(n)}
        \bigl((\id_A\otimes U_R)\phi_{AR}
        (\id_A\otimes U_R^\dagger)\bigr)^{\otimes n}
    \right]\mathrm d\phi\,\mathrm dU \\
    &=
    d_n\int f(\phi_A)
    \Tr\!\left[
        \Psi^{(n)}
        \int_{\mathsf U(R)}
        \bigl((\id_A\otimes U_R)\phi_{AR}
        (\id_A\otimes U_R^\dagger)\bigr)^{\otimes n}
        \mathrm dU
    \right]\mathrm d\phi .
\end{aligned}
\end{equation}
By \zcref{lem:fixed-marginal-haar-moment}, this becomes
\begin{equation}
\begin{aligned}
    \int f(\phi_A)\,\mathrm d\mu_n(\phi)
    &=
    d_n\int f(\phi_A)
    \Tr\!\left[
        \Psi^{(n)}
        \bigl(\phi_A^{\otimes n}K_{A^n}\otimes\id_{R^n}\bigr)
        P_{\mathrm{sym}}^{(n)}
    \right]\mathrm d\phi \\
    &=
    d_n\int f(\phi_A)
    \Tr\!\left[
        \Psi^{(n)}
        \bigl(\phi_A^{\otimes n}K_{A^n}\otimes\id_{R^n}\bigr)
    \right]\mathrm d\phi \\
    &=
    \frac{d_n}{d_A^n}
    \int f(\phi_A)
    \Tr\!\left[\phi_A^{\otimes n}K_{A^n}\right]\mathrm d\phi .
\end{aligned}
\end{equation}
In the second line, we used
\(P_{\mathrm{sym}}^{(n)}\Psi^{(n)}=\Psi^{(n)}\), and in the third line
we used \(\Psi_{A^n}^{(n)}=\tau_A^{\otimes n}\).

The key observation is that the resulting expression is now independent of the particular \(|\Psi^{(n)}\rangle\). We may therefore replace \(\Psi^{(n)}\) by
\(\psi^{\otimes n}\), where \(|\psi\rangle_{AR}\) can be any pure state satisfying \(\psi_A=\tau_A\). It follows that
\begin{equation}
    \int f(\phi_A)\,\mathrm d\mu_n(\phi)
    =
    d_n\int f(\phi_A)
    \langle\phi|^{\otimes n}
    \psi^{\otimes n}
    |\phi\rangle^{\otimes n}\,\mathrm d\phi
    =
    d_n\int f(\phi_A)
    |\langle\psi|\phi\rangle|^{2n}\mathrm d\phi ,
\end{equation}
which proves the claim.
\end{proof}

\subsection{Proof of \zcref{eq:Z-estimate}}
\label{App:Proof:eq-Z-estimate}
We would like to bound \(\mathbb E\exp\left(2d_Ak\sqrt U\right)\), with \(U\sim\operatorname{Beta}(D-1,n+1)\). 

Expanding the exponential gives
\begin{equation}
    \mathbb E\exp\left(2d_Ak\sqrt U\right)=\sum_{j=0}^{\infty}\frac{(2d_Ak)^j}{j!}\mathbb E U^{j/2}=\sum_{j=0}^{\infty}\frac{(2d_Ak)^j}{j!}\frac{\mathrm B(D-1+j/2,n+1)}{\mathrm B(D-1,n+1)}.
\end{equation}
The ratio of two beta functions can be expressed in terms of ratio of gamma functions. Gautschi's inequality implies the bound \(\Gamma(n+D)/\Gamma(n+D+s)\leq(n+D-1)^{-s}\) for \(s\geq0\). Set \(t\coloneq 2d_Ak/\sqrt{n+D-1}\), we obtain
\begin{equation}
\begin{aligned}
\mathbb E\exp\left(2d_Ak\sqrt U\right)
&=\sum_{j=0}^{\infty}\frac{(2d_Ak)^j}{j!}
  \frac{\Gamma(D-1+j/2)\Gamma(n+D)}
       {\Gamma(D-1)\Gamma(n+D+j/2)} \\
&\leq \frac{1}{\Gamma(D-1)}
  \sum_{j=0}^{\infty}\frac{t^j}{j!}\Gamma(D-1+j/2)
=\frac{1}{\Gamma(D-1)}\int_0^\infty x^{D-2}e^{-x+t\sqrt{x}}\,\mathrm dx,
\end{aligned}
\end{equation}
where the last step follows from the definition of gamma function. For any \(0<v<1\), the inequality \(t\sqrt{x}\leq vx+t^2/(4v)\) therefore gives
\begin{equation}
\mathbb E\exp\left(2d_Ak\sqrt U\right)
\leq \frac{e^{t^2/(4v)}}{\Gamma(D-1)}\int_0^\infty x^{D-2}e^{-(1-v)x}\,\mathrm dx =e^{t^2/(4v)}(1-v)^{-(D-1)}.
\end{equation}
Choosing \(v\coloneq t/(2\sqrt{D-1}+t)\) and using
\(\log(1+x)\leq x\), we find
\begin{equation}
\begin{aligned}
\log\mathbb E\exp\left(2d_Ak\sqrt U\right)
&\leq \frac{t\sqrt{D-1}}{2}+\frac{t^2}{4}
 +(D-1)\log\left(1+\frac{t}{2\sqrt{D-1}}\right)\\
&\leq t\sqrt{D-1}+\frac{t^2}{4}=2d_Ak\sqrt{\frac{D-1}{n+D-1}}
+\frac{d_A^2k^2}{n+D-1}.
\end{aligned}
\end{equation}
\zcref{eq:Z-estimate} is proved by taking the exponential,
\begin{equation}
\mathbb E\exp\left(2d_Ak\sqrt U\right)\leq
\exp\left(
2d_Ak\sqrt{\frac{D-1}{n+D-1}}
+\frac{d_A^2k^2}{n+D-1}
\right).
\end{equation}

\section{Proof of technical statements in~\zcref{sec:exp-channel-df}}

\subsection{Proof of~\zcref{lem:ank-Lj-bounds}}
\label{App:Proof:lem:ank-Lj-bounds}
\begin{proof}
We first prove the bound on \(a_{nk}\). Recall that
\begin{equation}
    a_{nk}\coloneq
    \int_{\mathcal P_{ABE}(\tau_A)}
        F_{\hat\phi}^{\,n-k}\,\mathrm d\hat\phi                                 
    =
    d_A^{-2(n-k)}\Tr K_{A^{n-k}}.
\end{equation}
The upper bound follows from \(0\leq F_{\hat\phi}\leq 1\). For the lower bound, let \(\{\lambda_1,\ldots,\lambda_{d_A^{n-k}}\}\) be the strictly positive eigenvalues of \(K_{A^{n-k}}^{-1}\). By its definition, \(K_{A^{n-k}}^{-1}=\Tr_{(BE)^{n-k}}P_{\mathrm{sym}}^{(n-k)}\), so
\begin{equation}
    \sum_{\alpha=1}^{d_A^{n-k}}\lambda_\alpha
    =
    \Tr K_{A^{n-k}}^{-1}
    =
    \Tr P_{\mathrm{sym}}^{(n-k)}
    =
    d_{n-k}.
\end{equation}
The Cauchy--Schwarz inequality therefore gives
\begin{equation}
    d_A^{2(n-k)}
    =
    \left(
        \sum_{\alpha=1}^{d_A^{n-k}}1
    \right)^2
    \leq
    \left(
        \sum_{\alpha=1}^{d_A^{n-k}}\lambda_\alpha
    \right)
    \left(
        \sum_{\alpha=1}^{d_A^{n-k}}\lambda_\alpha^{-1}
    \right)
    =
    d_{n-k}\Tr K_{A^{n-k}},
\end{equation}
proving \zcref{eq:ank-bounds}.

We next prove the operator-norm bound. Recall that
\begin{equation}
    L_j
    \coloneq
    \frac{d_A^k}{a_{nk}}
    \int_{\mathcal P_{ABE}(\tau_A)}
        F_{\hat\phi}^{\,n-k}
        \Tr_{(BE)^k}
        \left[
            \Pi_{\hat\phi,j}^{(k)}
            \hat\psi^{\otimes k}
        \right]
        \mathrm d\hat\phi,
\end{equation}
and the definition does not depend on the choice of \(\hat\psi\in\mathcal P_{ABE}(\tau_A)\). For
\(\hat\phi\in\mathcal P_{ABE}(\tau_A)\), let \(z_{\hat\phi}\coloneq\langle\hat\phi|\hat\psi\rangle\) so that \(F_{\hat\phi}=|z_{\hat\phi}|^2\). By the definition of \(\Pi_{\hat\phi,j}^{(k)}\),
\begin{equation}
    \Tr_{(BE)^k}
    \left[
        \Pi_{\hat\phi,j}^{(k)}
        \hat\psi^{\otimes k}
    \right]
    =
    \sum_{\substack{S\subseteq[k]\\ |S|=j}}
        \left(z_{\hat\phi}
    \Tr_{BE}|\hat\phi\rangle\!\langle\hat\psi|\right)^{\otimes S^c}
        \otimes
        \left(\Tr_{BE}
    \left[
        \bigl(
            |\hat\psi\rangle
            -z_{\hat\phi}|\hat\phi\rangle
        \bigr)
        \langle\hat\psi|
    \right]\right)^{\otimes S}.
\end{equation}
We estimate the operator norm of each term. For arbitrary vectors \(|u\rangle,|v\rangle\in ABE\) and unit vectors \(|a\rangle,|b\rangle\in A\), Cauchy--Schwarz gives
\begin{equation}
\label{eq:partial-trace-cs}
    \left|
        \left\langle a\middle|
        \Tr_{BE}|u\rangle\!\langle v|
        \middle|b\right\rangle
    \right|^2
    =
    \left|
    \langle v|
        (|b\rangle\otimes\id)(\langle a|\otimes\id)
    |u\rangle
    \right|^2
    \leq
    \langle u|
        (|a\rangle\!\langle a|\otimes\id)
    |u\rangle\cdot
    \langle v|
        (|b\rangle\!\langle b|\otimes\id)
    |v\rangle.
\end{equation}
Applying \zcref{eq:partial-trace-cs} with
\(|u\rangle=|\hat\phi\rangle\) and
\(|v\rangle=|\hat\psi\rangle\), we obtain
\begin{align}
    \left|
        \left\langle a\middle|
        \Tr_{BE}|\hat\phi\rangle\!\langle\hat\psi|
        \middle|b
        \right\rangle
    \right|^2
    &\leq
    \langle a|\hat\phi_A|a\rangle
    \langle b|\hat\psi_A|b\rangle
    =
    \frac{1}{d_A^2}.
\end{align}
Consequently, by recalling that \(\lVert M\rVert_\infty
=\sup_{\lVert a\rVert=\lVert b\rVert=1}
|\langle a|M|b\rangle|\), we obtain
\begin{equation}
    \lVert z_{\hat\phi}
    \Tr_{BE}|\hat\phi\rangle\!\langle\hat\psi|\rVert_\infty
    \leq
    \frac{|z_{\hat\phi}|}{d_A}
    =
    \frac{\sqrt{F_{\hat\phi}}}{d_A}.
\end{equation}
Applying \zcref{eq:partial-trace-cs} with
\(|u\rangle=|\hat\psi\rangle-z_{\hat\phi}|\hat\phi\rangle\) and
\(|v\rangle=|\hat\psi\rangle\) gives
\begin{equation}
    \left|
        \left\langle a\middle|
        \Tr_{BE}
    \left[
        \bigl(
            |\hat\psi\rangle
            -z_{\hat\phi}|\hat\phi\rangle
        \bigr)
        \langle\hat\psi|
    \right]
        \middle|b
        \right\rangle
    \right|^2
    \leq
    \frac{\lVert|\hat\psi\rangle-z_{\hat\phi}|\hat\phi\rangle\rVert^2}{d_A}
    =
    \frac{1-F_{\hat\phi}}{d_A}.
\end{equation}
Consequently,
\begin{equation}
    \left\Vert \Tr_{BE}
    \left[
        \bigl(
            |\hat\psi\rangle
            -z_{\hat\phi}|\hat\phi\rangle
        \bigr)
        \langle\hat\psi|
    \right]\right\rVert_\infty
    \leq\sqrt{\frac{1-F_{\hat\phi}}{d_A}}
\end{equation}
Putting together,
\begin{equation}
    \left\lVert
        \Tr_{(BE)^k}
        \left[
            \Pi_{\hat\phi,j}^{(k)}
            \hat\psi^{\otimes k}
        \right]
    \right\rVert_\infty
    \leq
    \binom{k}{j}
    d_A^{-k+j/2}
    F_{\hat\phi}^{(k-j)/2}
    (1-F_{\hat\phi})^{j/2}.
\end{equation}
Substitution into the definition of \(L_j\) yields
\begin{equation}
\begin{aligned}
    \lVert L_j\rVert_\infty
    &\leq
    \frac{\binom{k}{j}d_A^{j/2}}{a_{nk}}
    \int_{\mathcal P_{ABE}(\tau_A)}
        F_{\hat\phi}^{\,n-(k+j)/2}
        (1-F_{\hat\phi})^{j/2}
        \,\mathrm d\hat\phi
    \\
    &\leq
    \frac{\binom{k}{j}d_A^{j/2}}{a_{nk}}
    \left(
        \int_{\mathcal P_{ABE}(\tau_A)}
            F_{\hat\phi}^{\,n-k}
            \,\mathrm d\hat\phi
    \right)^{1/2}
    \left(
        \int_{\mathcal P_{ABE}(\tau_A)}
            F_{\hat\phi}^{\,n-j}
            (1-F_{\hat\phi})^j
            \,\mathrm d\hat\phi
    \right)^{1/2},
\end{aligned}
\end{equation}
where the second inequality is Cauchy--Schwarz. The first integral
is by definition \(a_{nk}\), proving claimed bound.
\end{proof}

\subsection{Proof of \zcref{eq:cross-term-domination}}
\label{App:Proof:cross-term-domination}
\zcref{lem:constrained-haar-identity} states that
\begin{equation}
    \int_{\mathcal P_{ABE}(\tau_A)}
        \hat\phi^{\otimes k}\,\mathrm d\hat\phi
    =
    d_A^{-k}
    \bigl(K_{A^k}\otimes\id_{(BE)^k}\bigr)
    P_{\mathrm{sym}}^{(k)}.
\end{equation}
It follows that every positive operator \(R\) supported on
\(\operatorname{Sym}^k(ABE)\) satisfies
\begin{equation}
\label{eq:R-domination}
\begin{aligned}
R
&=
\bigl(K_{A^k}^{1/2}\otimes\id_{(BE)^k}\bigr)
\bigl(K_{A^k}^{-1/2}\otimes\id_{(BE)^k}\bigr)
R
\bigl(K_{A^k}^{-1/2}\otimes\id_{(BE)^k}\bigr)
\bigl(K_{A^k}^{1/2}\otimes\id_{(BE)^k}\bigr) \\
&\leq
\bigl(K_{A^k}^{1/2}\otimes\id_{(BE)^k}\bigr)
\left\lVert \bigl(K_{A^k}^{-1/2}\otimes\id_{(BE)^k}\bigr)
R
\bigl(K_{A^k}^{-1/2}\otimes\id_{(BE)^k}\bigr) \right\rVert_{\infty}P_{\mathrm{sym}}^{(k)}
\bigl(K_{A^k}^{1/2}\otimes\id_{(BE)^k}\bigr) \\
&\leq
\Tr\!\left[
    \bigl(K_{A^k}^{-1}\otimes\id_{(BE)^k}\bigr)R
\right]
\bigl(K_{A^k}\otimes\id_{(BE)^k}\bigr)
    P_{\mathrm{sym}}^{(k)}\\
&=d_A^k \Tr\!\left[
    \bigl(K_{A^k}^{-1}\otimes\id_{(BE)^k}\bigr)R
\right] \int_{\mathcal P_{ABE}(\tau_A)}
        \hat\phi^{\otimes k}\,\mathrm d\hat\phi,
\end{aligned}
\end{equation}
where we have repeatedly used the fact that \(K_{A^k}\otimes\id_{(BE)^k}\) commutes with \(P_{\mathrm{sym}}^{(k)}\), and that any Hermitian operator supported on \(\operatorname{Sym}^k(ABE)\) can be upper bounded by its operator norm times the symmetric subspace projector, and that the operator norm of a positive operator is upper bounded by its trace.

The operators \(L_j\), and hence \(L_{>r}\), commute with permutations
on \(A^k\). To see this, recall that in the definition \zcref{eq:Lj-definition}, both \(\Pi_{\hat\phi,j}^{(k)}\) and \(\hat\psi^{\otimes k}\) are invariant under permutations of \((ABE)^k\). Hence, for every
\(\pi\in S_k\),
\begin{equation}
\begin{aligned}
U_{A^k}^{\pi}L_j\bigl(U_{A^k}^{\pi}\bigr)^\dagger
&=
\frac{d_A^k}{a_{nk}}
\int_{\mathcal P_{ABE}(\tau_A)}
F_{\hat\phi}^{\,n-k}
\Tr_{(BE)^k}\!\left[
    U_{(ABE)^k}^{\pi}
    \Pi_{\hat\phi,j}^{(k)}
    \hat\psi^{\otimes k}
    \bigl(U_{(ABE)^k}^{\pi}\bigr)^\dagger
\right]
\mathrm d\hat\phi
\\
&=
\frac{d_A^k}{a_{nk}}
\int_{\mathcal P_{ABE}(\tau_A)}
F_{\hat\phi}^{\,n-k}
\Tr_{(BE)^k}\!\left[
    \Pi_{\hat\phi,j}^{(k)}
    \hat\psi^{\otimes k}
\right]
\mathrm d\hat\phi
=L_j.
\end{aligned}
\end{equation}

Notice that positivity of
\begin{equation}
    \left(
        \sqrt{\|L_{>r}\|_\infty}\,\id_{(ABE)^k}
        -
        \frac{L_{>r}\otimes\id_{(BE)^k}}
             {\sqrt{\|L_{>r}\|_\infty}}
    \right)
    \Psi^{(k)}
    \left(
        \sqrt{\|L_{>r}\|_\infty}\,\id_{(ABE)^k}
        -
        \frac{L_{>r}\otimes\id_{(BE)^k}}
             {\sqrt{\|L_{>r}\|_\infty}}
    \right)
\end{equation}
implies
\begin{equation}
\begin{aligned}
    &(L_{>r}\otimes\id_{(BE)^k})\Psi^{(k)}
    +
    \Psi^{(k)}(L_{>r}\otimes\id_{(BE)^k})
    \\
    &\quad\leq
    \|L_{>r}\|_\infty\Psi^{(k)}
    +
    \frac{1}{\|L_{>r}\|_\infty}
    (L_{>r}\otimes\id_{(BE)^k})
    \Psi^{(k)}
    (L_{>r}\otimes\id_{(BE)^k}).
\end{aligned}
\end{equation}
The operator on the right is positive and supported on
\(\operatorname{Sym}^k(ABE)\) due to the above commutation. Take \(R\) to be the right-hand side operator and apply \zcref{eq:R-domination},
\begin{equation}
\begin{aligned}
&\quad d_A^k \Tr\!\left[
    \bigl(K_{A^k}^{-1}\otimes\id_{(BE)^k}\bigr)R
\right]\\
    &=d_A^k
    \Tr\!\left[
        \bigl(K_{A^k}^{-1}\otimes\id_{(BE)^k}\bigr)
        \left(
            \|L_{>r}\|_\infty\Psi^{(k)}
            +
            \frac{
                (L_{>r}\otimes\id_{(BE)^k})
                \Psi^{(k)}
                (L_{>r}\otimes\id_{(BE)^k})
            }{\|L_{>r}\|_\infty}
        \right)
    \right]
    \\
    &=
    \|L_{>r}\|_\infty\Tr K_{A^k}^{-1}
    +
    \frac{
        \Tr\!\left[K_{A^k}^{-1}L_{>r}^2\right]
    }{\|L_{>r}\|_\infty}
    \leq
    2d_k\|L_{>r}\|_\infty,
\end{aligned}
\end{equation}
where the second equality follows from \(\Psi_{A^k}^{(k)}=\tau_A^{\otimes k}\), and the inequality follows from \(L_{>r}\leq\|L_{>r}\|_\infty\id_{A^k}\) and \(\Tr K_{A^k}^{-1}=\Tr P_{\mathrm{sym}}^{(k)}=d_k\). Hence,
\begin{equation}
    (L_{>r}\otimes\id_{(BE)^k})\Psi^{(k)}
    +
    \Psi^{(k)}(L_{>r}\otimes\id_{(BE)^k})
    \leq
    R
    \leq
    2d_k\|L_{>r}\|_\infty \int_{\mathcal P_{ABE}(\tau_A)}
        \hat\phi^{\otimes k}\,\mathrm d\hat\phi.
\end{equation}
Taking the partial trace over \(E^k\) therefore gives \zcref{eq:cross-term-domination}.

\subsection{Proof of \zcref{eq:lambda-bound}}
\label{App:Proof:lambda-bound}

All \(F_{\boldsymbol\alpha}\) below are understood to be tensored with
\(\id_{(BE)^k}\). The operator space \(\mathcal L((ABE)^k)\) is equipped with the usual Hilbert--Schmidt inner product:
\begin{equation}
    \langle X,Y\rangle_{\mathrm{HS}}
    \coloneq
    \Tr(X^\dagger Y),
    \qquad
    \|X\|_2^2
    \coloneq
    \langle X,X\rangle_{\mathrm{HS}}.
\end{equation}
Since \(\omega_{A^k}^{(k)}=\tau_A^{\otimes k}\), the definition of
\(h_{\boldsymbol\alpha}\) in \zcref{eq:def-halpha-lambda}, together with the operator identity we have used several times (e.g., \zcref{eq:op-identity}), gives
\begin{equation}
\begin{aligned}
d_A^k h_{\boldsymbol\alpha}
={}&
\Tr\!\left[
    F_{\boldsymbol\alpha}^{\dagger}
    \left(
        P_{\hat\phi}^{(k,r)}
        \omega^{(k)}
        P_{\hat\phi}^{(k,r)}
        -
        \omega^{(k)}
    \right)
\right]                                                               \\
={}&
-\Tr\!\left[
    F_{\boldsymbol\alpha}^{\dagger}
    \bigl(\id-P_{\hat\phi}^{(k,r)}\bigr)
    \omega^{(k)}
\right]                                                              
-
\Tr\!\left[
    F_{\boldsymbol\alpha}^{\dagger}
    \omega^{(k)}
    \bigl(\id-P_{\hat\phi}^{(k,r)}\bigr)
\right]                                                               \\
&+
\Tr\!\left[
    F_{\boldsymbol\alpha}^{\dagger}
    \bigl(\id-P_{\hat\phi}^{(k,r)}\bigr)
    \omega^{(k)}
    \bigl(\id-P_{\hat\phi}^{(k,r)}\bigr)
\right].
\end{aligned}
\end{equation}
The discarded mass can be written as
\begin{equation}
\delta_{kr}=\Tr\!\left[
    \bigl(\id-P_{\hat\phi}^{(k,r)}\bigr)
    \omega^{(k)}
\right]=\Tr\!\left[
    \sqrt{\omega^{(k)}}
    \bigl(\id-P_{\hat\phi}^{(k,r)}\bigr)
    \sqrt{\omega^{(k)}}
\right]=\left\|
    \bigl(\id-P_{\hat\phi}^{(k,r)}\bigr)
    \sqrt{\omega^{(k)}}
\right\|_2^2.
\end{equation}
The first two terms in the expansion of
\(d_A^kh_{\boldsymbol\alpha}\) are respectively
\begin{equation}
\begin{aligned}
&\Tr\!\left[
    F_{\boldsymbol\alpha}^{\dagger}
    \bigl(\id-P_{\hat\phi}^{(k,r)}\bigr)
    \omega^{(k)}
\right]                                                               
=
\left\langle
    F_{\boldsymbol\alpha}\sqrt{\omega^{(k)}},
    \bigl(\id-P_{\hat\phi}^{(k,r)}\bigr)
    \sqrt{\omega^{(k)}}
\right\rangle_{\mathrm{HS}},
                                                                        \\
&\Tr\!\left[
    F_{\boldsymbol\alpha}^{\dagger}
    \omega^{(k)}
    \bigl(\id-P_{\hat\phi}^{(k,r)}\bigr)
\right]                                                              
 =
\left\langle
    \bigl(\id-P_{\hat\phi}^{(k,r)}\bigr)
    \sqrt{\omega^{(k)}},
    F_{\boldsymbol\alpha}^{\dagger}\sqrt{\omega^{(k)}}
\right\rangle_{\mathrm{HS}}.
\end{aligned}
\end{equation}
Using again the Choi constraint \(\omega_{A^k}^{(k)}=\tau_A^{\otimes k}\),
\begin{equation}
\begin{aligned}
\left\langle
    F_{\boldsymbol\alpha}\sqrt{\omega^{(k)}},
    F_{\boldsymbol\beta}\sqrt{\omega^{(k)}}
\right\rangle_{\mathrm{HS}}
&=
\Tr\!\left[
    \omega^{(k)}
    F_{\boldsymbol\alpha}^{\dagger}F_{\boldsymbol\beta}
\right]                                                               
=
\Tr\!\left[
    \tau_A^{\otimes k}
    F_{\boldsymbol\alpha}^{\dagger}F_{\boldsymbol\beta}
\right]
=
\delta_{\boldsymbol\alpha,\boldsymbol\beta},
                                                                        \\
\left\langle
    F_{\boldsymbol\alpha}^{\dagger}\sqrt{\omega^{(k)}},
    F_{\boldsymbol\beta}^{\dagger}\sqrt{\omega^{(k)}}
\right\rangle_{\mathrm{HS}}
&=
\Tr\!\left[
    \omega^{(k)}
    F_{\boldsymbol\alpha}F_{\boldsymbol\beta}^{\dagger}
\right]                                                               
=
\Tr\!\left[
    \tau_A^{\otimes k}
    F_{\boldsymbol\alpha}F_{\boldsymbol\beta}^{\dagger}
\right]
=
\delta_{\boldsymbol\alpha,\boldsymbol\beta}.
\end{aligned}
\end{equation}
Thus both families \(\{F_{\boldsymbol\alpha}\sqrt{\omega^{(k)}}\}_{\boldsymbol\alpha},\{F_{\boldsymbol\alpha}^{\dagger}\sqrt{\omega^{(k)}}\}_{\boldsymbol\alpha}\) are orthonormal sets of \(\mathcal L((ABE)^k)\) in the Hilbert--Schmidt sense, and the first two terms in the expansion of
\(d_A^kh_{\boldsymbol\alpha}\) can be seen as coefficients of the operator \(\bigl(\id-P_{\hat\phi}^{(k,r)}\bigr)\sqrt{\omega^{(k)}}\) in these two orthonormal families.

For any orthonormal family \(\{|e_i\rangle\}_i\) and any vector
\(|v\rangle\), the sum of the squared overlaps
\(\sum_i|\langle e_i|v\rangle|^2\) is at most \(\|v\|^2\).
Applying this to
\(\bigl(\id-P_{\hat\phi}^{(k,r)}\bigr)\sqrt{\omega^{(k)}}\)
therefore gives
\begin{equation}
\label{eq:first-two-terms}
\begin{aligned}
&\sum_{\substack{\boldsymbol\alpha:
    \operatorname{wt}(\boldsymbol\alpha)\leq\min\{2r,k\}}}
\left|
    \Tr\!\left[
        F_{\boldsymbol\alpha}^{\dagger}
        \bigl(\id-P_{\hat\phi}^{(k,r)}\bigr)
        \omega^{(k)}
    \right]
\right|^2
\leq \left\|
    \bigl(\id-P_{\hat\phi}^{(k,r)}\bigr)
    \sqrt{\omega^{(k)}}
\right\|_2^2=
\delta_{kr},
                                                                        \\
&\sum_{\substack{\boldsymbol\alpha:
    \operatorname{wt}(\boldsymbol\alpha)\leq\min\{2r,k\}}}
\left|
    \Tr\!\left[
        F_{\boldsymbol\alpha}^{\dagger}
        \omega^{(k)}
        \bigl(\id-P_{\hat\phi}^{(k,r)}\bigr)
    \right]
\right|^2
\leq \left\|
    \bigl(\id-P_{\hat\phi}^{(k,r)}\bigr)
    \sqrt{\omega^{(k)}}
\right\|_2^2=
\delta_{kr}.
\end{aligned}
\end{equation}
For the third term in the expansion, Hölder’s inequality implies
\begin{equation}
\label{eq:third-term}
\left|
\Tr\!\left[
    F_{\boldsymbol\alpha}^{\dagger}
    \bigl(\id-P_{\hat\phi}^{(k,r)}\bigr)
    \omega^{(k)}
    \bigl(\id-P_{\hat\phi}^{(k,r)}\bigr)
\right]
\right|                                                             
\leq
\Tr\!\left[
    \bigl(\id-P_{\hat\phi}^{(k,r)}\bigr)
    \omega^{(k)}
    \bigl(\id-P_{\hat\phi}^{(k,r)}\bigr)
\right]
=
\delta_{kr}.
\end{equation}

Finally, there are
\begin{equation}
    N_{kr}\coloneq\sum_{j=0}^{\min\{2r,k\}}
        \binom{k}{j}(d_A^2-1)^j
\end{equation}
terms in the sum with weight at most \(\min\{2r,k\}\).
Consequently, Cauchy--Schwarz inequality, together with \zcref{eq:first-two-terms} and \zcref{eq:third-term} implies
\begin{equation}
\lambda
=
\sum_{\substack{\boldsymbol\alpha:
    \operatorname{wt}(\boldsymbol\alpha)\leq\min\{2r,k\}}}
d_A^k|h_{\boldsymbol\alpha}|                                             
\leq
2\sqrt{
    \delta_{kr}
    N_{kr}
}
+
\delta_{kr}
N_{kr},
\end{equation}
proving \zcref{eq:lambda-bound}.

\subsection{Proof of \zcref{eq:eta-exponential-bound}}
\label{App:Proof:eta-exponential-bound}
By the triangle inequality and the bound on \(L_j\) from \zcref{lem:ank-Lj-bounds},
\begin{equation}
    \lVert L_{>r}\rVert_\infty
    \leq
    \sum_{j=r+1}^{k}\lVert L_j\rVert_\infty                                     
\leq
    \frac{1}{\sqrt{a_{nk}}}
    \sum_{j=r+1}^{k}
        \frac{\binom{k}{j}d_A^{j/2}}{\sqrt{\binom{n}{j}}}
        \left(
            \binom{n}{j}
            \int_{\mathcal P_{ABE}(\tau_A)}
                F_{\hat\phi}^{\,n-j}
                (1-F_{\hat\phi})^j
                \,\mathrm d\hat\phi
        \right)^{1/2}.
\end{equation}
Applying the Cauchy--Schwarz inequality to the sum over \(j\) gives
\begin{equation}
    \lVert L_{>r}\rVert_\infty
    \leq
    \frac{1}{\sqrt{a_{nk}}}
    \left(
        \sum_{j=r+1}^{k}
        \frac{\binom{k}{j}^2d_A^j}{\binom{n}{j}}
    \right)^{1/2}                                                        
    \left(
        \sum_{j=r+1}^{k}
        \binom{n}{j}
        \int_{\mathcal P_{ABE}(\tau_A)}
            F_{\hat\phi}^{\,n-j}
            (1-F_{\hat\phi})^j
            \,\mathrm d\hat\phi
    \right)^{1/2}.
\label{eq:L-tail-cauchy-schwarz}
\end{equation}
The second factor is at most one by relaxing the sum from \(\sum_{j=r+1}^{k}\) to \(\sum_{j=0}^{n}\) and by using the binomial theorem. Using also \(a_{nk}\geq d_{n-k}^{-1}\) from \zcref{lem:ank-Lj-bounds}, we therefore obtain
\begin{equation}
    \lVert L_{>r}\rVert_\infty
    \leq
    \left(
        d_{n-k}
        \sum_{j=r+1}^{k}
        \frac{\binom{k}{j}^2d_A^j}{\binom{n}{j}}
    \right)^{1/2}.
    \label{eq:L-tail-explicit-bound}
\end{equation}
Consequently,
\begin{equation}
    1-\eta_{nkr}
    =
    \frac{
        2d_k\lVert L_{>r}\rVert_\infty
    }{
        1+2d_k\lVert L_{>r}\rVert_\infty
    }                                                                  
    \leq
    2d_k
    \left(
        d_{n-k}
        \sum_{j=r+1}^{k}
        \frac{\binom{k}{j}^2d_A^j}{\binom{n}{j}}
    \right)^{1/2}.
\label{eq:eta-explicit-bound}
\end{equation}
For \(j\geq r+1\),
\begin{equation}
    \frac{\binom{k}{j}^2d_A^j}{\binom{n}{j}}
    =
    \binom{k}{j}d_A^j
    \frac{\binom{k}{j}}{\binom{n}{j}}                                      
    \leq
    \left(\frac{e k}{j}\right)^j
    d_A^j
    \left(\frac{k}{n}\right)^j                                             
    \leq
    \left(
        \frac{e d_Ak^2}{n(r+1)}
    \right)^j.
\end{equation}
Suppose first that \(e d_Ak^2\leq n(r+1)\). Then
\begin{equation}
    \sum_{j=r+1}^{k}
        \frac{\binom{k}{j}^2d_A^j}{\binom{n}{j}}
    \leq
    (k-r)
    \left(
        \frac{e d_Ak^2}{n(r+1)}
    \right)^{r+1}.
\end{equation}
Substituting this into the preceding estimate gives
\begin{equation}
    1-\eta_{nkr}
    \leq
    2d_k\sqrt{d_{n-k}(k-r)}
    \exp\!\left[
        -\frac{r+1}{2}
        \ln\!\left(
            \frac{n(r+1)}{e d_Ak^2}
        \right)
    \right].
    \label{eq:eta-exponential-bound-app}
\end{equation}
If \(e d_Ak^2>n(r+1)\) and \(r<k\), the right-hand side of
\zcref{eq:eta-exponential-bound-app} is at least \(1\), whereas
\(1-\eta_{nkr}\leq1\). If \(r=k\), then \(L_{>k}=0\), so that
\(\eta_{n,k,k}=1\), and both sides of
\zcref{eq:eta-exponential-bound-app} vanish. Therefore,
\zcref{eq:eta-exponential-bound-app} holds for every
\(0\leq r\leq k\leq n\), without any additional assumption, proving \zcref{eq:eta-exponential-bound}.

\subsection{Proof of \zcref{eq:Nkr-exponential-bound}}
\label{App:Proof:Nkr-exponential-bound}
Recall that
\begin{equation}
    N_{kr}
    =
    \sum_{j=0}^{\min\{2r,k\}}
        \binom{k}{j}(d_A^2-1)^j.
\end{equation}
If \(0<2r\leq k\), then
\begin{equation}
\label{eq:Nkr-binomial-bound}
    N_{kr}
    \leq
    d_A^{4r}\sum_{j=0}^{2r}\binom{k}{j}                                  
    \leq
    d_A^{4r}
    \left(\frac{e k}{2r}\right)^{2r}
    =
    \left(
        \frac{e k d_A^2}{2r}
    \right)^{2r}.
\end{equation}
If \(2r>k\), then
\begin{equation}
    N_{kr}
    =
    \sum_{j=0}^{k}\binom{k}{j}(d_A^2-1)^j
    =
    d_A^{2k}.
\end{equation}
Since \(r\leq k\), we have \(e k/(2r)\geq e/2>1\) and \(4r>2k\), this again implies \zcref{eq:Nkr-binomial-bound}.

Furthermore, since \(e k d_A^2/(2r)>1\) always holds, we have
\begin{equation}
\begin{aligned}
    2\sqrt{d_{n-k}N_{kr}}
    \left(\frac{k}{n}\right)^{(r+1)/2}
    &\leq
    2\sqrt{d_{n-k}}
    \left(
        \frac{e k d_A^2}{2r}
    \right)^{r+1}
    \left(\frac{k}{n}\right)^{(r+1)/2}                                  \\
    &=
    2\sqrt{d_{n-k}}
    \exp\!\left[
        -\frac{r+1}{2}
        \ln\!\left(
            \frac{4nr^2}{e^2d_A^4k^3}
        \right)
    \right]                                                             \\
    &\leq
    2\sqrt{d_{n-k}}
    \exp\!\left[
        -\frac{r+1}{2}
        \ln\!\left(
            \frac{nr^2}{2d_A^4k^3}
        \right)
    \right],
\end{aligned}
\label{eq:completion-term-exponential-bound}
\end{equation}
where the last inequality follows from \(4/e^2>1/2\), proving \zcref{eq:Nkr-exponential-bound}.
\end{document}